\newtheorem{Defn}{Definition}
\newtheorem{lem}{Lemma}
\newtheorem{thm}{Theorem}
\begin{document}
	\title{Accelerating Stable Matching between Workers and Spatial-Temporal Tasks for Dynamic MCS: \\A Stagewise Service Trading Approach}
	
\author{Houyi Qi, Minghui Liwang, \IEEEmembership{Senior Member}, \IEEEmembership{IEEE}, Xianbin Wang, \IEEEmembership{Fellow}, \IEEEmembership{IEEE}, Liqun Fu, \IEEEmembership{Senior Member}, \IEEEmembership{IEEE}, Yiguang Hong, \IEEEmembership{Fellow}, \IEEEmembership{IEEE}, Li Li, \IEEEmembership{Member}, \IEEEmembership{IEEE}, and Zhipeng Cheng, \IEEEmembership{Member}, \IEEEmembership{IEEE}
 
	\thanks{H. Qi (houyiqi@tongji.edu.cn), M. Liwang (minghuiliwang@tongji.edu.cn), Y. Hong (yghong@iss.ac.cn), and L. Li (lili@tongji.edu.cn) are with the Shanghai Research Institute for Intelligent Autonomous Systems, and also with the State Key Laboratory of Autonomous Intelligent Unmanned Systems, Frontiers Science Center for Intelligent Autonomous Systems, Ministry of Education, Shanghai Key Laboratory of Intelligent Autonomous Systems, Department of Control Science and Engineering, Tongji University, Shanghai, China. X. Wang (xianbin.wang@uwo.ca) is with the Department of Electrical and Computer Engineering, Western University, Ontario, Canada. L. Fu (liqun@xmu.edu.cn) is with the School of Informatics, Xiamen University, Fujian, China. Z. Cheng (chengzp\_x@163.com) is with the School of Future Science and Engineering, Soochow University, Jiangsu, China.
		
		Corresponding author: Minghui Liwang
	}}

	\IEEEtitleabstractindextext{\vspace{-3.5mm}
		\begin{abstract}
			\justifying
{Designing effective incentive mechanisms in mobile crowdsensing (MCS) networks is crucial for engaging distributed mobile users (workers) to contribute heterogeneous data for various applications (tasks). In this paper, we propose a novel stagewise trading framework to achieve efficient and stable task-worker matching, explicitly accounting for task diversity (e.g., spatio-temporal limitations) and network dynamics inherent in MCS environments. This framework integrates both futures and spot trading stages. In the former, we introduce the \textbf{f}utures \textbf{t}rading-driven \textbf{s}table \textbf{m}atching and \textbf{p}re-\textbf{p}ath-\textbf{p}lanning mechanism (FT-SMP$^3$), which enables long-term task-worker assignment and pre-planning of workers' trajectories based on historical statistics and risk-aware analysis. In the latter, we develop the \textbf{s}pot \textbf{t}rading-driven \textbf{D}QN-based \textbf{p}ath \textbf{p}lanning and onsite \textbf{w}orker \textbf{r}ecruitment mechanism (ST-DP$^2$WR), which dynamically improves the practical utilities of tasks and workers by supporting real-time recruitment and path adjustment. We rigorously prove that the proposed mechanisms satisfy key economic and algorithmic properties, including stability, individual rationality, competitive equilibrium, and weak Pareto optimality. Extensive experiements further validate the effectiveness of our framework in realistic network settings, demonstrating superior performance in terms of service quality, computational efficiency, and decision-making overhead.}

		\end{abstract}
		
		\vspace{-1mm}
		\begin{IEEEkeywords}
			Mobile crowdsensing, matching theory, futures and spot trading, age of information, risk analysis, path planning
		\end{IEEEkeywords}
}
	
\maketitle
\IEEEdisplaynontitleabstractindextext
%
\IEEEpeerreviewmaketitle

\section{Introduction}
\IEEEPARstart{M}{obile} Crowdsensing (MCS) represents an effective solution for distributed information gathering, which creatively leverages the power of ubiquitous Internet of Things (IoT) devices embedded with connectivity, computing power and heterogeneous data\cite{SURVEY 1,SURVEY 2}. By enabling a cost-effective and dependable sensing paradigm, MCS offers significant support across diverse applications (also known as tasks), e.g., intelligent transportation, environmental monitoring, mobile healthcare, and urban public management \cite{SURVEY 2, Incentive 1,Incentive 2}.
To better engage heterogeneous smart devices (also known as workers) in performing distributed MCS tasks, designing appropriate incentive mechanisms is crucial, especially when workers are selfishness due to limited resources\cite{Incentive 2,Incentive 3,Privacy}. For example, a worker with its own local workload is generally unwilling to offer free data services to remote sensing tasks. To this end, a service trading market can be established over MCS networks for facilitating data sharing among workers (who can contribute data while receiving payments) and task owners (who can pay for data services in supporting their data-driven applications)\cite{SURVEY 1}.

To engage workers for data sharing, various incentive mechanisms have been developed to determine the optimal assignment of tasks to workers. However, most of them have overlooked several critical issues:

\noindent $\bullet$ \textit{Dynamic and uncertain nature of MCS networks:} Most existing works consider a static MCS environment\cite{RWork_stable 4,RWork_stable 5}, by assuming that workers offer changeless service capability, and can definitively arrive at the assigned task without any risk. However, real-world MCS networks are always dynamic and uncertain. For instance, a worker may encounter delay events on its way to the assigned task (e.g., traffic jam and accident). Besides, since the delivery of data generally relies on wireless communications, the channel quality is varying over time. These uncertain factors collectively impose significant challenges in responsive and beneficial data provisioning mechanism design.

\noindent $\bullet$ \textit{Diversity of trading modes:} Existing research typically concentrates on either \textit{spot trading} or \textit{futures trading} strategies\cite{Trading mode 1,Trading mode 2}, where the former represents a widely adopted onsite data sharing mode, aiming to map task owners to proper workers by analyzing the current network and market conditions. However, spot trading can lead to excessive overheads, e.g., delay and energy cost on decision-making\cite{Future Trading 1,Future Trading 2,Future Trading 3,Future Trading 4} as well as risks, e.g., failures in data delivery\cite{Future Trading 1,Future Trading 2}, which inevitably leads to negative impact on quality of experience (QoE). Motivated by these drawbacks, futures-based trading allows task owners and workers to make pre-decisions (e.g., long-term contracts) for future trading in advance, thus facilitating more responsive data delivery. Nevertheless, implementing futures trading can incur risks such as unsuitable contract terms, when having inaccurate prediction on uncertain factors.

\noindent $\bullet$ \textit{Diverse task demands:} Demands of tasks are always diverse, including factors such as age of information (AoI)\cite{Incentive 2}, geographic locations of their point of interest (PoI)\footnote{Many studies use POIs to represent the sensing regions of tasks. Since this paper considers discrete tasks while each of them has its own location, we use "location of task" instead of "location of PoI of task", for better readability.} \cite{RWork_dynamic 3}, and time windows (e.g., start/closing time of tasks)\cite{RWork_dynamic 1}. Nevertheless, many studies have overlooked these specific features, leading to unsatisfying service qualities and even task failures under realistic conditions.

{Motivated by the aforementioned challenges, this work proposes a novel stagewise trading framework that seamlessly integrates futures and spot trading within dynamic and uncertain MCS networks. This framework is designed to enable risk-aware and stable task-worker matching\footnote{Stable matching refers to the stability of task-worker matching, ensuring that under allocation constraints and preferences, no blocking pairs exist, and no participant has an incentive to deviate \cite{RWork_stable 5,DP2}. In contrast, the static environment refers to an MCS network with deterministic parameters. The key distinction lies in that the dynamic MCS environment is affected by uncertainties such as variable delays and time-varying channel quality.}, while simultaneously supporting responsive and cost-efficient service transactions. In the futures trading stage, we employ a novel mechanism called \textbf{f}utures \textbf{t}rading-driven \textbf{s}table \textbf{m}atching and \textbf{p}re-\textbf{p}ath-\textbf{p}lanning (FT-SMP$^3$), which adapts to the evolving statistical characteristics of MCS networks to determine \textit{long-term workers} for each task (i.e., workers who can sign a contract with the task), while also incorporating pre-planned service paths for these workers.
In the subsequent spot trading stage, we introduce the \textbf{s}pot \textbf{t}rading-driven \textbf{D}QN-based \textbf{p}ath planning and onsite \textbf{w}orker \textbf{r}ecruitment (ST-DP$^2$WR) mechanism as an efficient backup. It allows workers to quickly adjust their strategies and improve task completion in dynamic environments, while also assisting tasks with remaining budgets to recruit supplementary worker, further enhancing service quality.}

Nevertheless, integrating diverse trading modes within a unified framework can also introduce new technical challenges. For example, \textit{futures trading aims to obtain long-term contracts, which may cause risks due to market fluctuations.} These risks, such as unsatisfying utility of workers and tasks, and uncompletion of tasks caused by unexpected delays, can lead to performance degradation of MCS. To cope with this, \textit{we incorporate risk constraints to keep possible risks within an acceptable range, thus improving the market performance.} Moreover, \textit{through the spot trading mode as a good backup}, workers can make recommend decisions when encountering delays, thereby improve their utility. Meanwhile, task owners can recruit temporary workers as needed to further enhance service quality\footnote{This two-stage design philosophy finds analogues in practical systems. For instance, during peak hours or holidays, ride-hailing platforms often schedule drivers in advance through timeslot incentives to ensure availability—analogous to the \textit{futures stage}. However, due to cancellations or traffic delays, they also conduct real-time dispatching to reassign orders—analogous to the \textit{spot stage}. This two-phase strategy balances long-term planning with real-time responsiveness.}.
{To the best of our knowledge, this paper makes a pioneering effort in designing a series of matching mechanisms that explicitly incorporate diverse spatio-temporal factors within a two-stage service trading market, as delineated along the timeline. Key contributions are summarized as follows:

\noindent $\bullet$ Considering the inherent dynamics and uncertainties of MCS environments, we propose a novel futures-spot integrated service trading market, conceptualized from an unique timeline-oriented perspective. This market is designed to facilitate accurate and risk-aware matching between mobile workers and spatio-temporal tasks, by systematically analyzing multiple sources of uncertainty, including stochastic delay events and variable task durations induced by such delays.}

\noindent $\bullet$ During the futures trading stage, we propose FT-SMP$^3$ for recruiting long-term workers who are more likely to catch the closing time of certain tasks, while also predetermining proper paths for workers, as the guidance about when to execute their assigned tasks. More importantly, we thoroughly analyze and manage the potential risks that tasks and workers may confront. We demonstrate that FT-SMP$^3$ supports key properties such as matching stability, individual rationality, fairness, and non-wastefulness. Also, we verify that the matching satisfies both competitive equilibrium and weak Pareto optimality.

\noindent $\bullet$ During spot trading stage, to raise the number of tasks that can successfully be completed in time, we develop ST-DP$^2$WR, offering two key functions: \textit{i)} helping workers to complete as many tasks as possible, thereby maximizing workers' utilities; and \textit{ii)} helping task owners with remaining budgets recruit more temporary workers to enhance their obtained service quality. Our ST-DP$^2$WR can improve the overall efficiency and stability of service trading market. Also, matchings considered in ST-DP$^2$WR satisfy similar properties to that involved in FT-SMP$^3$.

\noindent $\bullet$ We conduct comprehensive experiments based on both real-world dataset and synthetic rational parameters to verify the performance of our mechanisms in terms of service quality, social welfare, running time, and the overhead of interactions among participants, while also demonstrating our supportiveness on crucial properties.

\section{Related Work}
	\begin{table}[b!] 
	\vspace{-0.75cm}
	{\footnotesize
		\caption{\footnotesize{A summary of related studies\\(LoT: Location of task, TW: Time window of tasks)}} \vspace{-0.6cm} 
		\begin{center}
			\setlength{\tabcolsep}{0.5mm}{
				\begin{tabular}{|c|c|c|c|c|c|c|c|c|}
					\hline
					\multirow{2}{*}{\textbf{Reference}} & \multicolumn{2}{c|}{\makecell[c]{\textbf{Environmental}\\ \textbf{attributes}}} & \multicolumn{2}{c|}{\textbf{Trading mode}}&\multicolumn{4}{c|}{\textbf{Task property}}\\ \cline{2-9} 
					&\makecell[c]{Static}&\makecell[c]{Dynamic}&\makecell[c]{Spot}&\makecell[c]{Futures}&AoI&\makecell[c]{LoT}&TW&\makecell[c]{Budget}\\ \hline
					\makecell[c]{\cite{Incentive 1,RWork_stable 7,RWork_stable 5}} &$\surd$& &$\surd$& & & & &$\surd$\\ \hline
					\makecell[c]{\cite{Incentive 2}} &$\surd$& &$\surd$& &$\surd$& & &$\surd$\\ \hline	
					\makecell[c]{\cite{Incentive 3,RWork_stable 8}} &$\surd$& &$\surd$& & & $\surd$& &$\surd$\\ \hline
					\makecell[c]{\cite{RWork_stable 4}} &$\surd$& &$\surd$& &$\surd$& & &\\ \hline
					\makecell[c]{\cite{RWork_stable 6}} &$\surd$& &$\surd$& & &$\surd$&$\surd$&$\surd$\\ \hline
					\makecell[c]{\cite{RWork_dynamic 1}} & &$\surd$&$\surd$& &$\surd$&$\surd$&$\surd$&$\surd$\\ \hline
					\makecell[c]{\cite{RWork_dynamic 2}} &&$\surd$&$\surd$& &$\surd$& & &$\surd$\\ \hline
					\makecell[c]{\cite{RWork_dynamic 3,RWork_dynamic 4}} & &$\surd$ &$\surd$& & &$\surd$ &$\surd$&$\surd$ \\ \hline
					\makecell[c]{\cite{SURVEY 2,RWork_dynamic 5}} & &$\surd$ &$\surd$& & &$\surd$ & &$\surd$ \\ \hline
					\cite{DP2} & &$\surd$&$\surd$&$\surd$& & & &$\surd$\\ \hline
					our work & &$\surd$&$\surd$&$\surd$&$\surd$&$\surd$&$\surd$&$\surd$\\ \hline
			\end{tabular}}
	\end{center}}
	
\end{table}
Existing efforts have been put forward to resource trading in MCS networks from different viewpoints.

\noindent
$\bullet$ \textit{Investigations regarding static MCS networks.} Most studies on task scheduling and worker recruitment mainly consider rather static MCS networks \cite{Incentive 1,Incentive 2,Incentive 3,RWork_stable 4,RWork_stable 5,RWork_stable 6}.
In\cite{Incentive 1}, \textit{Zhou et al.} studied the bi-objective optimization for MCS incentive mechanism design, to simultaneously optimize total value function and coverage function with budget/cost constraint.
In\cite{Incentive 2}, \textit{Cheng et al.} considered AoI and captured the conflict interests/competitions among workers, proposing a freshness-aware incentive mechanism.
\textit{Hu et al.\cite{Incentive 3}} investigated a game-based incentive mechanism to recruit workers effectively while improving the reliability and data quality.
In\cite{RWork_stable 4}, \textit{Xiao et al.} considered the freshness of collected data and social benefits in MCS incentive designs
A many-to-many matching model was constructed by \textit{Dai et al.} \cite{RWork_stable 5} to capture the interaction between tasks and workers under budget constraints.
In\cite{RWork_stable 6}, \textit{Tao et al.} employed a double deep Q-network with prioritized experience replay to address the task allocation problem.
In\cite{RWork_stable 7}, \textit{Zhou et al.} studied a two-stage incentive scheme that combines blockchain technology and trusted execution environment.
\textit{Zhao et al.} \cite{RWork_stable 8} investigated a multi-agent deep reinforcement learning (DRL)-based incentive mechanism to tackle the joint data sensing and computing issues.

\noindent
$\bullet$ \textit{Investigations regarding dynamic MCS networks.} Although previous studies have made certain contributions, real-world MCS networks are inherently dynamic, and workers can often face various uncertain events during the data collection and delivery process. Consequently, researchers gradually shifted their focus towards dynamic and uncertain MCS networks\cite{SURVEY 2,RWork_dynamic 1,RWork_dynamic 2,RWork_dynamic 3,RWork_dynamic 4,RWork_dynamic 5}. 
In \cite{SURVEY 2}, \textit{Zhang et al.} considered diverse sensing tasks, while proposing a dynamic worker recruitment mechanism for edge computing-aided MCS. 
\textit{Gao et al.} \cite{RWork_dynamic 1} studied a dynamic task pricing problem with diverse factors such as multiple requester queuing competitions, dynamic task requirements, and distinct waiting time costs.
In \cite{RWork_dynamic 2}, \textit{Ji et al.} proposed a quality-driven online task-bundling-based incentive mechanism to maximize the social welfare while satisfying the task quality demands.
By adopting cognitive bias and the reference effect, \textit{Li et al.} in \cite{RWork_dynamic 3} explained the principle of path-dependence, and proposed a task coverage promotion according to path-dependence in improving the coverage and effectiveness.
In \cite{RWork_dynamic 4}, \textit{Ding et al.} investigated dynamic delayed-decision task assignment to enhance both the task completion ratios and budget utilization, while decreasing the user singleness.
In \cite{RWork_dynamic 5}, \textit{Guo et al.} proposed a dual reinforcement learning (RL)-based online worker recruitment strategy with adaptive budget segmentation, to cope with trajectories.
{While the aforementioned studies have raised some interesting ideas, they primarily focus on onsite decision-making (e.g., spot trading), which may be susceptible to prolonged delays, heavy energy cost, and potential trading failures.} To tackle the above challenges, we had made early efforts in establishing a hybrid market with futures and spot trading modes \cite{DP2}. However, \cite{DP2} puts emphasis on a rather simple environment, which overlooks several key aspects, such as the uncertain delay event incurred on the way of a worker to the assigned task, the diverse demands and attributes of MCS tasks, such like their spatial-temporal characteristics. These complicated uncertain factors are now considered in this paper.
Moreover, the optimization in \cite{DP2} simply aims to find a proper mapping between workers and tasks, while in this paper, we also should design the path for each worker to catch the deadline of assigned tasks. Driven by the complicated optimization problems, we design efficient solutions, such as ant colony optimization-based method for path pre-planning, and DRL-based method for path update (whether to follow the pre-planned path or abandon a contractual task). As a summary, we focus on closer-to-real-world settings for MCS. A comparative summary of related studies is presented in Table 1, highlighting the key distinctions of our approach. In the following, we elaborate on the unique features and methodological advantages of our framework over existing methods, as outlined below.
 
\noindent $\bullet$ \textit{Consideration on diverse task demands and uncertainties:} Unlike traditional approaches with single task characteristic -- such as geographic location, time constraints, or budget limitations -- our method fully accounts for multi-dimensional task requirements, including AoI, location of PoIs, time windows, and budget constraint. Also, uncertain factors such as uncertain delay event, uncertain duration of the delay event, and time-varying channel quality are carefully modeled to capture the random nature of MCS networks. That is to say, this paper pays attention to a more realistic MCS environment with spatio-temporal dynamics.

\noindent $\bullet$ \textit{Way to handle uncertainties and risks:} In dynamic MCS networks, uncertainties in task execution -- such as task cancellations, delays on the way, and worker mobility -- can leave heavy impacts on task completion and resource utilization, which have always been overlooked in conventional methods. To address this issue, we incorporate risk-constrained optimization during the futures trading stage, ensuring that task assignments remain feasible under diverse uncertainties while maintaining market stability. More importantly, during the spot trading stage, we enable workers to update/revise their paths in response to dynamic conditions, facilitating flexible on-demand worker recruitment strategies.

\section{Overview and System Model}
\subsection{Overview}
\begin{figure*}[]
	\vspace{-0.2cm}
	\centering
	\includegraphics[width=1.8\columnwidth]{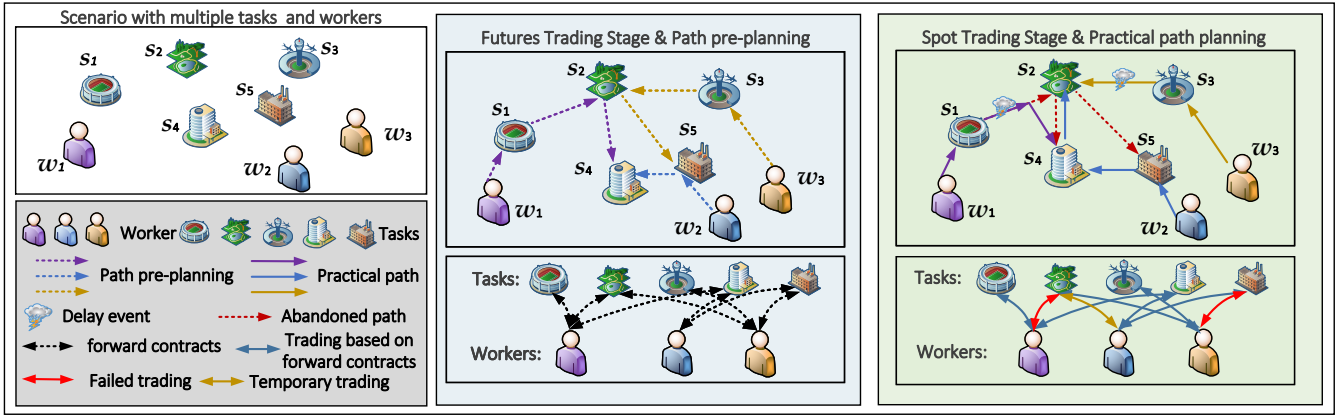}
	\caption{Framework and procedure in terms of a timeline associated with our proposed the stagewise trading framework in dynamic MCS networks.}
	\vspace{-0.4cm}
\end{figure*}
We are interested in a dynamic MCS network involving two key parties: \textit{i)} multiple sensing tasks collected in set $ \bm{S}=\left\{s_1,...,s_i,...,s_{|\bm{S}|}\right\} $, and \textit{ii)} multiple workers gathered via set $ \bm{W}=\left\{w_1,...,w_j,...,w_{|\bm{W}|}\right\} $. To capture the time-varying features, a practical transaction is discretized into $T$ timeslots, with index $t$, i.e., $t\in\left\{0,1,2,...,T\right\} $.
Also, to describe the dynamic and random nature of real-world MCS networks, we also involve the following uncertain factors\footnote{For the uncertain delay event and their duration, they actually leave real-time constraints on worker availability. Delays encountered during task execution may prevent workers from reaching the designated PoI within the specified MCS time window, increasing the uncertainty in worker availability. For time-varying channel quality, we model its fluctuations to account for variability in wireless communication performance.}:

\noindent
\textit{i) Uncertain delay event:} Worker $w_j$ may encounter delay events (e.g., traffic jam, traffic accident) on the way to task $s_i$. We denote the delay event as a random variable $\alpha_{i,j}$, following a Bernoulli distribution $\alpha_{i,j}\sim {\bf B}\left\{(1,0),(a_{i,j},1-a_{i,j})\right\}$. Specifically, $\alpha_{i,j}=1$ indicates that the $w_j$ encounters a delay event on its way to $s_i$, while $\alpha_{i,j}=0$, otherwise. 

\noindent
    \textit{ii) Uncertain duration of the delay event:} When encountering a delay event, worker $w_j$ will definitely spend a certain amount of time on it, which, to simplify our analysis, can be quantized by the number of timeslots denoted by $\tau^{\text{delay}}$. This duration follows a discrete uniform distribution, denoted by \(\tau^{\text{delay}} \sim \mathbf{U}(t^{\text{min}}, t^{\text{max}})\), e.g., $\tau^{\text{delay}}\in \{t^{\text{min}}, t^{\text{min}}+1,...,t^{\text{max}}-1, t^{\text{max}}\}$.

\noindent
    \textit{iii) Time-varying channel quality:} The network condition between a worker and a task owner is denoted by $\gamma_{i,j}$, reflected by the channel quality of wireless link between $w_j$ and the nearby access point (AP) for accessing $s_i$. The $\gamma_{i,j}$ follows a uniform distribution $\gamma_{i,j}\sim {\bf U}(\mu_1,\mu_2)$, where fluctuations can due to factors such as mobility of workers, and obstacles.

In this paper, we explore a novel matching-based stagewise service trading paradigm over dynamic MCS networks with the above uncertain factors. Interestingly, ``stagewise'' allows us to divide the whole trading process across multiple timeslots as two complementary stages. The former stage relies on a \textit{futures} trading mode, which encourages each task to recruit a set of workers with paid services, e.g., the corresponding payment (denoted by $ p_{i,j} $). While each worker can serve a vector of tasks, where the matched tasks are sorted into a path (sequence) by following factors such as the tasks' locations, time windows, payments, and costs. In addition, compensation from $w_j$ to $s_i$ (denoted by $ q_{i,j} $) may also be incurred when $w_j$ fails to complete $s_i$, e.g., $w_j$ encounters a set of delay events on the way to $s_i$, resulting in insufficient time in performing tasks. The above process can be implemented by facilitating mutually beneficial long-term contracts between workers and tasks, which are pre-signed (e.g., contract terms including $ p_{i,j} $ and $ q_{i,j} $) in advance to future practical transactions according to historical statistics. With these contracts in place, contractual workers and tasks can engage in practical transactions\footnote{A practical transaction refers to an actual service trading event between tasks and workers, where key issues are determined based on long-term contracts and the actual network conditions, including contract fulfillment, workers' payments, compensations for tasks, and the actual service paths of workers.} directly. As a complement, the latter stage relies on a spot trading mode, in which workers and tasks with long-term contracts are expected to fulfill their obligations accordingly. Since workers may fail to complete contractual tasks due to dynamic nature of MCS networks, tasks can further employ temporary workers when experiencing unsatisfying service quality.
Our major goal in this paper\textit{ is to establish fast, stable, and efficient worker recruitment as well as task execution paths, accounting for diverse task demands in dynamic and uncertain MCS networks.}

Fig. 1 depicts a schematic of the stagewise service trading market.
 For instance, workers and tasks first sign proper long-term contracts with pre-planned paths (the intermediate two boxes of Fig. 1). During a practical transaction (the right two boxes of Fig. 1), worker $w_1$ gives up $s_2$ to proceed to the next task. Similarly, worker $w_3$ decides to continue $s_2$, but fails to complete $s_5$. Also, due to uncertain factors, long-term workers may not be able to catch the deadline of assigned tasks. Facing this case, task owners with remaining budgets can recruit temporary workers (e.g., in Fig. 1, $s_2$ recruits $w_2$).

\subsection{Basic Modeling}
\textbf{Modeling of Tasks:} The attribute of each MCS task $s_i\in \bm{S}$ is represented by a 6-tuple $\langle t_i^{\text{b}}, t_i^{\text{e}}, B_i, Q_i^{\text{D}}, l_i^{\text{s}}, d_i \rangle$, where $t_i^{\text{b}}$ and $t_i^{\text{e}}$ define the start and closing time of $s_i$ (in terms of timeslot, e.g., $s_i$ starts at the $t_i^{\text{b}}$-th timeslot), collectively form a time window. Additionally, $B_i$ denotes the budget of task $s_i$, constraining the number of employable workers; $Q_i^{\text{D}}$ represents the desired service quality of $s_i$ (e.g., its required AoI); the location of $s_i$ is given by $l_i^{\text{s}} = (lo_i^{\text{s}}, la_i^{\text{s}}) $, where $lo_i^{\text{s}}$ and $la_i^{\text{s}}$ represent the longitude and latitude of $s_i$, respectively. The data size that each worker provides to $s_i$ is denoted by $d_i$ (e.g., bits).

\noindent\textbf{Modeling of Workers:} The attribute of a worker $w_j\in \bm{W}$ is described by a 6-tuple $ \langle e_j^{\text{c}}, e_j^{\text{D}}, e_j^{\text{t}}, e_j^{\text{m}}, f_j, v_j, l_j^{\text{w}}\langle t\rangle \rangle$, where $e_j^{\text{c}}$ (e.g., dollar/timeslot), $e_j^{\text{D}}$ (e.g., dollar/timeslot), $e_j^{\text{t}}$ (e.g., J/timeslot), and $e_j^{\text{m}}$ (e.g., dollar/timeslot) indicate the cost consumed in each timeslot for data collection, delay event, data transmission, and traveling to the target task, respectively. Note that for $e_j^{\text{c}}$, $e_j^{\text{D}}$, and $e_j^{\text{m}}$, we consider a monetary representation to better describe a resource trading market. Besides, $f_j$ (e.g., bits) represents the size of data collected by worker $w_j$ in each timeslot, and the moving speed of worker $w_j$ is denoted by $v_j$ (e.g., meter/timeslot). The position of $w_j$ at timeslot $t$ is represented by $l_j^{\text{w}}\langle t\rangle = ( lo_j^{\text{w}}\langle t\rangle, la_j^{\text{w}}\langle t\rangle) $, where $lo_j^{\text{w}}\langle t\rangle$ and $la_j^{\text{w}}\langle t\rangle$ denote the longitude and latitude of worker $w_j$, respectively.

\vspace{-3mm}\section{Proposed FT-SMP$^3$}
We next introduce FT-SMP$^3$, a novel approach specifically designed to accommodate the characteristics of the futures trading stage. At this time, our objective is to establish an efficient and stable matching between workers and diverse demands of tasks, and thereby:
\textit{i)} forming a set of long-term contracts to guide future transactions;
\textit{ii)} determining the potential paths that workers may follow.
{Crucially, FT-SMP$^3$ is executed only once during the offline futures trading stage, prior to actual task execution. This design significantly alleviates the computational burden on online decision-making (for example, the associated path planning process can also be offloaded to resource-rich nodes, e.g., cloud servers or capable edge devices during off-peak periods, thereby eliminating runtime overhead on the online system. Regarding this issue, we will not engage in any detailed discussion here in this paper).}
\vspace{-2mm}\subsection{Key Modeling}

We first define the matching between tasks and workers:

\noindent
$\bullet$ $\varphi(s_i)$: the set of workers recruited for processing task $ s_i $, i.e., $ \varphi(s_i) \in \bm{W} $;

\noindent
$\bullet$ $ \varphi(w_j) $: the vector of sensing tasks assigned to worker $ w_j $, i.e., $ \varphi(w_j) \in \bm{S} $;

Notably, we replace notations $p_{i,j}$, $q_{i,j}$ during futures trading stage by $p^F_{i,j}$, $q^F_{i,j}$, to avoid possible confusion with spot trading stage.

\subsubsection{Utility, expected utility, and risk of workers}
As performing tasks can incur costs on workers, we consider four types of costs for worker $w_j$ to complete task $s_i$:

\noindent
\textit{i) Moving cost.} The Euclidean distance between task $s_i$ and worker $w_j$ at timeslot $t$ is calculated as $ Ed_{i,j}\langle t\rangle = \left((lo_j^{\text{w}}\langle t\rangle - lo_i^{\text{s}})^2 + (la_j^{\text{w}}\langle t\rangle - la_i^{\text{s}})^2\right)^{\frac{1}{2}} $. Besides, we can have the time for $w_j$ moving to $s_i$ as $ \tau_{i,j}^{\text{move}}\langle t\rangle = \frac{Ed_{i,j}\langle t\rangle}{v_j} $. Accordingly, the moving cost at timeslot $t$ can be expressed by
\begin{equation}\label{key}\tag{1}
	c_{i,j}^{\text{move}}\langle t\rangle= e_j^{\text{m}} \tau_{i,j}^{\text{move}}\langle t\rangle=\frac{e_j^{\text{m}} Ed_{i,j}\langle t\rangle}{v_j} .
\end{equation}

\noindent\textit{ii) Sensing cost.} The time that worker $w_j$ spends on collecting data for task $s_i$ is defined as $ \tau_{i,j}^{\text{sense}} = \frac{d_i}{f_j} $. Thus, we calculate the data collection cost as
\begin{equation}\label{key}\tag{2}
	c_{i,j}^{\text{sense}}=e_j^{\text{c}}\tau_{i,j}^{\text{sense}}= \frac{e_j^{\text{c}} d_i}{ f_j}.
\end{equation}

\noindent\textit{iii) Transmission cost.} The time consumed by worker $ w_j $ transmitting sensing data to $s_i$ as
\begin{equation}\label{key}\tag{3}
	\tau_{i,j}^{\text{tran}}= \frac{d_i}{W\text{log}_2 \left(1+e_j^{\text{t}} \gamma_{i,j}\right)},
\end{equation}
where $ W $ is the bandwidth of wireless communication links, and $ e_j^{\text{t}} \gamma_{i,j} $ indicates the received signal noise ratio (SNR) of $ s_i $. The transmission cost is expressed by
\begin{equation}\label{key}\tag{4}
	c_{i,j}^{\text{tran}}=e_j^{\text{t}} \tau_{i,j}^{\text{tran}}= \frac{e_j^{\text{t}} d_i}{W\text{log}_2 \left(1+e_j^{\text{t}} \gamma_{i,j}\right)},
\end{equation}

\noindent\textit{iv) Cost incurred by delay event.} Since a worker $w_j$ may encounter delay events to task $s_i$, we define the delay time for worker $w_j$ traveling to $s_i$ as $\tau_{i,j}^{\text{D}}\langle t^{\text{ini}} \rangle=\sum_{n=1}^{\tau_{i,j}^{\text{move}}\langle t^{\text{ini}} \rangle}\alpha_{i,j}\tau^{\text{delay}}_n$, where $ t_{i,j}^{\text{ini}} $ represents the initial timeslot for $ w_j $ to set off for task $ s_i\in\varphi(w_j) $. The cost incurred by a delay event can thus be calculated as
\begin{equation}\label{key}\tag{5}
	c_{i,j}^{\text{D}}\langle t^{\text{ini}} \rangle=e_j^{\text{D}}\tau_{i,j}^{\text{D}}\langle t^{\text{ini}} \rangle.
\end{equation}

Accordingly, the overall cost on $ w_j $ performing $ s_i $ is
\begin{equation}\label{key}\tag{6}
	c_{i,j}\langle t^{\text{ini}} \rangle=\mathbb{V}_1\left(c_{i,j}^{\text{move}}\langle t^{\text{ini}} \rangle+c_{i,j}^{\text{D}}+c_{i,j}^{\text{sense}}+c_{i,j}^{\text{tran}}\right),
\end{equation}
where $\mathbb{V}_1$\footnote{The unit monetary cost $\mathbb{V}_1$ is a tunable system parameter that can reflect different economic contexts, such as electricity pricing or regional incentive policies. Other coefficients (e.g., $e_j^{\text{m}}$, $e_j^{\text{t}}$) are device-dependent and can be configured flexibly.} is the unit monetary cost/Joule (e.g., dollar/J), and the corresponding task completion time is given by
\begin{equation}\label{key}\tag{7}
	\tau_{i,j}\langle t^{\text{ini}} \rangle=\tau_{i,j}^{\text{move}}\langle t^{\text{ini}} \rangle+\tau_{i,j}^{\text{D}}\langle t^{\text{ini}} \rangle+\tau_{i,j}^{\text{sense}}+\tau_{i,j}^{\text{tran}}.
\end{equation}

Since worker $ w_j $ may encounter delay events and accordingly fails to complete a task in time, we use $ \beta_{i,j} $ to describe whether $ w_j $ completes $ s_i $ during practical transactions, as given by
\begin{equation}\label{key}\tag{8}
	\beta_{i,j}=\left\{
	\begin{aligned}
		&0,~~w_j \text{ breaks a contract with } s_i\\
		&1,~~w_j \text{ fulfills a contract with } s_i.
	\end{aligned} \right.
\end{equation}
The utility of worker $w_j$ involves three key components: \textit{i)} the overall payment minus the cost on performing tasks, \textit{ii)} the service cost which has been consumed on $w_j$, while confronting a failure in task completion, and \textit{iii)} the penalty for failing to complete tasks. Accordingly, the utility of worker $w_j$ is given by
\begin{equation}\label{key}\tag{9}
	\begin{aligned}
		&U^W (w_j,\varphi(w_j))=\sum_{s_i\in\varphi(w_j)}\beta_{i,j}\left(p^F_{i,j}-c_{i,j}\langle t_{i,j}^{\text{ini}} \rangle\right)\\&-\sum_{s_i\in\varphi(w_j)}(1-\beta_{i,j})\left(c_{i,j}^{\text{part}}\langle t_{i,j}^{\text{ini}} \rangle+q^F_{i,j}\right),
	\end{aligned}
\end{equation}
where $c_{i,j}^{\text{part}}\langle t_{i,j}^{\text{ini}} \rangle$ indicates the costs incurred when worker $w_j$ has made efforts but task $s_i$ still fails. This cost structure is similar to that of $c_{i,j}\langle t_{i,j}^{\text{ini}} \rangle$, including movement costs, delay event costs, sensing costs, and transmission costs. The specific value of \( c_{i,j}^{\text{part}}\langle t_{i,j}^{\text{ini}} \rangle \) is determined by the timeslot in which worker \( w_j \) decides to abandon task \( s_i \).
Apparently, uncertain factors can impose challenges to obtain the practical value of (9) in our designed futures trading stage. Instead, we are interested in its expection, as shown below
\begin{equation}\label{key}\tag{10}
	\begin{aligned}
		&\overline{U^W}(w_j,\varphi(w_j))=\text{E}[{U^W}(w_j,\varphi(w_j))]\\=&\sum_{s_i\in\varphi(w_j)}\text{E}[\beta_{i,j}]\left(p^F_{i,j}-\text{E}[c_{i,j}\langle t_{i,j}^{\text{ini}} \rangle]\right)-\\&\sum_{s_i\in\varphi(w_j)}\left(1-\text{E}[\beta_{i,j}]\right)\left( \text{E}[c_{i,j}^{\text{part}}\langle t_{i,j}^{\text{ini}} \rangle]+q^F_{i,j}\right),
	\end{aligned}
\end{equation}
where derivations of $ \text{E}\left[\beta_{i,j}\right] $, $ \text{E}\left[c_{i,j}\langle t_{i,j}^{\text{ini}} \rangle\right] $, and $ \text{E}\left[c_{i,j}^{\text{part}}\langle t_{i,j}^{\text{ini}} \rangle\right] $ are detailed in Appx. B.1.

A futures trading can generally bring both benefits and risks, as uncertainties may lead to losses to participants. Thus, we evaluate two specific risks for each worker:

\noindent\textit{i) The risk of receiving an unsatisfying utility:} Each worker $w_j \in \bm{W}$ serving task $s_i \in \varphi(w_j)$ faces the risk of obtaining an unsatisfying utility (e.g., when the value of $U^W(w_j,s_i)$ turns negative) during a practical transaction. This risk is defined by the probability that the utility $U^W(w_j,s_i)$ falls below a tolerable value $u_{\text{min}}$, given by:
\begin{equation}\label{key}\tag{11}
	R^W_1 (w_j ,s_i)=\text{Pr}\left(U^W (w_j,s_i)<u_{\text{min}}\right),~\forall s_i\in \varphi(w_j),
\end{equation}
where $ u_{\text{min}} $ is a positive value approaching to 0.

\noindent\textit{ii) The risk on failing to complete the task:} The time for a worker $w_j$ to complete task $s_i \in \varphi(w_j)$ may be insufficient due to possible delay events, as defined by
\begin{equation}\label{key}\tag{12}
	R_2^W (w_j ,s_i)=\text{Pr}\left(\beta_{i,j}=0\right),~~~\forall s_i\in \varphi(w_j).
\end{equation}

The aforementioned risks should be managed within an acceptable range, otherwise, worker $w_j$ will not sign a long-term contract with task $s_i$ during the futures trading stage.

\subsubsection{Utility, expected utility, and risk of tasks}
Considering the importance of freshness in MCS data, we use AoI as a key metric to assess service quality. AoI represents the elapsed time since the most recently received data were generated, and it continuously increases until a new update can be received. The lifecycle of data generally starts from collection (generation), undergoes a transmission period, and finally reaches the task owner for utilization.
	
Inspired by existing literature \cite{RWork_stable 4, Aoi 1, Aoi 2}, we first define the AoI of sensing data provided by worker \( w_j \) for task \( s_i \). Let \( t_{i,j}^{\text{gen}} \) denote the timeslot when worker \( w_j \) starts collecting data for task \( s_i \). The total AoI is obtained by summing over all timeslots from data generation, until it is fully transmitted and received by the MCS task, as given by 
	\begin{equation}\label{eq:aoi}\tag{13}{
		\begin{aligned}
age_{i,j}=&\sum_{t^\prime=t_{i,j}^{\text{gen}}}^{t_{i,j}^{\text{gen}}+\tau_{i,j}^{\text{sense}}+\tau_{i,j}^{\text{tran}}} \left(t^\prime-t_{i,j}^{\text{gen}}\right)=\sum_{t^\prime=0}^{\tau_{i,j}^{\text{sense}}+\tau_{i,j}^{\text{tran}}} t^\prime\\&=\frac{(\tau_{i,j}^{\text{sense}}+\tau_{i,j}^{\text{tran}})^2+\tau_{i,j}^{\text{sense}}+\tau_{i,j}^{\text{tran}}}{2}.
		\end{aligned}}
	\end{equation}
In particular, this summation accounts for all intermediate timeslots between \( t_{i,j}^{\text{gen}} \) (when the data is first generated) and \( t_{i,j}^{\text{gen}} + \tau_{i,j}^{\text{sense}} + \tau_{i,j}^{\text{tran}} \) (when the data is finally received). The term \( (t^\prime - t_{i,j}^{\text{gen}}) \) represents the instantaneous AoI at each timeslot.
	
{Accordingly, the average AoI of the sensing data that worker \( w_j \) contributes to task \( s_i \) is computed as $AGE_{i,j}=\frac{age_{i,j}}{\tau_{i,j}^{\text{sense}}+\tau_{i,j}^{\text{tran}}}=\frac{\tau_{i,j}^{\text{sense}}+\tau_{i,j}^{\text{tran}}+1}{2}$, where \( \tau_{i,j}^{\text{sense}}+\tau_{i,j}^{\text{tran}} \) represents the total time taken from data generation to its reception by MCS tasks.} To assess the overall service quality of task \( s_i \), we define \( Q(s_i, \varphi(s_i)) \) to denote the freshness of data contributions from all assigned workers\footnote{Although (14) adopts the reciprocal form $1/AGE_{i,j}$, which theoretically preserves a non-zero contribution from high-AoI workers. Thus, this design does not compromise the system's control over data freshness due to the joint effect of the following mechanisms:  
\textit{i)} As $AGE_{i,j}$ increases, its contribution rapidly approaches zero, making the impact of stale data practically negligible;  
\textit{ii)} The risk-aware constraint in (17) ensures that tasks reject worker combinations whose service quality falls below the acceptable threshold $Q_i^{\text{D}}$, thereby filtering out low-quality data sources at the system level;  
\textit{iii)} In the knapsack-based worker selection process (Step~3 of Alg.~2), high-AoI workers tend to exhibit low utility and are thus excluded from the optimal solution.
Therefore, even if a small theoretical contribution remains, the proposed design—when combined with the overall mechanism—can effectively suppress stale data sources and ensure high-quality long-term services.
}:
	\begin{equation}\label{eq:quality}\tag{14}
		Q\left(s_i,\varphi(s_i)\right)=\sum_{w_j\in \varphi(s_i)} \frac{1}{AGE_{i,j}},
	\end{equation}
which ensures the worker who provides fresher data (lower AoI) can contribute more to the overall service quality.

Accordingly, the utility of task $ s_i $ consists of two key parts: \textit{i)} the obtained service quality; \textit{ii)} the remaining budget (a larger remaining budget further reflects a lower cost), as given by
\begin{equation}\label{key}\tag{15}
	\begin{aligned}
	&U^S (s_i,\varphi(s_i))=\mathbb{V}_2\sum_{w_j\in \varphi(s_i)} \frac{1}{AGE_{i,j}}+\\& \mathbb{V}_3 \left(\sum_{w_j\in \varphi(s_i )}\left((1-\beta_{i,j})q^F_{i,j}-\beta_{i,j}p^F_{i,j}\right)\right),
	\end{aligned}
\end{equation}
where $\mathbb{V}_2 $ and $\mathbb{V}_3$\footnote{
To prevent unintended utility inflation caused by breach-related penalties, parameter $\mathbb{V}_3$ is constrained within interval $[0,1]$, ensuring that the task-side utility from contract breach (i.e., $\mathbb{V}_3 q^F_{i,j}$) can stay below the corresponding loss on the worker side ($q^F_{i,j}$). This design guarantees that the total social welfare (the summation of utilities of both tasks and workers) strictly decreases after a breach, thereby eliminating incentives for system-level utility manipulation.
} are positive weighting coefficients. As uncertainties prevent us from getting the practical value of the task's utility directly, we consider its corresponding expectation as
\begin{equation}\label{key}\tag{16}
	\begin{aligned}
		&\overline{U^S}\left(s_i,\varphi(s_i)\right)=\text{E}[U^S \left(s_i,\varphi(s_i)\right)]\\=&\mathbb{V}_2 \frac{1}{\text{E}[AGE_{i,j}]}+\\& \mathbb{V}_3 \left(\sum_{w_j\in \varphi(s_i )}\left((1-\text{E}[\beta_{i,j}])q^F_{i,j}-\text{E}[\beta_{i,j}]p^F_{i,j}\right)\right),
	\end{aligned}
\end{equation}
where derivations of $ \text{E}[AGE_{i,j}] $ is detailed in Appx. B.2.

Similar to workers, task owners also face risks owing to market dynamics. This risk of $s_i$ is primarily associated with the inability to reach desired service quality, given by
\begin{equation}\label{key}\tag{17}
	\begin{aligned}
	&R^S (s_i,\varphi(s_i))= \text{Pr}\left(Q(s_i,\varphi(s_i))<Q_i^{\text{D}}\right).
	\end{aligned}	
\end{equation}
Apparently, a larger value of $ R^S (s_i,\varphi(s_i)) $ implies a higher risk on an unsatisfying quality. Thus, the task owner will not sign long-term contracts with workers in the futures trading stage when confronting an unacceptable risk.

\subsection{Key Definitions of Matching}
We next introduce our designed matching in futures trading stage, representing a unique many-to-many (M2M) matching tailored to the characteristics of uncertainties, upon considering multiple timeslots. More importantly, this matching is also crafted to cope with potential risks in dynamic MCS networks, distinguishing it from conventional matching mechanisms.
\begin{Defn}(M2M matching of FT-SMP$^3$)
	A M2M matching $ \varphi $ of our proposed FT-SMP$^3$ constitutes a mapping between task set $ \bm{S} $ and worker set $ \bm{W} $, which satisfies the following properties:
	
	\noindent
	$\bullet$ for each task $ s_{i} \in \bm{S},\varphi\left( s_i \right) \subseteq \bm{W} $,
	
	\noindent
	$\bullet$ for each worker $ w_{j} \in \bm{W}, \varphi\left( w_{j} \right) \subseteq \bm{S} $,
	
	\noindent
	$\bullet$ for each task $ s_i $ and worker $ w_j $, $ s_i\in\varphi(w_j)$ if and only if $ w_j\in\varphi\left(s_i\right) $.
\end{Defn}

We next define the \textit{blocking coalition}, which is a crucial factor that can make the matching unstable.

\begin{Defn}(Blocking coalition)
	Under a given matching $ \varphi $, worker $ w_j $ and task vector $ \mathbb{S} \subseteq \bm{S}$ may form one of the following two types of blocking coalition, denoted by $ \left(w_j; \mathbb{S}\right) $.
	
\noindent \textbf{Type 1 blocking coalition:} Type 1 blocking coalition satisfies the following two conditions:
	
	\noindent
	$\bullet$ Worker $ w_j $ prefers a task vector $ \mathbb{S} \subseteq \bm{S} $ rather than its currently matched task vector $ \varphi(w_j) $, i.e., 
	\begin{equation}\label{key}\tag{18}
		\begin{aligned}
				\overline{U^W}(w_j,\mathbb{S})>\overline{U^W}(w_j,\varphi(w_j)). 
		\end{aligned} 
	\end{equation}

	\noindent
	$\bullet$ Every task in $ \mathbb{S} $ prefers to recruit workers rather than being matched to its currently matched/assigned worker set. That is, for any task $ s_i\in \mathbb{S} $, there exists a worker set $ \varphi^\prime(s_i) $ that constitutes the workers that need to be evicted, satisfying
	\begin{equation}\label{key}\tag{19}
			\begin{aligned}
				\overline{U^{S}}\left(s_i,\left\{\varphi\left( s_i \right)\backslash\varphi^{\prime}\left( s_i \right)\right\} \cup \left\{ w_{j} \right\} \right) > \overline{U^{S}}\left(s_i,\varphi\left( s_i \right) \right).\\
		\end{aligned}
	\end{equation} 
	
\noindent \textbf{Type 2 blocking coalition:} Type 2 blocking coalition satisfies the following two conditions:
	
	\noindent
	$\bullet$ Worker $ w_j $ prefers executing task vector $ \mathbb{S} \subseteq \bm{S} $ to its currently matched task vector $ \varphi(w_j) $, i.e.,
	\begin{equation}\label{key}\tag{20}
			\begin{aligned}
				\overline{U^W}(w_j,\mathbb{S} )>\overline{U^W}(w_j,\varphi(w_j) ).
		\end{aligned}
	\end{equation} 
	
	\noindent
	$\bullet$ Every task in $ \mathbb{S} $ prefers to further recruit worker $ w_j $ in conjunction to its currently matched/assigned worker set\footnote{Note that the Type 2 blocking coalition is a special case of the Type 1 blocking coalition. When no task $s_i \in \mathbb{S}$ needs to evict any of its currently matched workers (i.e., $\varphi'(s_i) = \emptyset$), the Type 1 coalition structurally degenerates into a Type 2 coalition.}. That is, for any task $ s_i\in \mathbb{S} $, we have
	\begin{equation}\label{key}\tag{21}
			\begin{aligned}
				\overline{U^{S}}(s_i,\varphi(s_i)\cup\left\{ w_{j} \right\})>\overline{U^{S}}(s_i,\varphi(s_i) ) .
		\end{aligned} 
	\end{equation}
\end{Defn}

Recall the above definitions, the Type 1 blocking coalition can lead to the unstability of matching since a task is incentivized to recruit another set of workers that can bring it with a higher expected utility. Similarly, the Type 2 blocking coalition makes the matching unstable since the task has a left-over budget to recruit more workers, which can further helps with increasing its expected utility.
\subsection{Problem Formulation}
We conduct the service provisioning in futures trading stage as a M2M matching, occurring prior to practical transactions. Our goal is to achieve stable mappings between tasks and workers to facilitate long-term contracts. Accordingly, each task $ s_i \in \bm{S} $ aims to maximize its overall expected utility, which is mathematically formulated as the following optimization problem:
\begin{subequations}
	\begin{align}
		\bm{\mathcal{F}^S}:~&\underset{{\varphi\left(s_i\right)}}{\max}~\overline{U^S}\left(s_i,\varphi\left(s_i\right)\right)\tag{22}\\
		\text{s.t.}~~~
		&\varphi\left(s_i\right)\subseteq\bm{W} \tag{22a}\\
		&\sum_{w_j\in\varphi(s_i)} p^F_{i,j}\le B_i\tag{22b}\\
		&R^{S}_1\left( s_{i},\varphi \left( s_{i} \right) \right)\leq \rho_1,\tag{22c}
	\end{align}
\end{subequations}
where $ 0<\rho_1\le 1 $ represents a risk threshold\footnote{The threshold is not a fixed parameter of the mechanism, but is a configurable variable set by individual workers or task owners, allowing flexible adaptation to different risk preferences or deployment scenarios.}. In $ \bm{\mathcal{F}^S} $, constraint (22a) forces the recruited workers $ \varphi(s_i) $ within set $ \bm{W} $, constraint (22b) ensures that the expenses of task $ s_i $ devoted to recruiting workers $ \varphi(s_i) $ does not exceed its budget $ B_i $, and constraint (22c) dictates the tolerance of each task on receiving an undesired utility, with its derivation detailed in Appx. B.2.
Furthermore, each worker $ w_j\in\bm{W} $ aims to maximize its expected utility, as modeled by
\begin{subequations}
	\begin{align}
		\bm{\mathcal{F}^W}:~&\underset{{\varphi\left(w_j\right)}}{\max}~\overline{U^W}\left(w_j,\varphi\left(w_j\right)\right)\tag{23}\\
		\text{s.t.}~~~
		&\varphi\left(w_j\right)\subseteq\bm{S} \tag{23a}\\
		&c_{i,j}\le p^F_{i,j},~\forall s_i\in \varphi \left( w_j \right) \tag{23b}\\
		&R^{W}_1\left( w_j,s_i \right)\leq \rho_2,~\forall s_i\in \varphi \left( w_j \right)\tag{23c}\\
		&R^{W}_2\left( w_j,s_i \right)\leq \rho_3,~\forall s_i\in \varphi \left( w_j \right),\tag{23d}
	\end{align}
\end{subequations}
where $ \rho_2 $ and $\rho_3 $ are risk thresholds falling in interval $ (0, 1] $. In $ \bm{\mathcal{F}^W} $, constraint (23a) ensures that task set $ \varphi(w_j) $ belongs to set $ \bm{S} $, and constraint (23b) shows that the payments asked by $ w_j $ can cover the corresponding service costs; constraints (23c) and (23d) manages the possible risks that each worker may confront during practical transactions, with their derivation detailed in Appx. B.1.

Our proposed futures trading stage aims to solve a multi-objective optimization (MOO) problem that involves (22) and (23), which is challenging to be solved. Moreover, when confronting uncertain MCS networks and potential risks, conventional M2M matching can struggle to directly address these challenges. After extensive investigations, we explore an improved heuristic method to determine workers' preference lists and the paths for task execution, while leveraging a futures-based M2M stable matching algorithm to address the MOO problem, as detailed in the following section.

\begin{figure}[t]{
	\centering
	\includegraphics[width=0.9\columnwidth]{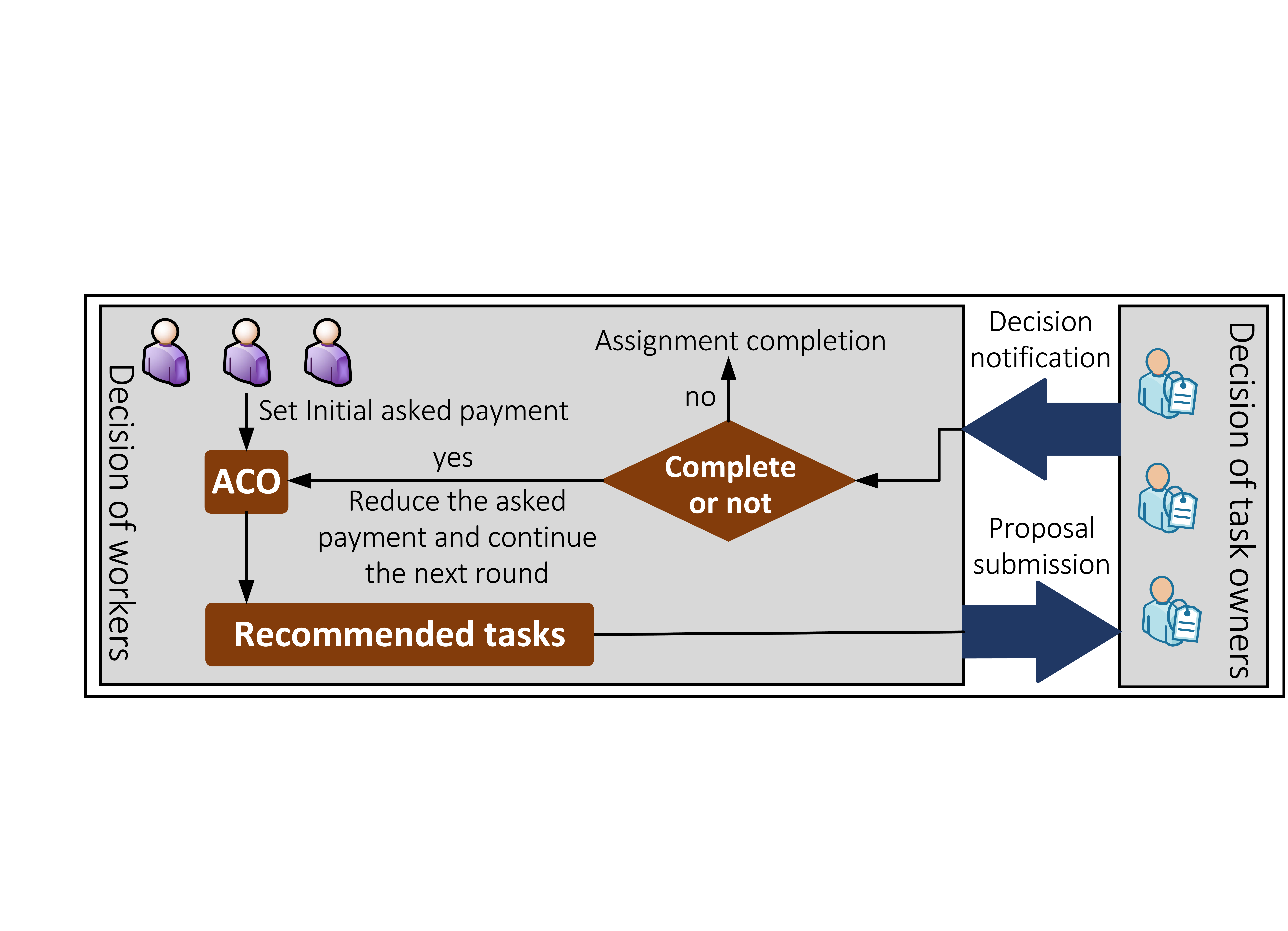}
	\caption{A flow chart, showing the distributed task-worker interaction model regarding our proposed FT-SMP$^3$.}}
\end{figure}

\subsection{Solution Design}
\begin{figure*}[b!] 
	\centering
	\vspace{-0.2cm}
	\hrulefill
	\begin{equation}\tag{24}
		pr^\mathbbm{k}_{m,n}=\left\{
		\begin{aligned}
			&\frac{\tau^{\varepsilon_1}_{m,n}\eta^{\varepsilon_2}_{m,n}(1/width_n)^{\varepsilon_3}(1/wait_n)^{\varepsilon_4}}{\sum_{k\in J_\mathbbm{k}(m)}\tau^{\varepsilon_1}_{m,n}\eta^{\varepsilon_2}_{m,n}(1/width_n)^{\varepsilon_3}(1/wait_n)^{\varepsilon_4}},~~~\text{if } n \in J_\mathbbm{k}(m)\\
			&0,~~~~~~~~~~~~~~~~~~~~~~~~~~~~~~~~~~~~~~~~~~~~~~~~~~~~~~~~~~~~~~~~~~~~~~~~~~\text{otherwise}
		\end{aligned} \right.
	\end{equation}
\end{figure*}

We next develop the \textbf{f}utures \textbf{t}rading-driven \textbf{s}table \textbf{m}atching and \textbf{p}re-\textbf{p}ath-\textbf{p}lanning (FT-SMP$^3$) to \textit{achieve efficient and stable matching between workers and spatial-temporal tasks, thereby obtaining \textit{i)} a set of long-term contracts among them to guide future transactions, and \textit{ii)} the path that each worker may follow.} Specifically, FT-SMP$^3$ consists of two key aspects: \textit{i)} \textbf{e}nhanced \textbf{a}nt \textbf{c}olony \textbf{o}ptimization for \textbf{p}ath \textbf{p}re-\textbf{p}lanning and \textbf{t}ask \textbf{r}ecommendation (EACO-P$^3$TR), taking into account diverse task demands, payments that workers can receive, and costs for performing tasks, which aims to recommend tasks to workers along with corresponding pre-planned paths; and \textit{ii)} \textbf{f}utures \textbf{t}rading-driven \textbf{m}any-to-\textbf{m}any (FT-M2M) matching, enabling long-term contracts between workers and tasks to achieve stable services.

Fig. 2 shows the flow chart of our proposed FT-SMP$^3$, where the interaction between task owners and workers can be realized through deploying multiple rounds. Key steps are detailed below: 

\noindent $\bullet$ At the beginning of each round, each worker utilizes our proposed EACO-P$^3$TR algorithm (detailied in Sec. 4.4.1) to comprehensively evaluate diverse task demands and asked payments, thereby selecting an optimal vector of tasks that maximizes its expected utility (i.e., (23)). This selected task vector represents the worker's tasks of interest for the current round. The worker then announces its information to the corresponding interested tasks.

\noindent
$\bullet$ After collecting workers' reports, each task temporarily selects some workers who can offer the highest expected service quality as candidates, under its budget constraints (i.e., (22b)), and then informs all the workers.

\noindent
$\bullet$ Following the decisions of tasks, if a worker is rejected by a task, it can reduce its asked payments to enhance its competitiveness in the next round; while if it is selected, its asked payment remains unchanged in the next round.

\noindent
$\bullet$ Repeat the above steps until either of the following conditions is met: \textit{i)} all workers are successfully recruited by their interested tasks; or \textit{ii)} no worker can further reduce its asked payment (e.g., to avoid negative utility and mitigate risks). Once these conditions are met, the corresponding matching result is obtained, which includes information on long-term contracts. Furthermore, the final round's vector of interested tasks for each worker represents their practical task execution order in future transactions, thereby determining the pre-planned paths for workers.

\subsubsection{Proposed EACO-P$^3$TR Algorithm}
Inspired by the ant colony optimization (ACO)\cite{ACO 1} algorithm upon considering diverse characteristics of tasks (e.g., varying time windows and locations) and workers (e.g., various costs), we develop the EACO-P$^3$TR algorithm. This algorithm recommends proper tasks to each worker along with a pre-planned path (e.g., the sequence of tasks), to \textit{maximize workers' expected utility}. Regarding a worker $w_j$ and all the tasks as vertices and the edges between them, we construct a directed complete graph denoted by $\bm{G_j}(\bm{V_j}, \bm{E_j})$. In this graph, $\bm{V_j} = \left\{v_0, v_1, \ldots, v_{|\bm{S}|}\right\}$ represents the set of vertices ($v_0$ denotes the worker $w_j$, $v_1,..., v_{|\bm{S}|}$ represent tasks), indexed by $m$ and $n$ for convenience of analysis, and $\bm{E_j} = \left\{(v_m, v_n) | v_m, v_n \in \bm{V_j}, m \neq n\right\}$ comprises the edge set. Note that the location of $v_0$ is denoted by $l^{\text{w}}_j\langle 0\rangle$, and the locations of $v_1, v_2, \ldots, v_{|\bm{S}|}$ are $l^{\text{s}}_1, l^{\text{s}}_2, \ldots, l^{\text{s}}_{|\bm{S}|}$.

In EACO-P$^3$TR, we use a set $\bm{\mathbb{K}}$ to denote the collection of ants. Each ant $\mathbbm{k}\in \mathbb{K}$ simulates the moving trajectory of worker $w_j$, namely, starting from $v_0$. {It considers comprehensive information such as the current timeslot, the time window of tasks, the location of vertices (location of tasks), the expected timeslot required to travel to each task, and the expected utility upon reaching each task, while then selects a feasible vertice to visit next (i.e., the system evaluates whether the expected travel time from the worker's current location to the target vertex, when added to the current timeslot, will bring success arrival before the end of the target vertex's time window).} This process continues until no visitable applicable vertices left. The state transition rule of EACO-P$^3$TR, i.e., the probability of ant $\mathbbm{k}$ moving from task $v_m$ to $v_n$ ($m\neq n\neq 0$), can be defined as (24), where $\varepsilon_1$, $\varepsilon_2$, $\varepsilon_3$, and $\varepsilon_4$ are positive weight coefficients. Besides, $\tau_{m,n}$ represents the pheromone deposited for the transition from $v_m$ to $v_n$, as given by (25), while $\eta_{m,n}$ denotes the distance between $v_m$ and $v_n$, and $width_n$ denotes the width of the time window for task $s_i$ in vertice $v_n$ (i.e., $width_n = t_i^{\text{e}} - t_i^{\text{b}}$). Moreover, $wait_n$ represents the waiting time if the ant arrives early (e.g., before the task starts), where $wait_n=\max\left\{t_n^{\text{b}}-(t_{m,n}^{\text{ini}}+\tau_{m,n}^{\text{move}}+\tau_{m,n}^{\text{D}}),0\right\}$. Moreover, $J_\mathbbm{k}(m)$ is the set of remaining feasible tasks that ant $\mathbbm{k}$ can complete at vertice $v_m$, and the tasks $s_i \in J_\mathbbm{k}(m)$ should satisfy constraints (23c) and (23d).

When a path has been found by the ant colony, pheromones along with the edges are updated as
\begin{equation}\tag{25}
	\tau_{m,n}\leftarrow(1-\theta)\tau_{m,n}+\mathrm{\Delta}\tau_{m,n},
\end{equation}
where $0 < \theta < 1$ is the pheromone evaporation coefficient, and $\mathrm{\Delta}\tau_{m,n}$ is the pheromone deposited by ants, which can be calculated as
\begin{equation}\tag{26}
	\mathrm{\Delta}\tau_{m,n}=\left\{
	\begin{aligned}
		&\sum_{(v_m,v_n)\in\bm{M^{\text{max}}}}\triangle
		\overline{U^W_{m,n}},~\text{if edge }(v_m,v_n) \in \bm{M^{\text{max}}}\\
		&0,~~~~~~~~~~~~~~~~~~~~~~~~~~~~~~~~~~\text{otherwise},
	\end{aligned} \right.
\end{equation}
where $\triangle\overline{
	U^W_{m,n}}$ denotes the expected utility for worker $w_j$ from vertice $v_m$ to finish the task in vertice $v_n$. $\bm{M^{\text{max}}}$ is the path (e.g., a vector of the completed tasks of $w_j$) with the maximum utility obtained from all ants, i.e., $\bm{M^{\text{max}}}= \left\{\bm{M^\mathbbm{k}}|\max~\sum_{(m,n)\in\bm{M^\mathbbm{k}}}\triangle\overline{U^W_{m,n}},~\forall \mathbbm{k}\in\mathbb{K}\right\}$.

\begin{algorithm}[t] 
	{\small
		\caption{{Proposed EACO-P$^3$TR Algorithm}}
		\LinesNumbered 
		\textbf{Input:} worker $w_j$, task set $\bm{S}$\ 
		
		\textbf{Initialization:} $\tau_{m,n}\leftarrow \tau_0$, $iter=1$, $v_m=v_0$, $iter_{\text{max}}$, $\mathbb{K}$
		
		\While{$iter\le iter_{\text{max}}$}{
			\For{$\forall k\in \mathbb{K}$}{
				$\bm{M^\mathbbm{k}}=\varnothing$
				
				\While{not end condition}{\textbf{Calculate:} $J_\mathbbm{k}(m)$, $pr^\mathbbm{k}_{m,n}$
					
					\If{$\forall J_\mathbbm{k}(m) \neq \varnothing$}{$n \leftarrow$ ant $\mathbbm{k}$ randomly selects the next task $n$ from $J_\mathbbm{k}(m)$ according to $pr^\mathbbm{k}_{m,n}$
						
						$\bm{M^\mathbbm{k}}\leftarrow \left\{\text{edge}(m,n)\right\}\cup\bm{M^\mathbbm{k}}$
						
						$m\leftarrow n$
					}
					\Else{break loop}
				}
			}
			
			$\bm{M^{\text{max}}}\leftarrow \left\{\bm{M^\mathbbm{k}}|\max~\sum_{(m,n)\in\bm{M^\mathbbm{k}}}\triangle\overline{U^W_{m,n}},~\forall k\in\bm{K}\right\}$
			
			update $\tau_{m,n}$

			$iter=iter+1$
		}
		$L^{\text{best}}_j \leftarrow $ selects the task vector from $ \bm{M^{\text{max}}} $
		
		\textbf{Output:} $L^{\text{best}}_j$}
\end{algorithm}

Alg. 1 details the pseudo-code of our designed EACO-P$^3$TR algorithm, including several key steps: 

\noindent
\textbf{Step 1. Initialization:} According to the information associated with workers and tasks (e.g., their locations, time windows of tasks, payments of workers, and the expected time and energy consumed on performing tasks), Alg. 1 starts with constructing a directed complete graph $ \bm{G_j}(\bm{V_j}, \bm{E_j}) $ for each worker $w_j$, denoted by $v_0$.

\noindent
\textbf{Step 2. Path exploration:} In each iteration, ant $\mathbbm{k}$ at each vertice evaluates multiple factors involving: the present time, the expected time traveling to the next unvisited vertice, the expected time to perform the task, and the time window. This evaluation is conducted to ascertain whether the task can be completed within the specified time frame, thereby obtaining $J_\mathbbm{k}(m)$ and $pr^\mathbbm{k}_{m,n}$ (line 7, Alg. 1). Then, the next vertice (task) is chosen from set $J_\mathbbm{k}(m)$ according to (24). This iteration proceeds until no further executable vertices (line 8, Alg. 1).

\noindent
\textbf{Step 3. Pheromone update:} Subsequently, the expected utilities $\sum_{(m,n)\in\bm{M^\mathbbm{k}}}\triangle
\overline{U^W_{m,n}}$ associated with different vertice sequence $\bm{M^\mathbbm{k}}$ are compared, among which, the path with the maximum expected utility (i.e., $\bm{M^{\text{max}}}$) is selected for pheromone updating (lines 14-15, Alg. 1), following the procedures outlined in (25) and (26).

\noindent
\textbf{Step 4. Optimal path output:} After finishing all the iterations, the path $\bm{M^{\text{max}}}$ with the highest expected utility can be determined as the pre-planned path for worker $w_j$. Accordingly, the tasks along this path are recommended.

\subsubsection{Proposed FT-M2M matching algorithm}
\begin{algorithm}[t!] 
		{\small\caption{{Proposed FT-M2M Matching Algorithm}}
		\LinesNumbered 
		\textbf{Initialization:} $ k \leftarrow 1 $, $ p_{i,j}\left\langle 1 \right\rangle \leftarrow p^{\text{Desire}}_{i,j}$, for $ \forall i,j $, $ {flag}_{j} \leftarrow 1 $, $\mathbb{Y}\left( w_{j} \right)\leftarrow \varnothing$, $\mathbb{Y}\left( s_{i} \right)\leftarrow \varnothing$\ 

		\While{$ {flag}_{j} $}{
			\textbf{$ {flag}_{j} \leftarrow 0 $}
			
			$ L^{\text{best}}_j \leftarrow \text{Alg. 1}$, $ \mathbb{Y}\left( w_{j} \right) \leftarrow L^{\text{best}}_j $
			
			\If{$ \forall\mathbb{Y}\left( w_{j} \right) \neq \varnothing $}{
				\For{$ \forall s_i \in \mathbb{Y}\left( w_{j} \right) $}{$ w_j $ sends a proposal about its information to $ s_i $}
				
				\While{
					$ \Sigma_{w_{j}\in \bm{W}}{flag}_{j} > 0 $}{
					Collect proposals from the workers in $ \bm{W} $, e.g., using $ {\widetilde{\mathbb{Y}}}\left(s_i\right) $ to include the workers that send proposals to $ s_i $
					
					$ \mathbb{Y}(s_i)\leftarrow $ choose workers from $ {\widetilde{\mathbb{Y}}}\left(s_i\right) $ that can achieve the maximization of the expected utility by using DP under budget $ B_i $
					
					$ s_i $ temporally accepts the workers in $ \mathbb{Y}(s_i) $, and rejects the others
				}
				
				\For{
					$ \forall s_i \in \mathbb{Y}\left( w_{j} \right) $
				}{
					\If{$ w_j $ is rejected by $ s_i $, $ p_{i,j}\left\langle k \right\rangle>c_{i,j} $, and risk constraint (23c) is met}{
						$ p_{i,j}\left\langle {k + 1} \right\rangle \leftarrow \max\left\{ p_{i,j}\left\langle k \right\rangle - \mathrm{\Delta}p~,{ c}_{i,j} \right\} $}
					\Else{$ p_{i,j}\left\langle {k + 1} \right\rangle \leftarrow p_{i,j}\left\langle k \right\rangle $}
				}
				
				\If{there exists $p_{i,j}\left\langle {k + 1} \right\rangle \neq p_{i,j}\left\langle k \right\rangle\ $, $ \forall s_i\in\mathbb{Y}\left( w_{j} \right) $}{
					$ {flag}_j\leftarrow 1 $,	$ k\leftarrow k+1 $\
				}
			}
		}

		\If{
			$ R^{S}\left( s_i ,\varphi\left( s_i \right) \right) \le \rho_1 $}{$ s_i $ gives up trading in the futures trading stage}
			
			$\varphi(s_i)\leftarrow\mathbb{Y}(s_i)$, $\varphi(w_j)\leftarrow \mathbb{Y}(w_j)$
			
		\textbf{Return:} $\varphi(s_i)$, $\varphi(w_j)$, $L_j^{\text{best}}$.}
\end{algorithm}
Following the pre-planned paths, we then introduce FT-M2M matching (see Alg. 2), which facilitates stable, efficient, and risk-aware matching between tasks and workers, as well as their long-term contracts.

\noindent
\textbf{Step 1. Initialization:} At the beginning of Alg. 2, each worker $ w_j $'s asked payment is set by $ p_{i,j}\left\langle 1 \right\rangle = p^{\text{Desire}}_{i,j} $ (line 1, Alg. 2), $ \mathbb{Y} (w_j ) $ contains the interested tasks of $ w_j $ and $ \mathbb{Y}(s_i) $ involves the workers temporarily selected by $ s_i $.

\noindent
\textbf{Step 2. Proposal of workers:} At each round $ k $, each worker $ w_j $ first chooses tasks from $L^{\text{best}}_j$ (the optimal task vector for worker $w_j$ according to its expected utility, derived from Alg. 1), and records them in $ \mathbb{Y}(w_j ) $\footnote{Note that workers are not rigidly bound to their initially preferred task. Instead, at the beginning of each round, every worker re-runs the EACO-P$^3$TR module to dynamically update its task-path preferences based on current expected utilities. As a result, if the utility of the previous best-matched task drops (e.g., due to repeated price reductions), the worker may switch to a new task with higher expected utility. This allows workers to avoid sticking with outdated preferences and promotes adaptive decision-making.}
. Then, $ w_j $ sends a proposal to each task in $ \mathbb{Y}(w_j) $, including its asked payments $ p_{i,j}\left\langle k \right\rangle $, probability of completing $s_i$ (i.e., $\beta_{i,j}$), and expected service quality of sensing data $\text{E}[Q_{i,j}]$ (line 7, Alg. 2).

\noindent
\textbf{Step 3. Worker selection on tasks' side:} After collecting the information from workers in set $ {\widetilde{\mathbb{Y}}}\left(s_i\right) $, each task $ s_i $ solves a 0-1 knapsack problem,
which can generally be solved via dynamic programming (DP) \cite{RWork_stable 5,DP1,DP2} (line 10, Alg. 2), determine a collection of temporary workers (e.g., set $ \mathbb{Y} (s_i)$), where $\mathbb{Y} (s_i)\subseteq {\widetilde{\mathbb{Y}}}\left(s_i\right) $ that can bring the maximum expected utility, under budget $ B_i $. Then, each $ s_i $ reports its decision on worker selection during the current round to workers. 

\noindent
\textbf{Step 4. Decision-making on workers' side:} After obtaining the decisions from each task $ s_i\in\mathbb{Y} (w_j) $, worker $ w_j $ inspects the following conditions:

\noindent
$\bullet$ \textbf{Condition 1.} If $ w_j $ is accepted by $ s_i $ or its current asked payment $ p_{i,j}\left\langle k \right\rangle $ equals to its cost $ {\ c}_{i,j} $ or risk constraint (23c) isn't met, its payment remains unchanged (line 16, Alg. 2);

\noindent
$\bullet$ \textbf{Condition 2.} If $ w_j $ is rejected by a task $ s_i $, its asked payment $ p_{i,j}\left\langle k \right\rangle $ can still cover its cost $ {c}_{i,j} $ and risk constraint (23c) is met, it decreases its asked payment\footnote{The asked payment decrement $\triangle p$ governs the trade-off between convergence speed and solution optimality. A larger $\triangle p$ accelerates convergence by updating payments more aggressively, but may miss better matching opportunities. In contrast, a smaller $\triangle p$ enables more fine-grained utility balancing at the cost of additional matching rounds. Please see Appx. G.3 for a detailed theoretical discussions.} for $ s_i $ in the next round, while avoiding a negative utility (line 14, Alg. 2):
\begin{equation}\label{key}	\tag{27}
	{
		\begin{aligned}
			p_{i,j}\left\langle {k + 1} \right\rangle = \max\left\{ p_{i,j}\left\langle k \right\rangle - \mathrm{\Delta}p_j~,{~c}_{i,j} \right\}.
	\end{aligned} }
\end{equation}

\noindent
\textbf{Step 5. Repeat:} If all the asked payments stay unchanged from $ (k-1)^{\text{th}} $ round to $ k^{\text{th}} $ round, the matching terminates at round $ k $. We use $ \Sigma_{w_j\in \bm{W}}{flag}_j=0 $ to denote this situation (line 3, Alg. 2).
Otherwise, the algorithm repeats the above steps (e.g., lines 2-18, Alg. 2) in the next round.

\noindent
\textbf{Step 6. Risk analysis:} When the algorithm is terminated, each task $ s_i $ will choose whether to sign long-term contracts with the matched workers according to its risk estimation (constraint (22c)).

{Note that the computational complexity of our proposed FT-SMP$^3$ can be described by $\mathcal{O}\left(\mathcal{X} \cdot \left(|\bm{W}| \cdot iter_{\text{max}} \cdot \mathbb{K} \cdot |\bm{S}|^2 + \sum_{s_i \in \bm{S}} |{\widetilde{\mathbb{Y}}}(s_i)| \cdot B_i \right)\right)$, where $\mathcal{X}$ denotes the number of matching rounds (more details can be found in Appendix C)}.

As our FT-M2M matching is deployed before future practical transactions, our focus lies in distinctive objectives, e.g., the expectation of workers' and tasks' utilities, as well as control of potential risk, which greatly differs us from conventional matching mechanisms. Hereafter, we will cover key properties on designing our unique matching in futures trading stage.

\begin{Defn}(Individual rationality of FT-M2M matching) For both tasks and workers, a matching $ \varphi $ is individually rational when the following conditions are satisfied:
	
	\noindent
	$\bullet$ For tasks: \textit{i)} the overall payment of a task $s_i$ paid to matched workers $\varphi\left(s_i\right)$ does not exceed $B_i$, i.e., constraint (22b) is met; \textit{ii)} the risk of each task receiving an undesired service quality is controlled within a certain range, i.e., constraint (22c) is satisfied.
	
	\noindent
	$\bullet$ For workers: \textit{i)} the risk of each worker receiving an undesired expected utility is controlled within a certain range, i.e., constraint (23c) is satisfied; \textit{ii)} the risk of each worker failing to complete matched tasks is controlled within a certain range, i.e., constraint (23d) is satisfied.
\end{Defn}

\begin{Defn}(Fairness of FT-M2M matching): A matching $\varphi$ is fair if and only if it does not impose type 1 blocking coalition.\end{Defn}
\begin{Defn}(Non-wastefulness of FT-M2M matching): A matching $\varphi$ is non-wasteful if and only if it does not impose type 2 blocking coalition. \end{Defn}

\begin{Defn}(Strong stability of FT-M2M matching) The proposed matching is strongly stable if it is individual rational, fair, and non-wasteful.
\end{Defn}

Note that competitive equilibrium represents a conventional concept in economic behaviors, playing an important role in analyzing the performance of commodity markets upon having flexible prices and multiple players. When the considered market arrives at the competitive equilibrium, there exists a price at which the number of task owners that will pay is equal to the number of workers that will sell\cite{pareto 1}. Correspondingly, the competitive equilibrium of FT-M2M matching is defined below.

\begin{Defn}(Competitive equilibrium associated with trading between workers and task owners in futures trading stage) The trading between workers and task owners reaches a competitive equilibrium if the following conditions are satisfied:
	
	\noindent
	$\bullet$ {For each worker $ w_j \in \bm{W} $, if $ w_j $ is associated with a task owner $ s_i\in \bm{S} $, then $ \text{E}[c_{i,j}]\leq p^F_{i,j} $~;}
	
	\noindent
	$\bullet$ For each task $ s_i\in \bm{S} $, $ s_i $ is willing to trade with the worker that can bring it with the maximum expected utility;
	
	\noindent
	$\bullet$ For each task \( s_i \) in set \( \bm{S} \), when \( s_i \) does not recruit more workers, it indicates that the remaining budget after deducting the payments made to matched workers is insufficient to recruit an additional worker.
\end{Defn}

For a multi-objective optimization problem (e.g., $ \bm{\mathcal{F}^W} $ and $ \bm{\mathcal{F}^S} $), a Pareto improvement occurs when the \textit{expected social welfare (referring to a summation of expected
	utilities of workers and task owners in our considered market)} can be increased with another feasible matching result\cite{pareto 1}. Thus, a matching is weak Pareto optimal when there is no Pareto improvement.

\begin{Defn}(Weak Pareto optimality of trading between tasks and workers in futures trading stage) The proposed matching is weak Pareto optimal if there is no Pareto improvement.
\end{Defn}
We show that our proposed FT-M2M matching of FT-SMP$^3$ can support the above-discussed properties, while the corresponding analysis and proofs are given by Appx. D, due to space limitation.

\section{Proposed ST-DP$^2$WR Mechanism}
When comes to practical transactions, contractual workers and task owners can follow the long-term contracts and pre-determined paths directly. Nevertheless, the dynamic and uncertain nature of MCS networks may stop workers to complete assigned tasks on the pre-planed path. To further enhance task completion and utilities for both parties, we propose ST-DP$^2$WR mechanism as a quick and reliable backup, including \textit{i)} for workers, we investigate a \textbf{d}eep \textbf{q}-\textbf{n}etwork with \textbf{p}rioritized experience replay-enhanced \textbf{d}ynamic \textbf{r}oute \textbf{o}ptimization (DQNP-DRO) algorithm. This algorithm assists workers to decide whether to follow the pre-planned path, or to abandon a contractual task (e.g., a worker can not get to a task in time due to delay events), or to select a new task (which has not been listed in the pre-planned path). And \textit{ii)} for tasks with unsatisfying service quality (some workers may be unable to serve them due to uncertain delay events), we introduce the \textbf{s}pot trading-based \textbf{t}emporal \textbf{m}any-to-\textbf{m}any (ST-M2M) matching algorithm, helping recruit temporary workers. To avoid confusion with futures trading, we rewrite $p_{i,j}$, $q_{i,j}$, and $\beta_{i,j}$ in the spot trading stage as $p^S_{i,j}$, $q^S_{i,j}$, and $\beta_{i,j}^S$, respectively.

To facilitate analysis, we define several key notations:

\noindent
$\bullet$ $\bm{W^\prime}\langle t\rangle$: the set of workers that can serve temporary tasks at timeslot $t$, i.e., $ \bm{W^\prime}\langle t\rangle\in\bm{W}$;

\noindent
$\bullet$ $\bm{S^\prime}\langle t\rangle$: the set of sensing tasks that have surplus budget to recruit temporary workers at timeslot $t$;

\noindent
$\bullet$ $\nu_t(s_i)$: the set of workers temporary recruited for processing task $ s_i $ at timeslot $t$, i.e., $ \nu_t(s_i) \in \bm{W^\prime}\langle t\rangle $;

\noindent
$\bullet$ $ \nu_t(w_j) $: the set of sensing tasks temporary assigned to worker $ w_j $ at timeslot $t$, i.e., $ \nu_t(w_j) \in \bm{S^\prime}\langle t\rangle $.
\subsection{Utility of tasks and workers}
Utility of task $s_i\in \bm{S^\prime}\langle t\rangle$ consists of two parts: \textit{i)} obtained service quality, and \textit{ii)} remaining budget, as given by
\begin{equation}\label{key}\tag{28}
	\begin{aligned}
		&U^{S\prime}_t (s_i,\nu_t(s_i))=\mathbb{V}_2\sum_{w_j\in \nu_t(s_i)} \frac{1}{AGE_{i,j}}+\\& \mathbb{V}_3 \left(-\sum_{w_j\in \nu_t(s_i )}\left((1-\beta_{i,j}^{S})q_{i,j}^{S}-\beta_{i,j}^{S}p_{i,j}^{S}\right)\right).
	\end{aligned}
\end{equation}

We also define the utility of worker $w_j \in \bm{W^\prime}$ to encompass three key parts: \textit{i)} the overall payment minus the cost of completing tasks, \textit{ii)} the service cost for failing to complete tasks, and \textit{iii)} the penalty incurred by worker $w_j$ for failing to complete tasks. This is given by:
\begin{equation}\label{key}\tag{29}
	\begin{aligned}
		&U_t^{W\prime} (w_j,\nu_t(w_j))=\sum_{s_i\in\nu_t(w_j)}\beta_{i,j}^{S}\left(p_{i,j}^{S}-c_{i,j}\langle t_{i,j}^{\text{ini}} \rangle\right)\\&-\sum_{s_i\in\nu_t(w_j)}(1-\beta_{i,j}^{S})\left(c_{i,j}^{\text{part}}\langle t_{i,j}^{\text{ini}} \rangle+q_{i,j}^{S}\right).
	\end{aligned}
\end{equation}
Due to the dynamic nature of the MCS networks, workers during this stage can still confront risks:

\textit{i) The risk of receiving an unsatisfying utility:} Each worker $w_j \in \bm{W}$ serving task $s_i \in \nu_t(w_j)$ faces the risk of obtaining an unsatisfying utility, given by:
\begin{equation}\label{key}\tag{30}
	R^{W\prime}_1 (w_j ,s_i)=\text{Pr}\left(U^{W\prime}_t (w_j,s_i)<u_{\text{min}}\right),~\forall s_i\in \nu_t(w_j)
\end{equation}

\textit{ii) The risk on failing to complete the task:} Each worker $w_j$ may not have enough time to complete task $s_i \in \nu_t(w_j)$ due to delay events, as defined by
\begin{equation}\label{key}\tag{31}
	R_2^{W\prime} (w_j ,s_i)=\text{Pr}\left(\beta_{i,j}^S=0\right),~~~\forall s_i\in \nu_t(w_j).
\end{equation}
Besides, the aforementioned risks should be managed within an acceptable range; otherwise, worker $w_j$ will not service task $s_i$.
\subsection{Problem Formulation}
In the spot trading stage, each task $ s_i \in \bm{S^\prime}\langle t\rangle $ aims to maximize its overall utility, which can be mathematically formulated as the following optimization problem:
\begin{subequations}
	\begin{align}
		\bm{\mathcal{F}^{S\prime}}:~&\underset{{\nu_t\left(s_i\right)}}{\max}~U_t^{S\prime}\left(s_i,\nu_t\left(s_i\right)\right)\tag{32}\\
		\text{s.t.}~~~
		&\nu_t\left(s_i\right)\subseteq\bm{W^\prime}\langle t\rangle \tag{32a}\\
		&\sum_{w_j\in\nu_t(s_i)} p^s_{i,j}\le B_i^t,\tag{32b}
	\end{align}
\end{subequations}
 where \( B_i^t \) represents the available budget of task \( s_i \) at timeslot \( t \), also includes the compensation from workers for can not complete \( s_i \) prior to timeslot \( t \) (as stipulated in long-term contracts). In $ \bm{\mathcal{F}^{S\prime}} $, constraint (32a) forces recruited workers $ \nu_t(s_i) $ to belong to set $ \bm{W^\prime}\langle t\rangle $, constraint (32b) ensures that the expense of task $ s_i $ devoted to recruiting workers $ \nu_t(s_i) $ does not exceed the its remaining budget $ B_i^t $. Besides, each worker $ w_j\in\bm{W^\prime} $ aims to maximize its utility, as follow
\begin{subequations}
	\begin{align}
		\bm{\mathcal{F}^{W\prime}}:~&\underset{{\nu_t\left(w_j\right)}}{\max}~{U_t^{W\prime}}\left(w_j,\nu_t\left(w_j\right)\right)\tag{33}\\
		\text{s.t.}~~~
		&\nu_t\left(w_j\right)\subseteq\bm{S^\prime}\langle t\rangle \tag{33a}\\
		&c_{i,j}\langle t^{\text{ini}} \rangle\le p^s_{i,j},~\forall s_i\in \nu_t \left( w_j \right) \tag{33b}\\
		&R^{W^\prime}_1\left( w_j,s_i \right)\leq \rho_4,~\forall s_i\in \nu_t \left( w_j \right)\tag{33c}\\
		&R^{W^\prime}_2\left( w_j,s_i \right)\leq \rho_5,~\forall s_i\in \nu_t \left( w_j \right),\tag{33d}
	\end{align}
\end{subequations}
where $ \rho_4 $ and $\rho_5 $ are risk thresholds fall in interval $ (0, 1] $. In $ \bm{\mathcal{F}^{W\prime}} $, constraint (33a) ensures that task set $ \nu_t(w_j) $ belongs to set $ \bm{S^\prime}\langle t\rangle $, and constraint (33b) ensures the payments asked by $ w_j $ can cover their service costs; constraints (33c) and (33d) control possible risks, and their derivations are detailed by Appx. E. 

The ST-M2M matching algorithm focuses on a MOO problem. Our goal is \textit{to achieve stable and efficient matching between tasks that require temporary recruitment and workers during a specific timeslot in spot trading stage, thereby enhancing the service quality obtained by the tasks through temporary recruitment}. 

\subsection{Solution Design}
Facing dynamic networks and uncertainties factors (e.g., delay events), workers need to make decisions through our proposed DQNP-DRO algorithm (i.e., whether to abandon the current task or continue with it) to maximize their utility. For tasks that are abandoned, recruiting temporary workers through ST-M2M matching can be a good option to enhance task completion rates and improve their utility. Hereafter, we will introduce our proposed DQNP-DRO algorithm.

Our studied problem represents as a Markov Decision Process (MDP), for which Deep Q-Network (DQN)\cite{DQN1} offers an useful technique that integrates Q-learning with deep neural networks, as benefiting from its online learning and the function approximation capabilities. Accordingly, our proposed DQNP-DRO algorithm implements on a DQN model with prioritized experience replay (PER)\cite{PER}.
\begin{figure}[t!]
	\centering
	\includegraphics[width=0.9\columnwidth]{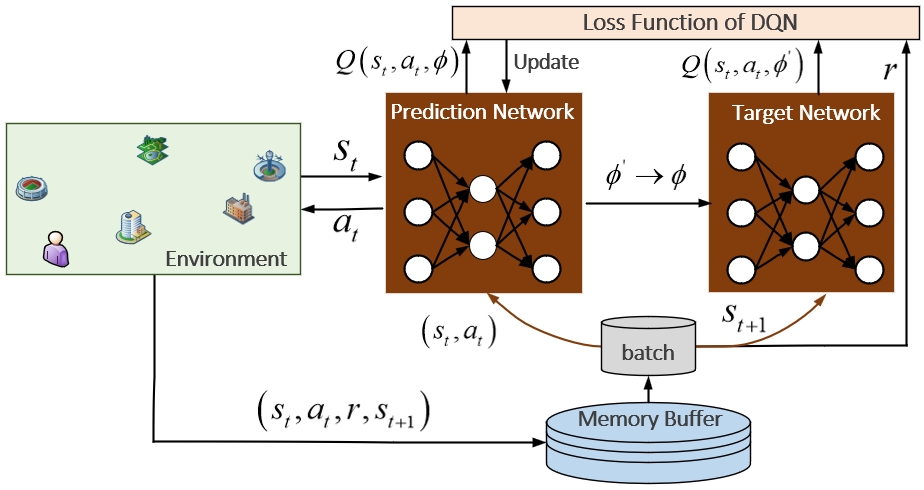}
	\caption{A detailed structure of the ST-DP$^2$WR.}
	\vspace{-0.3cm}
\end{figure}
Fig. 3 depicts the schematic of our proposed DQNP-DRO algorithm, where our considered dynamic MCS network is conceptualized as an environment with multiple agents. Meanwhile, each worker is seen as a DQN agent, making decisions on the observed state of the environment whether to continue the current task or choose next task following vector $\varphi(w_j)$. The goal of the agent is to discover an optimal policy $\pi$, maximizing long-term cumulative rewards by taking an action $\mathbbm{a}$ under a given state $\mathbbm{s}$. Optimal state-action values can be found through the Q-function based on state-action pairs, defined as
\begin{equation}\tag{34}
	Q(\mathbbm{s}_t,\mathbbm{a}_t)={\bf E}_{\pi}\left[\sum_{i=1}^{\infty}\chi^{i-1}\mathbb{R}_{t+i}|\mathbb{S}_t=\mathbbm{s},\mathbb{A}_t=\mathbbm{a}\right],
\end{equation}
where $\chi$ is a discount factor ($0\le \chi \le 1$), and $\mathbb{R}_{t}$ is a reward that the agent can obtain by taking action $\mathbbm{a}_t$ in state $\mathbbm{s}_t$. Q-learning uses a Q-table to store Q-values and the online updating rule of the Q-values with a learning rate $lr$ ($0\le lr \le 1$) is given by
\begin{equation}\tag{35}
	\begin{aligned}
		&Q(\mathbbm{s}_t,\mathbbm{a}_t)\leftarrow \\&Q(\mathbbm{s}_t,\mathbbm{a}_t)+lr\left[\mathbbm{r}_t+\chi \underset{{\mathbbm{a}}}{\max}Q(\mathbbm{s}_{t+1},\mathbbm{a}_t)-Q(\mathbbm{s}_t,\mathbbm{a}_t)\right].
	\end{aligned}
\end{equation}

The state, action and reward function for the proposed DQNP-DRO algorithm are defined as follows:

\noindent
$\bullet$ \textit{State:} the state is composed of five elements: \textit{i)} worker's current location; \textit{ii)} current timeslot; \textit{iii)} target task; \textit{iv)} information required for executing the task (e.g., the sequence of tasks, locations and time windows of tasks, and penalties for abandoning tasks); and \textit{v)} the distance from the worker to the current target task.

\noindent
$\bullet$ \textit{Action:} the action considers two main factors: \textit{i)} continue with the selected task, and \textit{ii)} abandon the current task and choose the next task following the pre-path $L_j^{\text{best}}$.

\noindent
$\bullet$ \textit{Reward:} the reward function accounts for four conditions: \textit{i)} if the worker is on its way to the task and chooses to continue, a negative reward is given for moving loss (e.g., $c_{i,j}^{\text{move}}$); \textit{ii)} if the worker plans to change the current task, a negative reward can be caused for contract break (e.g., $q_{i,j}^F$ or $q_{i,j}^S$); \textit{iii)} upon reaching the task, a negative reward is incurred for the cost of data collection and transmission (e.g., $c_{i,j}^{\text{sense}}+c_{i,j}^{\text{tran}}$); and \textit{iv)} completing the task yields a positive reward (e.g., $ p_{i,j}^F$ or $p_{i,j}^S $).

Unlike Q-learning, DQN uses a deep neural network to approximate the optimal state-action value Q-function, i.e., $Q(\mathbbm{s},\mathbbm{a},\theta^\prime)\approx \underset{\pi}{\max} ~Q(\mathbbm{s},\mathbbm{a})$, where $\theta^\prime$ denotes a set of parameters for a neural network.
The training process of the DQN includes the following steps:

\noindent
$\bullet$ Initialize the memory buffer size, structure and parameters of the DQN.

\noindent
$\bullet$ Interact with the environment and generate a set of training data. Each interaction generates a tuple $(\mathbbm{s}_t,\mathbbm{a}_t,\mathbbm{r}_t,\mathbbm{s}_{t+1})$.

\noindent
$\bullet$ Store the experience-tuples into the memory buffer.

\noindent
$\bullet$ By employing a strategy of PER\cite{PER}, a batch of experience tuples of specified size is sampled from the memory buffer. These data are then input into the DQN, where the value of the loss function can be calculated by (36)
\begin{equation}\tag{36}{\small
	\begin{aligned}
	Loss=\left[\mathbbm{r}_t+\chi\underset{\mathbbm{a}}{\max}Q(\mathbbm{s}_{t+1},\mathbbm{a}_{t+1},\phi^\prime)-Q(\mathbbm{s}_{t},\mathbbm{a}_{t},\phi^\prime)\right]^2,
	\end{aligned}}
\end{equation}
where $\phi$ and $\phi^\prime$ are parameters of a prediction network and those of a target network in the DQN, respectively. $Q(\mathbbm{s}_{t},\mathbbm{a}_{t},\phi^\prime)$ is the output of the prediction network, while $\underset{\mathbbm{a}}{\max}~Q(\mathbbm{s}_{t+1},\mathbbm{a}_{t+1},\phi^\prime)$ represents the output of the target network.

\noindent
$\bullet$ Use an RMSprop method\cite{RMS} to update the parameters of DQN until the values of the loss function converge.

\noindent
$\bullet$ Replace the parameters of the target network by that of the prediction network after a certain number of training steps.

\begin{algorithm}[t!] {\small
	\caption{{Proposed DQNP-DRO Algorithm}}
	\LinesNumbered 
	
	\textbf{Initialization:} $t=0$, $\bm{S^{\prime}}\langle t\rangle\leftarrow \varnothing$, $L_j$, initialize DQN-PER network of worker $w_j$
	
	\For{episode = 1, 2, 3, ... }{
		\For{$t = 1, 2,...,\bm{T} $}{
			
			\If{$\bm{S^{\prime}_t}\neq \varnothing$}{
				$\nu_t(w_j)\leftarrow$ ST2M matching algorithm
			}
			
			\For{$\forall w_j \in \bm{W}$}{
				$L_j \leftarrow L_j \cup \nu_t(w_j)$
				
				Observe current state $\mathbbm{s}_{t}$
				
				Select action $\mathbbm{a}_{t}$ using $\varepsilon$-greedy strategy\tcp*{Online inference}
				
				\If{$w_j$ gives up task $s_i \in \varphi(w_j)$}{
					$\bm{S^{\prime}}\langle t\rangle \leftarrow \bm{S^{\prime}}\langle t\rangle \cup \{s_i\}$
				}
				
				Get reward $\mathbbm{r}_{t}$ and next state $\mathbbm{s}_{t+1}$
				
				Store $(\mathbbm{s}_{t}, \mathbbm{a}_{t}, \mathbbm{r}_{t}, \mathbbm{s}_{t+1})$ into memory buffer
			}
			
			\If{$\forall w_j \in \bm{W}$ complete all accepted tasks}{
				\textbf{Break}
			}
		}
		
		Update the DQN-PER network of each worker $w_j$ \tcp*{Offline training}
	}
	Continue until rewards converge
}
\end{algorithm}

{We next detail our proposed DQNP-DRO algorithm (given by Alg.~3):

\noindent
\textbf{Step 1. Initialization:} We first initialize the relevant parameters, where timeslot $t=0$, the set of tasks requiring temporary recruitment $\bm{S^{\prime}}\langle t\rangle$ is an empty set, and the DQN-PER network is initialized for each worker $w_j\in\bm{W}$.

\noindent
\textbf{Step 2. Temporary task-worker determination:} When there exist tasks with unsatisfying service quality (line 4, Alg.~3), we employ the ST-M2M matching algorithm to allocate temporary workers, yielding the task set \(\nu_t(w_j)\). These tasks are then appended to worker $w_j$’s pre-planned trajectory $L_j$ (line 7, Alg.~3). As ST-M2M is structurally similar to FT-M2M, we defer its details to Appx.~F.

\noindent
\textbf{Step 3. Online inference:} At each timeslot, each worker \(w_j\) observes its current state $\mathbbm{s}_t$, selects an action $\mathbbm{a}_t$ based on the $\varepsilon$-greedy strategy, and receives the reward $\mathbbm{r}_t$ as well as the updated state $\mathbbm{s}_{t+1}$ (lines 8–9, Alg.~3). The worker then stores the tuple $(\mathbbm{s}_t, \mathbbm{a}_t, \mathbbm{r}_t, \mathbbm{s}_{t+1})$ into its local memory buffer. This online inference process is repeated at each timeslot for task execution and decision-making.

\noindent
\textbf{Step 4. Offline training:} After completing one episode, each worker updates its DQN-PER network based on the stored experiences using PER. The training aims to minimize the loss function in (36), and continues until convergence of the reward function. Once training is complete, the parameters of the target network are synchronized with those of the prediction network.}

\section{Evaluation}
We conduct comprehensive evaluations to verify the effectiveness of our methodology, which are carried out via Python 3.9 with 13th Gen Intel Core i9-13900K*32 and NVIDIA GeForce RTX 4080. Recall the previous sections, our proposed FT-SMP$^3$ and ST-DP$^2$WR work together to facilitate a stagewise trading paradigm, which are collectively abbreviated as ``StagewiseTM$^3$atch'' in this section, for analytical simplicity.
\subsection{Simulation Settings}
{To ensure both realism and analytical flexibility, we adopt a hybrid evaluation methodology that combines real-world mobility data with controllable system variables.
Specifically, we utilize the real-world dataset that records taxi rides in Chicago from 2013 to 2016 across 77 community areas~\cite{dataset}. This dataset has been widely adopted in previous MCS studies~\cite{RWork_stable 5,taxidata, DP2} as a proxy for worker mobility, and thus offers realistic urban dynamics. Following~\cite{DP2}, we select the 77-th community area as our sensing region and use a total of 271259 mobility records\footnote{This dataset enables the construction of dynamic worker distributions and mobility patterns, providing realism in spatial-temporal task-worker interactions.}.}


{To approximate the behavior of workers, we extract features such as the pick-up locations of taxis as the initial worker positions (e.g., $l^{\text{w}}_j\langle 0\rangle$), and use the average taxi travel speed to estimate worker movement velocity (e.g., $v_j$). Furthermore, we map taxi fare data to the travel-related energy cost of workers (e.g., $e^{\text{m}}_j$). These mappings allow us to estimate worker cost parameters based on empirical data, rather than arbitrary values. Meanwhile, MCS tasks are randomly distributed within the designated sensing region to emulate spatial heterogeneity. Accordingly, key parameters are shown by Table 2. }
\begin{table}[htb]\vspace{-0.3cm}
	{{\scriptsize
		\caption{\footnotesize{Key simulation setting}}\vspace{-0.6cm} 
			\begin{center}
				\begin{tabular}{|c|c|c|c|}
					\hline
					\multicolumn{1}{|c|}{\textbf{ Parameter}} & \multicolumn{1}{c|}{\textbf{Value}}&\multicolumn{1}{|c|}{\textbf{ Parameter}} & \multicolumn{1}{c|}{\textbf{Value}} \\ \hline
$\bm{T}$ & 100& $\mu_1, \mu_2$ & \makecell[c]{150, 400\cite{Future Trading 2,QHY2}\\ (SNR $\approx$ [17, 23] dB)} \\
\hline
$\alpha_{i,j}$ & [0.005, 0.01]&$e_j^{\text{c}}$ & [0.002, 0.006] \$/timeslot\cite{canshu} \\
\hline
$\bm{K}$ & 20&$\tau_{i,j}^{\text{D}}$ & [1, 5] timeslots \\
\hline
$t_i^{\text{b}}$ & [1, 90]& $t_i^{\text{e}}$ & $[10+t_i^{\text{b}}, 99+t_i^{\text{b}}]$, s.t. $t_i^{\text{e}} \leq \bm{T}$\\
\hline
$B_i$ & [30, 45]\cite{DP2}&$f_j$ & [256, 512] Mb/timeslot\\
\hline
$d_i$ & [2, 4] Gb&$e_j^{\text{t}}$ & [450, 550] mW\cite{QHY2} \\

\hline
$\rho_1,...,\rho_5$ & 0.3 \cite{QHY2} &$p_{i,j}^{\text{Desire}}$&[8,10]\\
\hline
$Q_i^{\text{D}}$ & [10, 15]\cite{DP2}&$v_j$ & [100, 200] m/timeslot \\
\hline
$W$ & 6 MHz \cite{Future Trading 2}&$e_j^{\text{m}}$ & [0.02, 0.05] \$/timeslot\cite{canshu} \\
\hline
$\triangle p$ &0.5 &$\mathbb{V}_3$ & [0,1]  \\
\hline
				\end{tabular}
			\end{center}
	}\vspace{-0.1cm}}
\end{table}

Moreover, for definiteness and without loss of generality, we are inspired by the Monte Carlo method and conduct 100 simulations for each figure in this section.
\subsection{Benchmark Methods and Evaluation Metrics}
To conduct better evaluations, we involve comparable benchmark methods from diverse perspectives. Regarding \textit{the conventional spot trading mode}, we consider spot trading-enabled M2M matching (abbreviated as ConSpotTM$^3$atch), which borrows the idea from \cite{RWork_dynamic 1}, mapping each worker to a vector of tasks by analyzing the current network conditions. To underscore the \textit{importance of meeting diverse demands of spatial-temporal tasks}, we consider:
\textit{i)} Spot trading-enabled M2M matching without considering spatial-temporal demands of task (abbreviated as ConSpotTM$^3$atch\textbackslash TD), as inspired by \cite{RWork_stable 5}. This variant mirrors ConSpotTM$^3$atch but disregards the spatial and temporal constraints of tasks. \textit{ii)} The method similar to our StagewiseTM$^3$atch without considering spatial-temporal demands of tasks, in which the long-term contracts do not take into account penalties (abbreviated as StagewiseTM$^3$atch\textbackslash TD), as inspired by \cite{DP2}.
 To emphasize \textit{the importance of risk control in dynamic networks}, we introduce a benchmark called stagewise trading-enabled M2M matching without risk analysis (abbreviated as StagewiseTM$^3$atch\textbackslash Risk), which is analogous to our StagewiseTM$^3$atch but excludes risk analysis (i.e., constraints (22c), (23c), (23d), (33c), and (33d)). To explore \textit{the trade-off between time efficiency and resource allocation performance}, we introduce another two benchmarks under spot trading mode: service quality-preferred method (abbreviated as Quality\_P) and random matching (abbreviated as Random\_M), borrowing the ideas from \cite{baseline_random}, where tasks in Quality\_P prefer workers with the highest service quality under budget constraints; while Random\_M randomly selects workers under budget constraints.

To conduct quantitative evaluations, we also focus on crucial performance metrics detailed in the following: 

\noindent $\bullet$ \textbf{Service quality:} As one of the most important factors in MCS networks, service quality is calculated by the overall received service quality of tasks.

{\noindent $\bullet$ \textbf{Utility of tasks and workers:} The overall utilities received by tasks and workers.}

\noindent $\bullet$ \textbf{Social welfare:} The summation of utilities of both tasks and workers.

\noindent $\bullet$ \textbf{The proportion of tasks that meet their desired service quality (PoDSQ):} PoDSQ represents the ratio of tasks that meet their desired service quality to the overall tasks.

\noindent $\bullet$ \textbf{The proportion of tasks abandoned by workers (PoTAW):} PoTAW represents the ratio of tasks abandoned by workers to the overall tasks accepted by workers.

\noindent $\bullet$ \textbf{Running time (RT, ms):} The running time is obtained by Python on verison 3.9, reflecting time efficiency.

\noindent $\bullet$ \textbf{Number of interactions (NI):} Total number of interactions between tasks (owners) and workers to obtain matching decisions, reflecting the overhead on decision-making.

{\noindent $\bullet$ \textbf{Delay Incurred by Participant interactions (DIP):} Total delay caused by price negotiation and service coordination, simulated using real-world wireless latency distributions (uplink: [0.5, 11] ms; downlink: [0.5, 4] ms) as suggested by \cite{delay1} and \cite{delay2}.

\noindent $\bullet$ \textbf{Energy Consumption Incurred by Participant interactions (ECIP):} Total energy cost during interaction processes, calculated based on power consumption models of mobile worker devices ([0.2, 0.4] W) and task-owner devices ([6, 20] W), following the measurement framework in \cite{DP2} and \cite{power}.}

\subsection{Performance Evaluations}
\subsubsection{Service quality, utility, and social welfare}
\begin{figure*}[]{
	\vspace{-0.2cm}
	\centering
	\setlength{\abovecaptionskip}{-1 mm}
	\includegraphics[width=2\columnwidth]{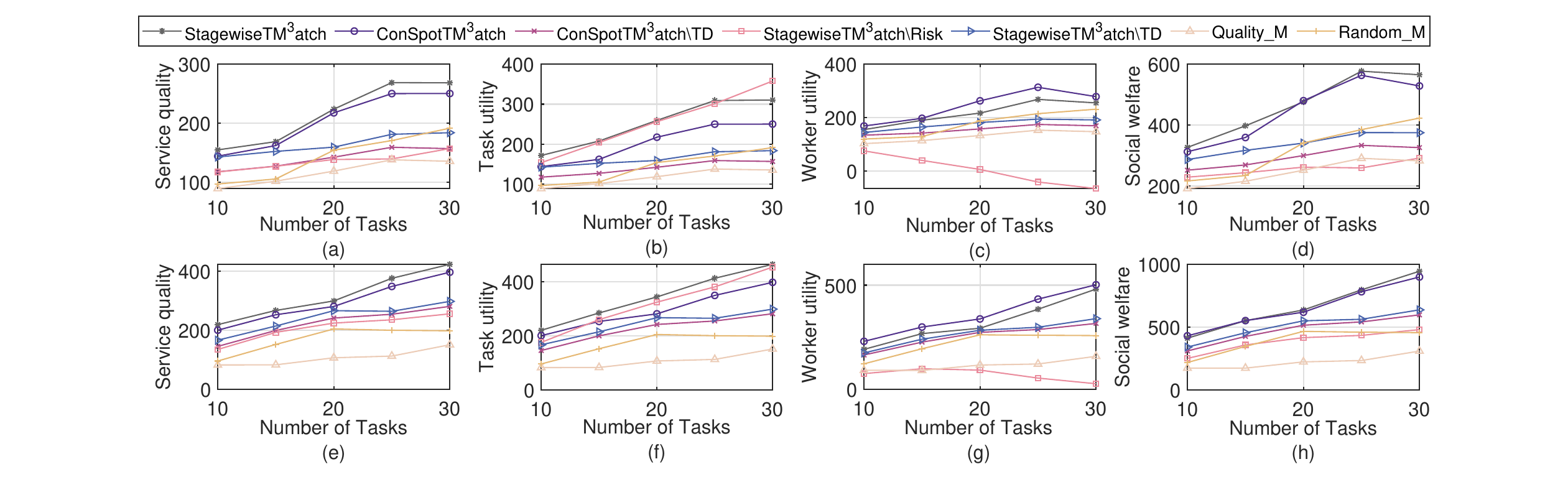}
	\caption{Performance comparisons in terms of overall utility, overall service quality and social welfare under different problem sizes, where (a)-(d) consider 15 workers, and (e)-(h) consider 30 workers in the network.}
	\vspace{-0.5cm}}
\end{figure*}
%
{We study the performance of our proposed StagewiseTM$^3$atch method in terms of overall utility of workers and tasks, service quality and social welfare, in Fig. 4.} To capture various problem scales, Figs. 4(a)-4(d) consider 15 workers, while Figs. 4(e)-4(h) consider 30 workers.

Fig. 4(a) illustrates that our proposed StagewiseTM$^3$atch method achieves the highest service quality, thanks to the stagewise mode that comprehensively considers task diversity and risks in dynamic networks. In contrast, ConSpotTM$^3$atch performs slightly worse, because it only employs spot trading mode. Furthermore, our StagewiseTM$^3$atch significantly outperforms the other four benchmark methods, due to the drawbacks in their fundamental matching logic. For instance, ConSpotTM$^3$atch\textbackslash TD and StagewiseTM$^3$atch\textbackslash TD neglect the diversity of task requirements, StagewiseTM$^3$atch\textbackslash Risk overlooks risk management, and Quality\_P focuses solely on task utility, thus significantly increasing the probability of failures in performing tasks. Additionally, due to the inherent randomness of Random\_M, it faces unsatisfactory service quality of tasks. Note that in Fig. 4(a), the curves of StagewiseTM$^3$atch and ConSpotTM$^3$atch slightly decline after 25 tasks, since the time window limits the number of tasks that workers can complete. The performance shown in Fig. 4(e) is similar to that of Fig. 4(a). However, after 25 tasks, Fig. 4(e) features more workers and can serve more tasks compared to Fig. 4(a). Therefore, as the number of tasks increases, the curves for StagewiseTM$^3$atch and ConSpotTM$^3$atch in Fig. 4(e) continue to rise.

We next show Fig. 4(b) and Fig. 4(f) to evaluate the performance on tasks' utility, consisting of received service quality and the remaining financial budget (further reflects the costs for purchasing services). Methods such as ConSpotTM$^3$atch, ConSpotTM$^3$atch\textbackslash TD, StagewiseTM$^3$atch\textbackslash TD, Quality\_P, and Random\_M do not incorporate compensations for tasks, thus enabling performance similar to that in Fig. 4(a) and Fig. 4(e). In contrast, StagewiseTM$^3$atch compensates those tasks with default workers, thereby allowing more budget to recruit temporary workers. Notably, StagewiseTM$^3$atch\textbackslash Risk neglects risk analysis for both workers and tasks, resulting in a high rate of trading failures. Consequently, workers are required to compensate task owners for contract breaches. This, in turn, leads to a scenario where StagewiseTM$^3$atch\textbackslash Risk provides relatively lower service quality for tasks while yielding relatively high task's utility.
As for workers, Fig. 4(c) and Fig. 4(g) reveal that ConSpotTM$^3$atch achieves the best performance on their utilities, which attributed to the consideration of task diversity and risk management, striving to complete more matched tasks successfully. Our StagewiseTM$^3$atch achieves slightly lower workers' utility than that of ConSpotTM$^3$atch, since workers in our consideration may have to make compensations to tasks. Besides, StagewiseTM$^3$atch\textbackslash Risk method accepts a high number of tasks without analyzing risks. Thus, more service failures can be incurred with the rising number of tasks, e.g., workers have to pay to those tasks they can not complete, due to factors such as delay events. Other benchmark methods, either failing to adequately consider task diversity or employing simplistic matching strategies, result in a significant reduction in the number of tasks completed by workers, thus affecting the utility of the workers.

In Fig. 4(d) and Fig. 4(h), both StagewiseTM$^3$atch and ConSpotTM$^3$atch achieve the best performance in terms of social welfare and exhibit comparable results.  
Notably, ConSpotTM$^3$atch requires making trading decisions for every practical transaction, leading to excessive delay on decision-making. As a result, it exhibits inferior performance compared to StagewiseTM$^3$atch in NI and RT, as illustrated in Fig. 6, with further details discussed in Sec. 6.3.3.  
Furthermore, thanks to the ST-M2M mechanism of StagewiseTM$^3$atch, when an MCS task is abandoned, the task provider can still initiate temporary recruitment to enhance PoDSQ (see Fig. 5(a)). Additionally, our robust risk management mechanism ensures superior PoTaW performance for StagewiseTM$^3$atch (see Fig. 5(b)). These advantages collectively lead to StagewiseTM$^3$atch outperforming ConSpotTM$^3$atch in both PoDSQ and PoTaW. {Besides, we also investigate the performance on average task's utility and service quality per task under varying numbers of tasks, as shown in Fig.~8 of Appx. G.2, to reveal our efficiency under limited resources, due to space limication.}


\subsubsection{PoDSQ and PoTAW}
\begin{figure}[]{
	\vspace{-0.1cm}
	\centering
	\setlength{\abovecaptionskip}{-1 mm}
	\includegraphics[width=1\columnwidth]{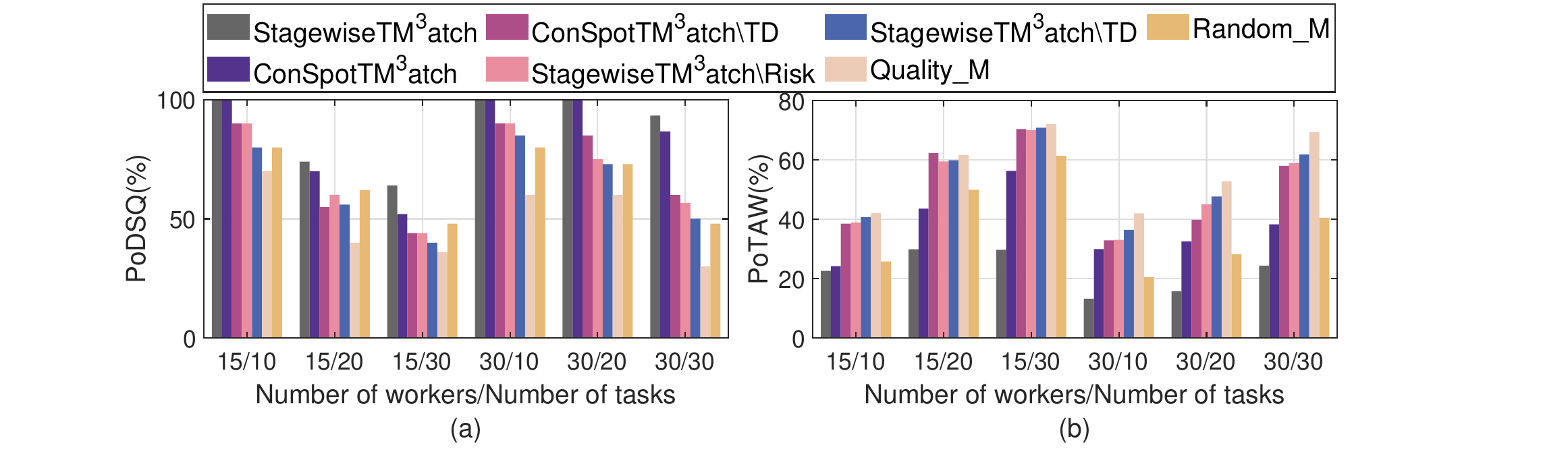}
	\caption{Performance comparisons in terms of PoDSQ and PoTAW.}
	\vspace{-0.5cm}}
\end{figure}
As the uncertain factors can prevent workers from completing assigned tasks, further leading to trading failures and unsatisfying service quality, we involve PoDSQ and PoTAW as two crucial significant factors, upon having various problem scales (see Fig. 5).

As shown in Fig. 5(a), our method StagewiseTM$^3$atch outperforms other methods on FoSDQ, as attributed to the well-designed stagewise mechanism, as well as the effective risk analysis, allowing our proposed method to surpass the service quality of the existing advanced ConSpotTM$^3$atch. In contrast, since ConSpotTM$^3$atch\textbackslash TD and StagewiseTM$^3$atch\textbackslash Risk overlook diverse task demands and risk controls, the values of FoSDQ of them stay lower. Furthermore, the inherent factor on greediness and randomness of Quality\_P and Random\_M bring them with less satisfactory outcomes, further reflecting poor trading experience of task owners.

	Since this paper considers real-world factors in MCS networks, e.g., an uncertain delay event and its duration, workers may be unable to catch the deadline of every task, thus forcing them to give up some assigned tasks. To evaluate the task abandon rate, which reflects the trading experience for both workers and task owners in the considered MCS networks, we conduct Fig. 5(b). In this figure, we can clearly see that our proposed StagewiseTM$^3$atch consistently outperforms other methods on PoTAW across various problem scales, maintaining a rate below 30\%. This impressive performance benefits by our consideration on risk management, to ensure that all matched tasks are controllable and achievable with an acceptable probability. ConSpotTM$^3$atch performs a lower PoTAW than ours due its unilateral concern on a single-stage mechanism, lacking the backup for tasks to recruit more workers, when some workers are facing difficulties to catch the assigned tasks on its half-way, and thus decide to give up. Moreover, StagewiseTM$^3$atch\textbackslash Risk lacks designs on risk control, ConSpotTM$^3$atch\textbackslash TD and StagewiseTM$^3$atch\textbackslash TD overlook diverse demands of tasks, while Quality\_P and Random\_M rely solely on greedy and random strategies for task assignment. These limitations can cause a large number of unachievable tasks accepted by workers, thereby leading to poor performance on PoTAW.
\subsubsection{RT, NI, DIP, and ECIP}
\begin{figure} {\centering 
	\subfigtopskip=2pt
	\subfigbottomskip=10pt
	\setlength{\abovecaptionskip}{-0.1cm}
	\subfigure[] {
		\label{fig:a} 
		\includegraphics[width=0.53\columnwidth]{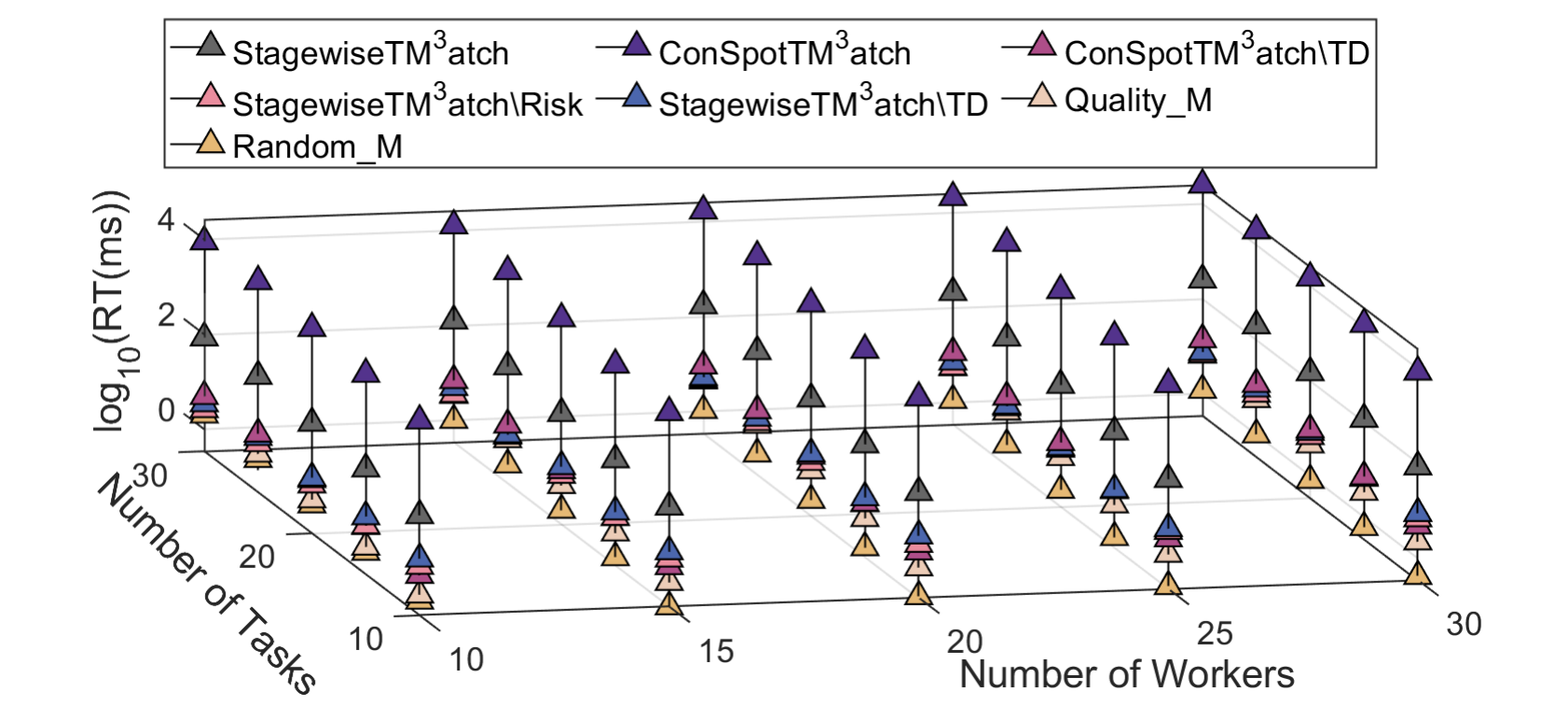} 
	} \hspace{-5.6mm}
	\subfigure[] { 
		\label{fig:b} 
		\includegraphics[width=0.46\columnwidth]{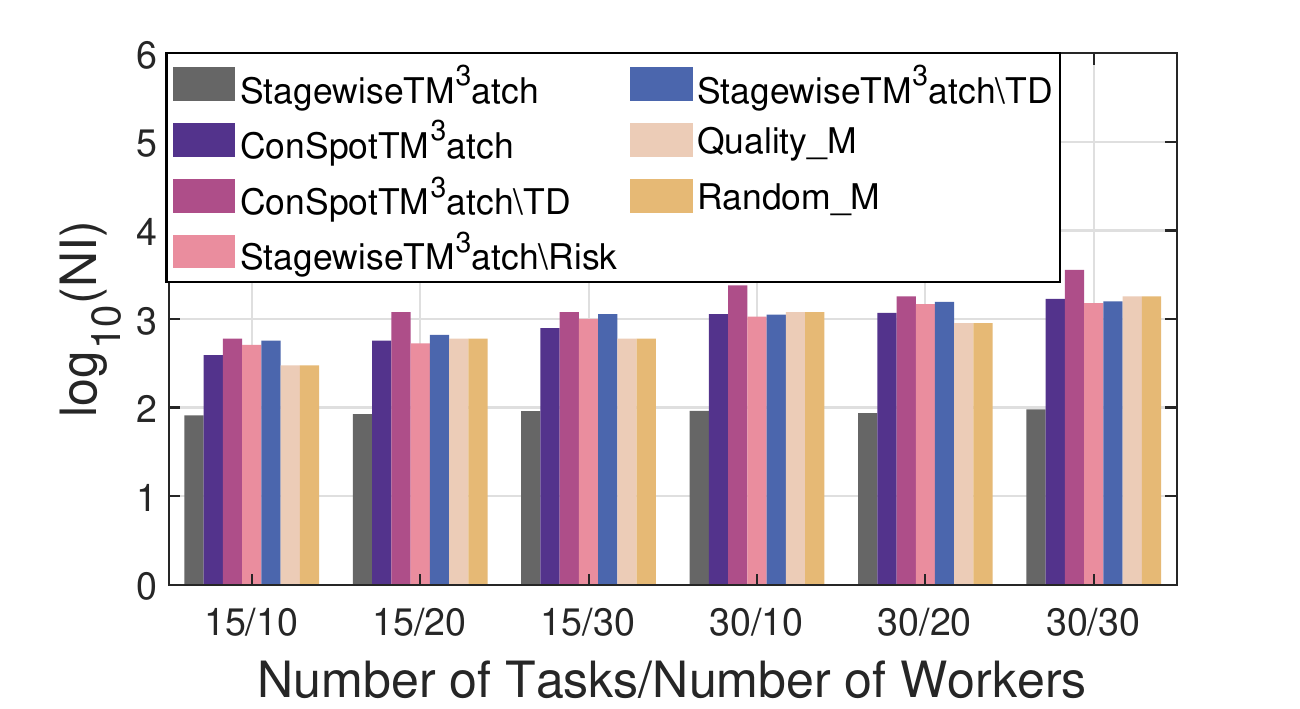} 
	}  
    	\hspace{-5.8mm}\subfigure[] {
		\label{fig:a} 
		\includegraphics[width=0.46\columnwidth]{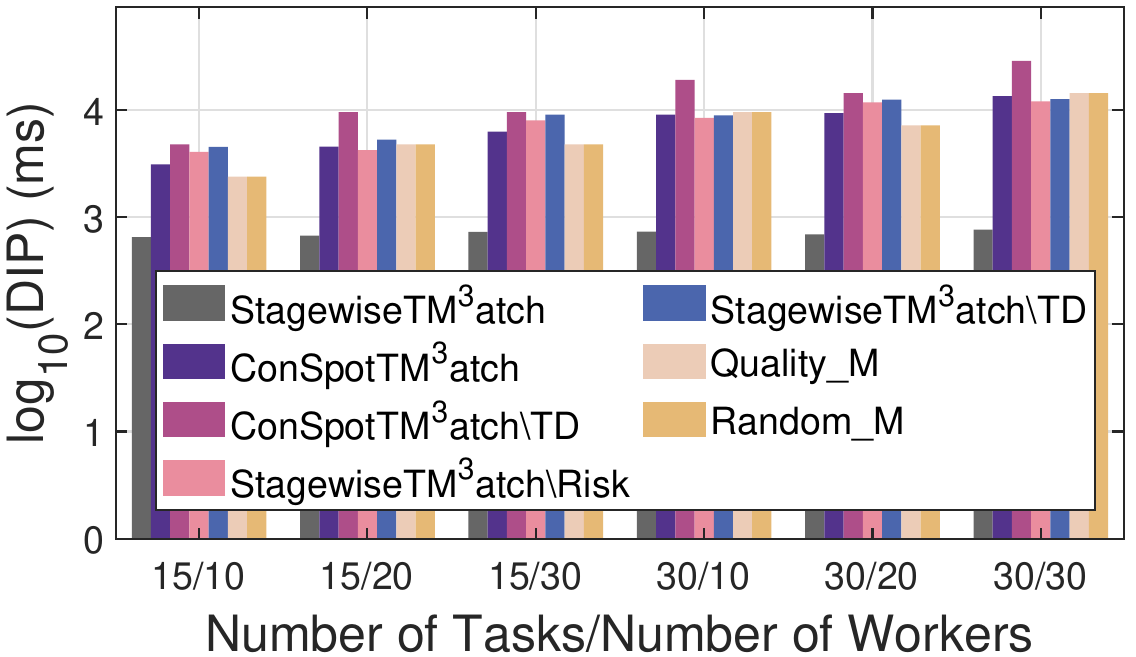} 
	} \hspace{-2mm}
	\subfigure[] { 
		\label{fig:b} 
		\includegraphics[width=0.46\columnwidth]{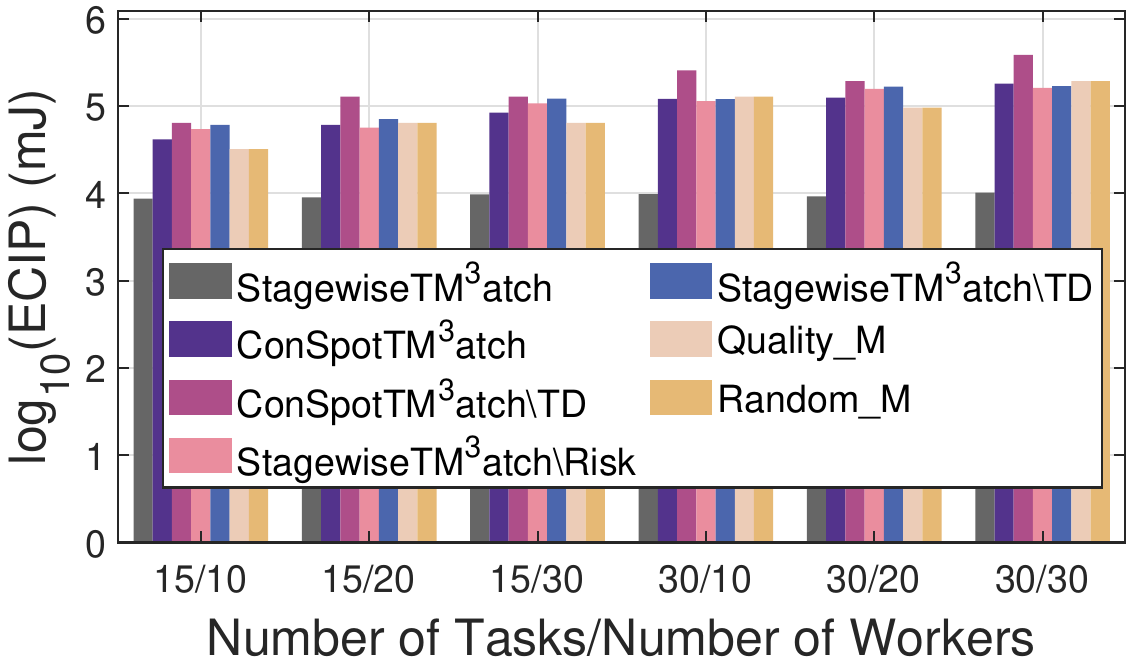} 
	} 
	\caption{Performance comparisons in terms of RT, NI, DIP, and ECIP.} 
	\label{fig} 
	\vspace{-0.5 cm}}
\end{figure}
{Efficiency—particularly in terms of time and energy—serves as a primary performance indicator in dynamic service trading markets. To evaluate this aspect, we examine Four key metrics: RT (e.g., the time consumed for decision-making), NI (e.g., the number of negotiation rounds regarding factors such as task/worker selection and trading price), {DIP (e.g., the delay incurred by participant interactions such as price negotiation and service coordination), and ECIP (e.g., the energy consumption during these decision-making procedures)}, as illustrated in Fig. 6. For enhanced clarity, a logarithmic scale is adopted in the figure to more clearly highlight performance gaps among different methods.}

As observed in Fig. 6(a), ConSpotTM$^3$atch always stays large because it matches tasks to workers during each transaction, based on the current network/market conditions, leading to excessive delay on decision-making. While the RT of our StagewiseTM$^3$atch remains stably lower than ConSpotTM$^3$atch, even having a rising number of tasks and workers. This is because each worker has pre-matched tasks and recommended paths thanks to the pre-signed long-term contracts in futures trading stage, whereas in the spot trading stage, facing the uncertainties of a dynamic networks, workers only need to make minor adjustments. Then, although ConSpotTM$^3$atch\textbackslash TD, StagewiseTM$^3$atch\textbackslash TD and StagewiseTM$^3$atch\textbackslash Risk outperform ConSpotTM$^3$atch and StagewiseTM$^3$atch in RT, they overlook crucial factors such as spatial-temporal demands of tasks and risks, thus undergoing unsatisfying performance on utility of workers and tasks, service quality, and social welfare. Furthermore, due to simple ideas adopted by Quality\_P and Random\_M, these methods exhibit far lower RT as compared to others. However, they exhibit poor performance on service quality (see Figs. 4-5). 

{Then, Fig. 6(b)–(d) present the performance for NI, DIP, and ECIP, capturing the decision-making overhead such as the energy and delay consumed during trading interactions between workers and task owners, upon having different numbers of tasks and workers. As shown in Fig. 6(b), our proposed StagewiseTM$^3$atch significantly outperforms all benchmark methods. This advantage primarily stems from the fact that, in ConSpotTM$^3$atch and ConSpotTM$^3$atch\textbackslash TD, participants are required to negotiate trading decisions (e.g., task allocation and service pricing) during each individual transaction, inevitably incurring considerable time and energy overhead. In particular, ConSpotTM$^3$atch\textbackslash TD and StagewiseTM$^3$atch\textbackslash TD exhibit the highest values of NI, as they fail to account for the heterogeneous demands of tasks, thereby intensifying competition among workers.
In contrast, StagewiseTM$^3$atch promotes participation in futures trading, which substantially improves decision-making efficiency. Furthermore, it surpasses StagewiseTM$^3$atch\textbackslash Risk, benefiting from its carefully designed risk management strategies.
Although Quality\_P and Random\_M avoid complex bargaining and risk management by enabling only basic interactions (i.e., workers express their willingness to participate, and tasks select preferred workers), such interactions still incur overhead—especially as the market scales up\footnote{It is important to note that for NI, our StagewiseTM$^3$atch outperforms Quality\_P and Random\_M because we count only the number of interaction rounds required for decision-making, not the time spent in each round. As StagewiseTM$^3$atch performs pre-determined planning for tasks and paths, it achieves lower NI despite incurring slightly higher RT c
ompared to Quality\_P and Random\_M.}.
Overall, StagewiseTM$^3$atch consistently demonstrates superior performance in terms of NI, even when compared to simplified benchmark methods.} {In Figs.~6(c)–6(d), DIP and ECIP exhibit trends consistent with NI, as higher numbers of negotiation rounds directly lead to greater delay and energy consumption. By minimizing NI through efficient futures trading and well designed risk management strategies, StagewiseTM$^3$atch maintains the best overall performance in both metrics.}

{Regarding simulations of the individual rationality of tasks and workers, as well as the impact of the asked payment decrement $\triangle p$ on the trade-off between convergence speed and matching outcomes, they have been moved to Appx.~G due to space limitation.}

{\subsection{Larger-Scale Problem Testing}
\begin{table}[htb]
	{\scriptsize 
		\caption{Performance Evaluations for Larger-Scale Problems (Alg 1: StagewiseTM$^3$atch, Alg 2: ConSpotTM$^3$atch, Alg 3: ConSpotTM$^3$atch\textbackslash TD, Alg 4: StagewiseTM$^3$atch\textbackslash Risk, Alg 5: StagewiseTM$^3$atch\textbackslash TD, Alg 6: Quality\_M, Alg 7: Random\_M)}
		\begin{center}
			\setlength{\tabcolsep}{0.5mm}{\vspace{-0.2cm}
				\begin{tabular}{|cccccccc|}
					\hline
					\textbf{Performance} & \textbf{Alg 1}& \textbf{Alg 2}&\textbf{Alg 3}&\textbf{Alg 4} & \textbf{Alg 5} & \textbf{Alg 6}& \textbf{Alg 7}\\ \hline
					Servise quality&1279.96 & 1202.66 & 510.42 & 416.58 & 539.91 & 749.17 & 494.56
					\\ \hline Task's utility &1568.73& 1202.66& 510.42& 1855.91& 539.91& 749.17& 494.56
					\\ \hline
					Worker's utility &1243.85& 1495.13& 456.58 & -1055.85& 526.58& 286.09& 903.66
					\\ \hline
					Social welfare &2812.58&2697.79&967.00 &800.06&1066.49&1035.26&1398.22\\ \hline
					NI &237.24&38204.51&48230.23&26439.08&25479.26&24000&24000
					\\ \hline
					RT (ms)  &8885.13&866600.26&1570.58&1402.59&1156.46&168.65&20.66	\\ \hline
			\end{tabular}}
	\end{center}}\vspace{-.5cm}
\end{table}
To facilitate comprehensive evaluations, we conduct large-scale assessments by considering an expanded set of 100 tasks and 150 workers. As shown in Table 3, our proposed StagewiseTM$^3$atch demonstrates superior performance in terms of service quality for MCS tasks and social welfare. Moreover, it shows commendable performance on tasks' utility, workers' utility, as well as time and energy overhead (i.e., RT and NI) associated with decision-making, making it a flexible and applicable reference for future dynamic and large-scale networks.}

\section{Conclusion}
We investigate a novel stagewise trading framework that integrates futures and spot trading to facilitate efficient and stable matching between diverse tasks and workers, in a dynamic and uncertain MCS network. In the former stage, we propose FT-SMP$^3$ for long-term task-worker assignment and path pre-planning for workers based on historical statistics and risk analysis. The following stage investigates ST-DP$^2$WR mechanism to enhance workers' and tasks' practical utilities by facilitating temporary worker recruitment. Theoretical exploration demonstrates that our proposed mechanisms can support essential properties including individual rationality, strong stability, competitive equilibrium, and weak Pareto optimality. Evaluations through real-world dataset demonstrate our superior performance in comparison to existing methods across various metrics. We are also interested in exploring smart contract design for service trading and potential collaborations among workers, as interesting future directions.

\newpage
\clearpage
\appendices
\setcounter{equation}{36}
\section{Key Notations}
Key notations in this paper are summarized in Table 4.

\begin{table*}[b!]
	{\footnotesize
		\caption{\footnotesize{Key notations}}\vspace{-0.3cm} 
			\begin{center}
				\begin{tabular}{|l|l|}
					\hline
					\multicolumn{1}{|l|}{\textbf{Notation}} & \multicolumn{1}{l|}{\textbf{Explanation}} \\ \hline
					$ \bm{S} $, $ \bm{W} $, $ \bm{S^{\prime}}\langle t\rangle$, $\bm{W^{\prime}}\langle t\rangle$ & The MCS sensing task set and worker set in FT-SMP$^3$, ST-DP$^2$WR \\ \hline
					$ s_i $, $ w_j $ & The $ i^\text{th} $ sensing task in $ \bm{S} $, $ \bm{S^{\prime}}\langle t\rangle$, the $ j^\text{th} $ worker in $ \bm{W} $, $\bm{W^{\prime}}\langle t\rangle$ \\\hline
					$ p_{i,j}^F $, $ p_{i,j}^S $ & Payment of task $ s_i $ offered to worker $ w_j $ in FT-SMP$^3$, ST-DP$^2$WR\\\hline
					$ q_{i,j}^F $, $ q_{i,j}^S $ & Compensation worker $w_j$ offered to task $ s_i $ in FT-SMP$^3$, ST-DP$^2$WR\\\hline
					$ t_i^{\text{b}}$, $t_i^{\text{e}} $ & the start and closing time of $s_i$\\\hline
					$ Q^{\text{D}}_i $ &Desired utility for task $ s_i $\\\hline
					$ B_{i} $& Budget for task $ s_i $\\\hline
					$l_i^{\text{s}}$, $l_j^{\text{w}}$& Location of task $s_i$ and worker $w_j$ \\\hline
					$d_i $& Data size that each worker needs to provide task $s_i$ \\\hline
					$e_j^{\text{c}}$, $e_j^{\text{D}}$, $e_j^{\text{t}}$, $e_j^{\text{m}}$& \makecell[l]{Cost consumed in each timeslot for data collection, delay event, data transmission, traveling to the target task}\\\hline
					$f_j$& \makecell[l]{The size of data collected by worker $w_j$ in each timeslot} \\\hline
					$v_j$& Movement speed of worker $ w_j $ \\\hline
					$\alpha_{i,j}$& \makecell[l]{Random variable describes encounter delay event of worker $ w_j $ during moving to task $s_i$ } \\\hline
					$\tau^{\text{delay}}$& Random variable describes uncertain duration of the delay event: \\\hline
					$\gamma_{i,j}$& \makecell[l]{Random variable describes time-varying channel qualities between worker $w_j$ and task $s_i$} \\\hline
					$ c_{i,j}\langle t^{\text{ini}} \rangle $	&\makecell[l]{Service cost of worker $ w_j $ complete task $ s_i $} \\\hline
					$ \tau_{i,j}\langle t^{\text{ini}} \rangle $&\makecell[l]{The number of timeslots required by worker \( w_j \) to complete task \( s_i \).} \\\hline
					 \( \beta_{i,j} \), \( \beta^S_{i,j} \) &\makecell[l]{Indicator of whether $w_j$ will break the contract with $s_i$ in FT-SMP$^3$, ST-DP$^2$WR} \\\hline
					\makecell[l]{$\varphi\left(s_i\right)$, $\varphi\left(w_j\right)$} & \makecell[l]{The set of workers recruited for processing task $ s_i $ and the set of tasks assigned to worker $ w_j $ in FT-SMP$^3$} \\\hline
					\makecell[l]{$\nu_t\left(s_i\right)$, $\nu_t\left(w_j\right)$} & \makecell[l]{The set of workers recruited for processing task $ s_i $ and the set of tasks assigned to worker $ w_j $ in ST-DP$^2$WR} \\\hline
					$ {U^{S}} $, $ \overline{U^{S}} $& Utility and expected utility of a task \\\hline
					$ {U^{W}} $, $ \overline{U^{W}} $& Utility and expected utility of a worker  \\\hline
					$ {R^{W}_1}$, $R_2^W$, $ {R^{S}} $& Risk associated with workers, tasks in FT-SMP$^3$
					\\
					\hline
					$ {R^{W\prime}_1}$, $R_2^{W\prime}$ & Risk associated with workers in ST-DP$^2$WR
					
					\\
					\hline
				\end{tabular}
			\end{center}
	}\vspace{-0.6cm}
\end{table*}

\section{Derivations Associated with FT-SMP$^3$}
\subsection{Derivations related to workers}
\noindent\textbf{Mathematical expectation of $c_{i,j} \langle t^{\text{ini}} \rangle$.} Random variable $\alpha_{i,j}$ follows a Bernoulli distribution $\alpha_{i,j}\sim {\bf B}\left\{(1,0),(a_{i,j},1-a_{i,j})\right\}$, with its expectation calculated by $\text{E}[\alpha_{i,j}]=1\times a_{i,j} + 0\times (1-a_{i,j})=a_{i,j} $. Moreover, $\tau^{\text{delay}}$ obeys an uniform distribution, e.g., \(\tau^{\text{delay}} \sim \mathbf{U}(t^{\text{min}}, t^{\text{max}})\), while $ \text{E}[\tau^{\text{delay}}]$ can accordingly be expressed as $\frac{t^{\text{min}}+t^{\text{max}}}{2}$. Similarly, the expected value of \(\gamma_{i,j}\) can be expressed as $\frac{\mu_1+\mu_2}{2}$. Thus, we have the value of $c_{i,j}\langle t^{\text{ini}} \rangle$ as
\begin{equation}\label{key}
	\begin{aligned}
		\text{E}[c_{i,j}\langle t^{\text{ini}} \rangle]=&c_{i,j}^{\text{move}}\langle t^{\text{ini}} \rangle+c_{i,j}^{\text{D}}+c_{i,j}^{\text{sense}}+c_{i,j}^{\text{tran}}\\=&c_{i,j}^{\text{move}}\langle t^{\text{ini}}\rangle+c_{i,j}^{\text{sense}}+\text{E}[c_{i,j}^{\text{D}}]+\text{E}[c_{i,j}^{\text{tran}}]
		\\=&c_{i,j}^{\text{move}}\langle t^{\text{ini}}\rangle+c_{i,j}^{\text{sense}}+\frac{\mathbb{V}_1e_j^{\text{t}} d_i}{W\text{log}_2 \left(1+e_j^{\text{t}} \text{E}[\gamma_{i,j}]\right)}+\\&\sum_{n=1}^{\tau_{i,j}^{\text{move}}\langle t^{\text{ini}} \rangle}\text{E}[\alpha_{i,j}]\text{E}[\tau^{\text{delay}}_n]
		\\=& c_{i,j}^{\text{move}}\langle t^{\text{ini}}\rangle+c_{i,j}^{\text{sense}}+\frac{\mathbb{V}_1e_j^{\text{t}} d_i}{W\text{log}_2 \left(1+e_j^{\text{t}} \frac{\mu_1+\mu_2}{2}\right)}+\\&\sum_{n=1}^{\tau_{i,j}^{\text{move}}\langle t^{\text{ini}} \rangle}a_i\frac{t^{\text{min}}+t^{\text{max}}}{2}
	\end{aligned}
\end{equation}

\noindent\textbf{Mathematical expectation of $c^{\text{part}}_{i,j} \langle t^{\text{ini}} \rangle$.}
During the process when worker \(w_j\) is performing task \(s_i\), each timeslot can describe different scenarios due to delay events. Analyzing each scenario results in significant computational burdens, particularly when dealing with a large number of timeslots, rendering effective computation impractical. To facilitate our analysis and ensure the individual rationality of workers, we approximate \(E[c^{\text{part}}_{i,j}]\) to \(E[c_{i,j}]\) as \(E[c^{\text{part}}_{i,j}] < E[c_{i,j}]\). This approximation ensures that workers can better manage the risk on obtaining undesirable utility when determining their asked payment.

\noindent\textbf{Mathematical expectation of $\beta_{i,j}$.}
	We use \( \bm{\mathcal{C}} \) to represent the set of possible \textbf{t}ask \textbf{c}ompletion \textbf{s}cenario (TCS). Specifically, different TCSs involve different ways that a task $t_i$ can be completed. For example, a worker moves to $t_i$ from its current location need two hops, each taking one timeslot. Correspondingly, the data collection also requires one timeslot. The current timeslot in this example is assumed to be four timeslots away from the closing time of task \( t_i \). Therefore, there are three possible TCS scenarios: 

\textit{i)} The worker encounters no delay event: The worker's movement takes two timeslots and its data collection process takes one timeslot; 

\textit{ii)} The worker encounters a delay event during the first hop, which costs one extra timeslot. During the second hop, it encounters no delay event and its data collection process takes one timeslot.

\textit{iii)} The worker encounters no delay event during the first hop. During the second hop, it encounters a delay event that costs one extra timeslot, and its data collection process takes one timeslot.

Thus, we define \( \mathcal{M}_n = \left\{ X, Y_1, Y_2, \ldots, Y_X , X', Z \right\} \) represent a TCS. Due to the randomness of uncertainties, each TCS can be described as the mobile worker encountering \( X \) delay events, with each delay event consuming \( Y_1, Y_2, \ldots, Y_X \) timeslots, and the number of times no delay events occur being \( X' \). The communication quality in this scenario is \( Z \). Due to \(\tau^{\text{delay}} \sim \mathbf{U}(t^{\text{min}}, t^{\text{max}})\), the value of $\text{Pr}(\tau^{\text{delay}}=\tau^{\prime\prime})$ is $\frac{1}{t^{\text{max}}-t^{\text{min}}+1}$, where $t^{\text{min}}\le \tau^{\prime\prime} \le t^{\text{max}} $. Similarly, the value of \(\text{Pr}(\gamma_{i,j}=\gamma_{i,j}^\prime)\) can be expressed as $\frac{1}{\mu_2-\mu_1+1}$, where $\mu_1 \le \gamma_{i,j}^\prime \le \mu_2 $. Therefore, we can calculate the probability of \( \beta_{i,j} \) as
	\begin{equation}{
			\begin{aligned}
				&\text{Pr}(\beta_{i,j}=1)=\\&	\sum_{\mathcal{M}_n\in\bm{\mathcal{C}}}\text{Pr}(\gamma_{i,j}=Z) (\text{Pr}(\alpha_{i,j}=1))^X(\text{Pr}(\alpha_{i,j}=1))^{X^\prime}\\&\times\prod_{\tau^{\prime\prime}=Y_1}^{Y_X}\text{Pr}(\tau^{\text{delay}}=\tau^{\prime\prime}) \\&=\sum_{\mathcal{M}_n\in\bm{\mathcal{C}}}\frac{(a_{i,j})^X(1-a_{i,j})^{X^\prime}}{(\mu_2-\mu_1+1)(t^{\text{max}}-t^{\text{min}}+1)^X}
		\end{aligned}}
	\end{equation}
\noindent\textbf{Derivation related to (10).}
The $ \text{E}\left[\beta_{i,j}\right] $, $ \text{E}\left[c_{i,j}\langle t_{i,j}^{\text{ini}} \rangle\right] $, and $ \text{E}\left[c_{i,j}^{\text{part}}\langle t_{i,j}^{\text{ini}} \rangle\right] $ of (10) are given by (37) and (38), respectively.

\noindent\textbf{Derivation related to (23c).}
In optimization problem $\bm{\mathcal{F}^W}$ given by (23), constraint (23c) represents a probabilistic expression, making its close form nontrivial to be obtained. To resolve such an issue, we transform (23c) into a tractable one by exploiting a set of bounding techniques. First, (23c) can be rewritten as
\begin{equation}
	R^{W}_1\left( w_j,s_i \right)\leq \rho_2 \Rightarrow \text{Pr}\left(U^W (w_j,s_i)\ge u_{\text{min}}\right) > 1-\rho_2.
\end{equation}
To obtain a tractable form for (39), we can have the upper-bound of its left-hand side by using Markov inequality, as the following (40).
\begin{equation}
	\text{Pr}\left(U^W (w_j,s_i)\ge u_{\text{min}}\right)\ge \frac{\text{E}[U^W (w_j,s_i)]}{u_{\text{min}}}
\end{equation}
where the value of $\text{E}[U^W (w_j,s_i)]$ is given by (10). 

\noindent\textbf{Derivation related to (23d).} Constraint (23d) can be rewritten as
\begin{equation}
	\begin{aligned}
			&R^{W}_2\left( w_j,s_i \right) =\text{Pr}\left(\beta_{i,j}=0\right)\leq \rho_3 \\&\Rightarrow \text{Pr}\left(\beta_{i,j}=1\right) > 1-\rho_3,
	\end{aligned}
\end{equation}
where $\text{Pr}\left(\beta_{i,j}=1\right)$ is given by (38).

\subsection{Derivations related to tasks}
\noindent\textbf{Mathematical expectation of $AGE_{i,j}$.}
\begin{equation}
	{\begin{aligned}
		\text{E}[AGE_{i,j}]&=\frac{\text{E}[age_{i,j}]}{\tau_{i,j}^{\text{sense}}+\text{E}[\tau_{i,j}^{\text{tran}}]}\\&=\frac{\sum_{t^\prime=t_{i,j}^{\text{gen}}}^{t_{i,j}^{\text{gen}}+\tau_{i,j}^{\text{sense}}+\text{E}[\tau_{i,j}^{\text{tran}}]}\left(t^\prime-t_{i,j}^{\text{gen}}\right)}{\tau_{i,j}^{\text{sense}}+\text{E}[\tau_{i,j}^{\text{tran}}]}\\&=\frac{\tau_{i,j}^{\text{sense}}+\text{E}[\tau_{i,j}^{\text{tran}}]+1}{2},
	\end{aligned}}
\end{equation}
where $\text{E}[\tau_{i,j}^{\text{tran}}]= \frac{d_i}{W\text{log}_2 \left(1+e_j^{\text{t}} \text{E}[\gamma_{i,j}]\right)}=\frac{d_i}{W\text{log}_2 \left(1+e_j^{\text{t}} \frac{\mu_1+\mu_2}{2}\right)}$.

\noindent\textbf{Derivation related to (15).}
The $ \text{E}[AGE_{i,j}] $ and $ \text{E}[\beta_{i,j}] $ of (15) are given by (38) and (42), respectively.

\noindent\textbf{Derivation related to (22c).}
In optimization problem $\bm{\mathcal{F}^S}$ given by (22), constraint (22c) represents a probabilistic expression, making its close form nontrivial to be obtained. To resolve such an issue, we transform (22c) into a tractable one by exploiting a set of bounding techniques. First, (22c) can be rewritten as
\begin{equation}
	R^{S}_1\left( w_j,s_i \right)\leq \rho_1 \Rightarrow \text{Pr}\left(Q(s_i,\varphi(s_i))\ge Q_i^{\text{D}}\right) > 1-\rho_1.
\end{equation}
To obtain a tractable form for (43), we can have the upper-bound of its left-hand side by using Markov inequality, as the following (44).
\begin{equation}
	\text{Pr}\left(Q(s_i,\varphi(s_i))\ge Q_i^{\text{D}}\right)\ge \frac{\text{E}[Q(s_i,\varphi(s_i))}{Q_i^{\text{D}}},
\end{equation}
where $\text{E}[Q(s_i,\varphi(s_i))]=\sum_{w_j\in \varphi(s_i)} \frac{1}{\text{E}[AGE_{i,j}]}$, and the value of $\text{E}[AGE_{i,j}]$ is given by (42). 

{\section{Computational Complexity Analysis of FT-SMP$^3$}
The primary computational complexity of our proposed FT-SMP$^3$ framework stems from the multi-round, many-to-many matching process conducted in the futures trading stage. In each round, workers dynamically update their task-path preferences, while tasks select optimal subsets of workers under budget constraints. The core computational components include:

\noindent \textit{i) Worker-side preference computation:} 
Before each matching round begins, each worker $w_j$ executes the EACO-P$^3$TR algorithm to search for a task sequence that maximizes its expected utility under spatiotemporal constraints. Let $|\bm{S}|$ denote the number of tasks, $|\bm{W}|$ is the number of workers, $\mathbb{K}$ refers to the number of ants per worker, and $iter_{\text{max}}$ represents the number of iterations. The complexity of this step per round is $\mathcal{O}(|\bm{W}| \cdot iter_{\text{max}} \cdot \mathbb{K} \cdot |\bm{S}|^2)$;

\noindent \textit{ii) Task-side worker selection:} 
Each task $s_i$ receives applications from a candidate set ${\widetilde{\mathbb{Y}}}(s_i)$ and solves a 0-1 knapsack problem to maximize its expected utility within budget $B_i$. The complexity per task is $\mathcal{O}(|{\widetilde{\mathbb{Y}}}(s_i)| \cdot B_i)$.
Summing over all tasks, the total cost per round is:
$\mathcal{O}\left( \sum_{s_i \in \bm{S}} |{\widetilde{\mathbb{Y}}}(s_i)| \cdot B_i \right)$; 

\noindent \textit{iii) Matching rounds:} 
The above processes are repeated for at most $\mathcal{X}$ rounds until convergence, i.e., when no further payment updates or selection changes occur.

\noindent \textbf{Overall Complexity:} 
The total time complexity of FT-SMP$^3$ is $\mathcal{O}\left(\mathcal{X} \cdot \left(|\bm{W}| \cdot iter_{\text{max}} \cdot \mathbb{K} \cdot |\bm{S}|^2 + \sum_{s_i \in \bm{S}} |{\widetilde{\mathbb{Y}}}(s_i)| \cdot B_i \right)\right)$.
This complexity is tractable under practical MCS system scales, especially considering that the futures matching stage operates offline and can leverage parallel and cloud computing resources.

\noindent \textbf{Worst-case Upper Bound:} 
In the worst-case scenario where every task receives applications from all workers (i.e., $|{\widetilde{\mathbb{Y}}}(s_i)| = |\bm{W}|$) and all tasks share a maximum budget $B_{\max} = \max\{B_i\},~\forall s_i \in \bm{S}$, the total complexity is bounded by:
$\mathcal{O}\left(\mathcal{X} \cdot |\bm{W}| \cdot \left(iter_{\text{max}} \cdot \mathbb{K} \cdot |\bm{S}|^2 + |\bm{S}| \cdot B_{\max} \right)\right)$.}

\section{Property Analysis on FT-M2M Matching of FT-SMP$^3$}
We next examine the aforementioned property of FT-M2M matching, as outlined below
\begin{lem}
	(Convergence of FT-M2M matching) Alg. 2 converges within finite rounds.
\end{lem}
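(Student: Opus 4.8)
The plan is to exhibit a bounded, monotone quantity that must strictly change whenever the algorithm starts another round, so that only finitely many rounds are possible. The natural candidate is the collection of asked payments $\{p_{i,j}\langle k\rangle\}$. First I would read off from the update rule (27) together with lines~13--16 of Alg.~2 that, for every worker--task pair $(w_j,s_i)$, the sequence $k\mapsto p_{i,j}\langle k\rangle$ is \emph{monotonically non-increasing}: in each round a payment is either left unchanged (when $w_j$ is accepted, or when its price already equals $c_{i,j}$, or when the risk constraint (23c) blocks a further cut) or else reduced by the fixed step $\mathrm{\Delta}p_j>0$ (with a possible final truncated step that pins it exactly at $c_{i,j}$). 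Second, the $\max$ in (27) guarantees the lower bound $p_{i,j}\langle k\rangle\ge c_{i,j}\ge 0$ for all $k$, so every payment sequence is bounded below and, once it reaches $c_{i,j}$, the guard $p_{i,j}\langle k\rangle>c_{i,j}$ in line~13 is violated and the payment can never change again.

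Combining monotonicity, the fixed decrement, and the lower bound, each pair $(w_j,s_i)$ admits at most $\lceil (p^{\text{Desire}}_{i,j}-c_{i,j})/\mathrm{\Delta}p_j\rceil$ strict reductions (interpreted as zero when $p^{\text{Desire}}_{i,j}\le c_{i,j}$, since no cut is ever triggered). Summing over the finitely many pairs yields a finite global budget
\begin{equation*}
N^{\max}=\sum_{w_j\in\bm{W}}\sum_{s_i\in\bm{S}}\left\lceil \frac{p^{\text{Desire}}_{i,j}-c_{i,j}}{\mathrm{\Delta}p_j}\right\rceil
\end{equation*}
on the total number of payment reductions that can ever occur across the whole execution. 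The next step is to tie this count to the round counter: inspecting lines~17--18, the algorithm advances to round $k+1$ (keeping some ${flag}_j$ equal to $1$) \emph{only if} at least one payment strictly changed during the current round; otherwise $\Sigma_{w_j\in\bm{W}}{flag}_j=0$ and the outer loop halts, exactly the termination condition described in Step~5. Hence each additional round consumes at least one of the at most $N^{\max}$ available reductions, so the number of rounds is bounded by $N^{\max}+1$, which is finite.

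The step I expect to require the most care is reconciling this per-pair argument with the fact that every worker \emph{re-runs} EACO-P$^3$TR (Alg.~1) at the beginning of each round, so the interest set $\mathbb{Y}(w_j)$---and thus which pairs are even eligible for a price cut in a given round---can change from round to round. The key point to make rigorous is that monotonicity and boundedness are properties of each \emph{individual} payment variable $p_{i,j}$ irrespective of whether $w_j$ currently proposes to $s_i$: a payment not updated in a round simply carries over unchanged, preserving the non-increasing structure, so the aggregate budget $N^{\max}$ still upper-bounds total activity and the dynamic preferences cannot reset or inflate any price. A minor secondary check is that the inner proposal/selection loop (lines~8--11), which solves each task's 0--1 knapsack by DP over its finite candidate set $\widetilde{\mathbb{Y}}(s_i)$, terminates within a single round; this is immediate since the candidate set is processed once. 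Together these observations establish that Alg.~2 converges within finitely many rounds.
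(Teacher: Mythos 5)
Your proof is correct and rests on the same core idea as the paper's: the asked payments form monotonically non-increasing sequences bounded below by the costs, each round that continues must consume at least one fixed-size decrement, so only finitely many rounds are possible. The paper's own proof is only a two-sentence sketch of this, whereas you make the bound $N^{\max}$ explicit, tie it to the round counter via the ${flag}_j$ logic, and correctly handle the subtlety that re-running EACO-P$^3$TR each round changes the proposal sets without affecting the per-pair monotonicity; your write-up is therefore a strictly more rigorous version of the intended argument.
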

\begin{proof}
	We utilize DP algorithm to transform the problem into a two-dimensional 0-1 knapsack problem as shown in Alg. 2. After a finite number of rounds, each worker's asked payment can either be accepted or the risk of obtaining expected utility will be unacceptable, supporting the property of convergence.
\end{proof}

\begin{lem}(Individual rationality of FT-M2M matching) Our proposed M2M matching satisfies the individual rationality of all tasks and workers. \end{lem}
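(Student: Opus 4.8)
The plan is to verify each of the four conditions listed in Definition~5 by exhibiting the exact step of Alg.~2 (and its subroutine Alg.~1) that enforces it, so that feasibility holds \emph{by construction} at termination. Since Lemma~1 guarantees that the procedure halts after finitely many rounds, it suffices to show that the returned matching $\varphi$ respects (22b), (22c), (23c), and (23d) for every matched task and worker.

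First I would dispatch the two task-side requirements. For the budget condition (22b), recall that in the worker-selection step (line~10, Alg.~2) each task $s_i$ chooses $\mathbb{Y}(s_i)\subseteq\widetilde{\mathbb{Y}}(s_i)$ as the solution of a $0$--$1$ knapsack problem whose capacity is precisely $B_i$; hence $\sum_{w_j\in\mathbb{Y}(s_i)}p^F_{i,j}\le B_i$ lies in the feasible region of the knapsack, and this is preserved under the final assignment $\varphi(s_i)\leftarrow\mathbb{Y}(s_i)$. For the quality-risk condition (22c), the risk-analysis step (Step~6) tests whether $R^S(s_i,\varphi(s_i))\le\rho_1$ holds; a task for which this threshold fails abandons trading in the futures stage, yielding the empty matching $\varphi(s_i)=\varnothing$, which is vacuously individually rational. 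Thus every task that actually signs a contract satisfies (22c).

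Next I would treat the two worker-side risk constraints. The crucial observation is that the preference list $L^{\text{best}}_j$ returned by EACO-P$^3$TR (Alg.~1) is built only from the feasible set $J_{\mathbbm{k}}(m)$, whose members are required to satisfy (23c) and (23d) at the current asked payments. Because Alg.~2 recomputes $L^{\text{best}}_j$ at the start of every round (line~4) using the up-to-date payments, and because $\varphi(w_j)\leftarrow\mathbb{Y}(w_j)=L^{\text{best}}_j$ at termination, the tasks finally matched to $w_j$ are exactly those that pass the (23c)--(23d) filter under the converged payments. I would also note the consistency of the update rule (27): a reduction is triggered (line~14) only while (23c) still holds and never drives $p^F_{i,j}$ below $c_{i,j}$, so that (23b) is maintained and the monotone price descent cannot inject an infeasible contract into the output.

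The main obstacle I anticipate is the coupling between the iterative price reductions and the utility-risk constraint (23c): since $R^W_1$ grows as $p^F_{i,j}$ shrinks (via the Markov bound (40), $\text{E}[U^W]/u_{\min}>1-\rho_2$), one must argue that feasibility is not silently lost between the pre-reduction check and the smaller payment carried into the next round. I would close this gap by appealing to the re-filtering in Alg.~1: whatever payment enters round $k+1$, the feasible set $J_{\mathbbm{k}}(m)$ of that round re-imposes (23c) and (23d), so any task that has become infeasible is automatically dropped from $L^{\text{best}}_j$ before it can survive into the final matching. Combining the task-side and worker-side arguments yields all four conditions of Definition~5, establishing individual rationality.
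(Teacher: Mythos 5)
Your proof is correct and follows essentially the same route as the paper's own argument: the budget condition (22b) is enforced by the knapsack capacity $B_i$, the quality-risk condition (22c) by the final risk-analysis step, and the worker-side constraints (23c)--(23d) by the feasibility filtering built into the preference-list construction. Your treatment is in fact more careful than the paper's (which simply invokes ``stringent risk control measures''), notably in explicitly closing the gap between successive price reductions and the re-imposition of (23c) via the per-round recomputation of $J_{\mathbbm{k}}(m)$.
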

\begin{proof}
	The individual rationality of each task and worker is proved respectively, as the following: 
	
\noindent \textbf{Individual rationality of tasks.} For each task $ s_i\in\bm{S} $, since the designed 0-1 knapsack problem regards $ B_i $ as the corresponding capacity, the overall payment of $ s_i $ will thus not exceed $ B_i $. Moreover, thanks to the factor of risk analysis and control of possible risk, e.g., constraint (22c), each task $ s_i $ can decide whether to sign long-term contracts with the matched workers under an acceptable risk, which thus ensures that the desired service quality of each task, at a high probability. 
	
\noindent \textbf{Individual rationality of workers.} Thanks to stringent risk control measures, each worker thoroughly evaluates the risks on failing to complete task and obtaining undesired expected utility before sending requests to tasks on their preference list. These risks are kept within reasonable range; otherwise, the worker will not sign long-term contracts with these tasks.
	
	As a result, our proposed M2M matching in the futures market is individual rational.
\end{proof}

\begin{lem} (Fairness of FT-M2M matching) Our proposed FT-M2M matching guarantees fairness in the futures market. \end{lem}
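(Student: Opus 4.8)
The plan is to argue by contradiction, exploiting the deferred-acceptance logic that underlies Alg.~2. Suppose the matching $\varphi$ returned by FT-M2M admits a Type~1 blocking coalition $(w_j;\mathbb{S})$. By condition~(18) we have $\overline{U^W}(w_j,\mathbb{S})>\overline{U^W}(w_j,\varphi(w_j))$, so $w_j$ strictly prefers $\mathbb{S}$ to the task vector it actually holds. If $\mathbb{S}\subseteq\varphi(w_j)$, this already contradicts the utility-maximality of $\varphi(w_j)=L^{\text{best}}_j$ returned by EACO-P$^3$TR at line~4, since the algorithm could simply drop the surplus tasks and report the more profitable sub-vector. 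Hence $\mathbb{S}$ must contain a task $s_i\notin\varphi(w_j)$, i.e.\ a task that $w_j$ ultimately failed to secure.

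First I would establish payment monotonicity from the update rule~(27): each asked payment $p_{i,j}\langle k\rangle$ is non-increasing in $k$ and stabilizes only when it reaches the cost floor $c_{i,j}$ or the risk constraint~(23c) would be violated. Consequently, as long as $w_j$ keeps $s_i$ on its preference list, by the terminal round it is proposing to $s_i$ at its lowest admissible payment, placing $w_j\in\widetilde{\mathbb{Y}}(s_i)$ at that round. Next I would invoke the optimality of the task-side selection: $s_i$ fixes $\varphi(s_i)$ by solving the $0$-$1$ knapsack at line~10, which maximizes $\overline{U^S}(s_i,\cdot)$ over all budget-feasible subsets of $\widetilde{\mathbb{Y}}(s_i)$. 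Because $w_j$ belongs to this candidate set, every set of the form $\{\varphi(s_i)\backslash\varphi'(s_i)\}\cup\{w_j\}$ that respects $B_i$ is one of the alternatives the knapsack already weighed against the chosen $\varphi(s_i)$.

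The contradiction then falls out immediately. Knapsack optimality yields $\overline{U^S}(s_i,\varphi(s_i))\ge\overline{U^S}\big(s_i,\{\varphi(s_i)\backslash\varphi'(s_i)\}\cup\{w_j\}\big)$ for every admissible eviction set $\varphi'(s_i)$, whereas blocking condition~(19) demands the strict reverse inequality for at least one such set. No configuration can satisfy both, so the assumed Type~1 coalition cannot exist and $\varphi$ is fair.

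I expect the main obstacle to be the bookkeeping that upgrades the single-round knapsack comparison into a statement about the \emph{terminal} matching. The difficulty is twofold: (i) showing that a worker who genuinely wants $s_i$ — in the sense forced by preferring $\mathbb{S}$ — really does remain in $\widetilde{\mathbb{Y}}(s_i)$ at its lowest payment rather than having dropped $s_i$ at an earlier round, which is exactly where the payment-monotonicity argument must interlock with the fact that EACO-P$^3$TR re-inserts $s_i$ whenever doing so is utility-improving; and (ii) guaranteeing that the alternative set $\{\varphi(s_i)\backslash\varphi'(s_i)\}\cup\{w_j\}$ is itself budget-feasible under $w_j$'s terminal payment, so that it lies in the knapsack's feasible region and the optimality comparison genuinely applies. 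Once these two points are secured, the clash between knapsack optimality and the strict improvement required by~(19) closes the argument.
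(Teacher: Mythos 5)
Your proposal is correct and follows essentially the same route as the paper's own proof: argue by contradiction that at the terminal round the rejected worker's asked payment has already fallen to its lowest admissible level, so the task-side knapsack selection has implicitly compared $\varphi(s_i)$ against every budget-feasible alternative of the form $\{\varphi(s_i)\backslash\varphi'(s_i)\}\cup\{w_j\}$ and found it no worse, contradicting the strict inequality in (19). The two ``obstacles'' you flag (whether $w_j$ actually remains in $\widetilde{\mathbb{Y}}(s_i)$ at the terminal round given that EACO-P$^3$TR re-ranks preferences each round, and budget-feasibility of the alternative set) are real subtleties that the paper's own proof also leaves implicit, so identifying them is a point in your favor rather than a gap relative to the published argument.
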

\begin{proof}
	According to Definition 4, fairness indicates the case without type 1 blocking coalition, we offer the proof of Lemma 3 by contradiction. 
	
	Under a given matching $ \varphi $, worker $ w_j $ and task set $ \mathbb{S} $ can form a type 1 blocking coalition $ (w_j; \mathbb{S}) $, as shown by (18) and (19). If task $ s_i $ does not sign a long-term contract with worker $ w_j $, the payment of worker $ w_j $ during the last round can only be the cost, as given by (45) and (46).
	\begin{equation}\label{key}
		\begin{aligned}
			p_{i,j}\left\langle k \right\rangle = { c}_{i,j}\langle t^{\text{ini}} \rangle,
		\end{aligned}
	\end{equation}
	\begin{equation}\label{key}
		\begin{aligned}
			\overline{U^{S}}\left({ s_{i},\left\{\varphi\left( s_{i} \right)\backslash\varphi^{\prime}\left( s_{i} \right)\right\} \cup \left\{ w_{j} \right\} } \right) < \overline{U^{S}}\left( t_{i},\varphi\left( s_{i} \right) \right).
		\end{aligned}
	\end{equation}
	If task $ s_i $ selects worker $ w_j $, we have $ p_{i,j}\left\langle k^{*} \right\rangle \geq p_{i,j}\left\langle k \right\rangle = {c}_{i,j} $ and the following (47)
	\begin{equation}\label{key}
		\begin{aligned}
			&\overline{U^{S}}\left(s_{i},\left\{\varphi\left( s_{i} \right)\backslash\varphi^{\prime}\left( s_{i} \right)\right\} \cup \left\{ w_{j} \right\} \right) \geq\\& \overline{U^{S}}\left( s_{i},\left\{\varphi\left( s_{i} \right)\backslash\varphi^{\prime\prime}\left( s_{i} \right)\right\} \cup \left\{ w_{j} \right\} \right),\\
		\end{aligned}
	\end{equation}
where $ 
\varphi^{\prime\prime}\left( s_{i} \right) \subseteq \varphi^{\prime}\left( s_{i} \right) $. From (46) and (47), we can get
\begin{equation}\label{key}
		\begin{aligned}
				\overline{U^{S}}\left( {s_{i},\varphi\left( s_{i} \right) } \right) > \overline{U^{S}}\left(s_{i},\left\{\varphi\left( s_{i} \right)\backslash\varphi^{\prime\prime}\left( s_{i} \right)\right\} \cup \left\{ w_{j} \right\} \right),
			\end{aligned}
	\end{equation}
which is contrary to (19). Thus, our proposed FT-M2M matching ensures the property of fairness. 
\end{proof}

\noindent
\begin{lem} (Non-wastefulness of FT-M2M matching) Alg. 2 satisfies the property of non-wastefulness. \end{lem}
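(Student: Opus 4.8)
The plan is to prove Lemma~4 by contradiction, mirroring the structure used for fairness in Lemma~3 and exploiting the observation (already recorded in the footnote to Definition~3) that a Type~2 blocking coalition is precisely a Type~1 blocking coalition in which no task needs to evict any currently matched worker, i.e., $\varphi^{\prime}(s_i)=\varnothing$ for every $s_i\in\mathbb{S}$. First I would suppose, toward a contradiction, that under the matching $\varphi$ returned by Alg.~2 there exists a Type~2 blocking coalition $(w_j;\mathbb{S})$. By (20), worker $w_j$ strictly prefers the task vector $\mathbb{S}$ to $\varphi(w_j)$, and by (21) every task $s_i\in\mathbb{S}$ satisfies $\overline{U^{S}}(s_i,\varphi(s_i)\cup\{w_j\})>\overline{U^{S}}(s_i,\varphi(s_i))$ while retaining enough residual budget to admit $w_j$ without any eviction.

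The key step is to trace the payment dynamics of Alg.~2 together with the optimality of the per-task knapsack selection, exactly as in the proof of Lemma~3. Because $w_j$ strictly prefers $\mathbb{S}$, at convergence EACO-P$^3$TR would have recommended a task vector $L^{\text{best}}_j$ containing each $s_i\in\mathbb{S}$, so $w_j$ would have proposed to $s_i$ and, whenever rejected, would have driven its asked payment down to its cost, giving $p_{i,j}\langle k\rangle=c_{i,j}\langle t^{\text{ini}}\rangle$ in the final round (cf. (45)) and placing $w_j$ in the candidate pool $\widetilde{\mathbb{Y}}(s_i)$. In line~10 of Alg.~2 each task $s_i$ selects $\varphi(s_i)=\mathbb{Y}(s_i)$ as the budget-feasible subset of $\widetilde{\mathbb{Y}}(s_i)$ that maximizes its expected utility under $B_i$. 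Since, by (21), the set $\varphi(s_i)\cup\{w_j\}$ is itself budget-feasible (no eviction is required) and yields strictly higher expected utility, the knapsack solution $\varphi(s_i)$ cannot have been optimal---contradicting the optimality guaranteed by the dynamic-programming solver.

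Concluding, no Type~2 blocking coalition can survive Alg.~2, so by Definition~6 the matching is non-wasteful; equivalently, the claim follows at once by specializing Lemma~3 to $\varphi^{\prime}(s_i)=\varnothing$. I expect the main obstacle to lie in justifying the budget-feasibility of $\varphi(s_i)\cup\{w_j\}$ cleanly: condition (21) presupposes a non-evicting augmentation, so I must argue that the strict utility gain is attainable within $B_i$, i.e., $\sum_{w_{j^{\prime}}\in\varphi(s_i)}p^{F}_{i,j^{\prime}}+c_{i,j}\langle t^{\text{ini}}\rangle\le B_i$, which is exactly what distinguishes Type~2 from Type~1 and what makes the knapsack's rejection of $w_j$ contradictory. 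A secondary subtlety is that EACO-P$^3$TR is heuristic, so I would phrase the preference and proposal step in terms of the converged output $L^{\text{best}}_j$ on which the expected utilities in (20)--(21) are evaluated, rather than claiming global optimality of the ant-colony search.
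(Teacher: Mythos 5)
Your proposal is correct and follows essentially the same route as the paper's proof: assume a Type~2 blocking coalition, observe that Alg.~2 drives a rejected worker's asked payment down to its cost $c_{i,j}\langle t^{\text{ini}}_{i,j}\rangle$ in the final round, note that the only remaining reason for rejection is budget exhaustion, and derive a contradiction with (20)--(21), which assert that $s_i$ can admit $w_j$ without eviction and within budget. Your version is somewhat more explicit than the paper's (you invoke the optimality of the DP knapsack selection in line~10 and flag the reliance on the heuristic output $L^{\text{best}}_j$, both of which the paper leaves implicit), but the decomposition and the key contradiction are the same.
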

\begin{proof}
	We conduct the proof of Lemma 4 by contradiction. Under a given matching $ \varphi $, worker $ w_j $ and task set $ \mathbb{S} $ form a type 2 blocking coalition $ (w_j; \mathbb{S}) $, as shown by (20) and (21). 
	
	If task $ s_i $ rejects $ w_j $, the payment of $ w_j $ during the last round can only be $ p_{i,j}^F\left\langle k \right\rangle = {c}_{i,j}\langle t^{\text{ini}}_{i,j}\rangle $, where the only reason of the rejection between $ s_i $ and $ w_j $ is the overall payment exceeds the limited budget $ B_i $. However, the coexistence of (20) and (21) shows that task $ s_i $ has an adequate budget to recruit workers, which contradicts the aforementioned assumption. Therefore, our proposed FT-M2M matching in the futures market is non-wasteful.
\end{proof}

\begin{thm}(Strong stability of FT-M2M matching) FT-M2M matching is strongly stable. \end{thm}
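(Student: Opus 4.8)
The plan is to invoke Definition~8 directly: strong stability is \emph{defined} as the conjunction of individual rationality, fairness, and non-wastefulness, so the theorem reduces to checking that FT-M2M matching simultaneously enjoys all three. Since each of these has already been established as a standalone result---individual rationality by Lemma~2, fairness (absence of Type~1 blocking coalitions) by Lemma~3, and non-wastefulness (absence of Type~2 blocking coalitions) by Lemma~4---the argument is a straightforward assembly rather than a fresh construction. First I would restate that Alg.~2 yields a matching $\varphi$ that, upon termination (guaranteed finite by Lemma~1), respects the budget bound (22b) and the service-quality risk bound (22c) on every task, together with the utility-risk and completion-risk bounds (23c)--(23d) on every worker; this is precisely the content of Lemma~2 and secures the first defining clause.

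Next I would appeal to Lemma~3 to rule out any Type~1 blocking coalition $(w_j;\mathbb{S})$: no worker $w_j$ can strictly raise its expected utility by switching to some alternative task vector $\mathbb{S}$ while every task in $\mathbb{S}$ simultaneously gains by evicting part of its current set and admitting $w_j$, since the payment-descent dynamics drive any such $w_j$ down to $p_{i,j}\langle k\rangle = c_{i,j}\langle t^{\text{ini}}\rangle$ and force the contradiction in (48) against (19). Then I would invoke Lemma~4 to exclude Type~2 blocking coalitions, where the only admissible reason for a rejection is budget exhaustion, contradicting the premise (21) that $s_i$ still has room to admit $w_j$. Having verified all three clauses, the definition of strong stability is met and the theorem follows.

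The ``hard part'' is genuinely already behind us---it lives in the three lemmas, particularly the contradiction arguments of Lemmas~3 and~4, which handle the interplay between the payment-reduction rule (27) and the task-side DP selection. The one subtlety worth flagging for completeness is the structural relationship noted in the footnote to Definition~3: a Type~2 coalition is the special case of a Type~1 coalition with $\varphi'(s_i)=\varnothing$, so fairness already subsumes non-wastefulness and the three conditions are not logically independent. I would therefore remark that the separate invocation of Lemma~4 is included only to match the decomposition in Definition~8 verbatim; no additional reasoning beyond Lemmas~2--4 is required, and the proof closes immediately by combining them.

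\begin{proof}
By Definition~8, it suffices to show that the matching $\varphi$ returned by Alg.~2 is individually rational, fair, and non-wasteful. Individual rationality follows from Lemma~2, which guarantees that constraints (22b), (22c), (23c), and (23d) hold for every task $s_i\in\bm{S}$ and every worker $w_j\in\bm{W}$. Fairness follows from Lemma~3, so that $\varphi$ admits no Type~1 blocking coalition. Non-wastefulness follows from Lemma~4, so that $\varphi$ admits no Type~2 blocking coalition. Since all three defining conditions of Definition~8 are satisfied, FT-M2M matching is strongly stable.
\end{proof}
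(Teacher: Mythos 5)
Your proof is correct and takes essentially the same route as the paper, which likewise concludes strong stability by combining Lemma~2 (individual rationality), Lemma~3 (fairness), and Lemma~4 (non-wastefulness) under the definition of strong stability. The only discrepancy is cosmetic: the paper's definition of strong stability is Definition~6, not Definition~8 as you cite.
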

\begin{proof}
	Since the matching result of Alg. 1 holds Lemma 2, Lemma 3, and Lemma 4, according to Definition 6, our proposed FT-M2M matching is strongly stable.
\end{proof}

\begin{thm}
	(Competitive equilibrium associated with service trading between workers and tasks in FT-SMP$^3$) The trading between workers and tasks can reach a competitive equilibrium.
\end{thm}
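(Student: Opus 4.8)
The plan is to verify the three defining conditions of Definition~8 directly against the output of Alg.~2, relying on the already-established stability result (Theorem~1) to do most of the heavy lifting. First I would address the first condition: for any matched pair $(w_j, s_i)$, I must show $\text{E}[c_{i,j}] \le p^F_{i,j}$. This follows almost immediately from the payment-update rule~(27) and the termination logic of Alg.~2. At every round, a worker only lowers its asked payment when constraint (23b) and the risk constraint (23c) still permit, and the floor of the $\max$ in~(27) is exactly $c_{i,j}$; hence the asked payment can never drop below the cost. Since $p^F_{i,j}$ is the converged asked payment of the accepted worker, we obtain $c_{i,j} \le p^F_{i,j}$, and by taking expectations (monotonicity of $\text{E}[\cdot]$) the stated inequality holds.

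Next I would handle the second condition, namely that each task $s_i$ trades with the workers yielding maximum expected utility. The key observation is Step~3 of Alg.~2: each task solves a $0$-$1$ knapsack via DP over its candidate set $\widetilde{\mathbb{Y}}(s_i)$ under budget $B_i$, and DP returns the \emph{exact} optimum of $\overline{U^S}(s_i, \cdot)$ subject to the budget constraint~(22b). So by construction the selected worker set is utility-maximizing among all budget-feasible subsets of the applicants. I would phrase this as an appeal to the optimality of dynamic programming for the knapsack subproblem, so that no task has an incentive to swap its matched set for any other feasible combination of the workers who actually proposed.

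For the third condition—that a task stops recruiting only when its residual budget cannot afford another worker—I would again invoke the knapsack structure. If $s_i$ leaves some worker $w_{j'} \in \widetilde{\mathbb{Y}}(s_i)$ unrecruited while $B_i - \sum_{w_j \in \varphi(s_i)} p^F_{i,j} \ge p^F_{i,j'}$ and adding $w_{j'}$ strictly raised $\overline{U^S}$, then the DP solution would have included $w_{j'}$, contradicting optimality. This is essentially the non-wastefulness argument of Lemma~4 recast in budgetary terms, so I would cross-reference Lemma~4 and the contrapositive of the knapsack optimality to close it. The main obstacle I anticipate is reconciling the per-pair, per-round guarantees of the payment dynamics with the \emph{converged} equilibrium prices: I must argue that the quantities $\text{E}[c_{i,j}]$ and $p^F_{i,j}$ appearing in the equilibrium conditions are evaluated at the terminal round (Lemma~1 guarantees finite termination), and that the market-clearing interpretation—supply of selling workers matching demand of paying tasks—is exactly the fixed point reached when $\Sigma_{w_j \in \bm{W}} \mathit{flag}_j = 0$. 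Making this correspondence precise, rather than the individual inequalities, is where care is needed.
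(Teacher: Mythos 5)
Your proposal is correct and follows essentially the same route as the paper's proof: verify the three conditions of the competitive-equilibrium definition by (i) noting that the payment floor in the update rule (27) together with constraint (23b) keeps the converged payment at or above the expected cost, (ii) invoking the exact optimality of the DP/knapsack selection in Step 3 of Alg.~2, and (iii) reusing the non-wastefulness argument of Lemma~4 to show a task halts recruitment only when its residual budget is insufficient. Your added remarks on evaluating the conditions at the terminal round guaranteed by Lemma~1 make the argument slightly more careful than the paper's, but do not change its substance.
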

\begin{proof}
	To prove this theorem, we discuss that the three conditions introduced by Definition 7 can be held in workers-tasks trading. First, we set $ p_{i,j}^F \geq \text{E}\left[c_{i,j}\langle t^{\text{ini}}_{i,j}\rangle\right] $, indicating that the expected service cost will be covered by its asked payment in each round (e.g., constraint (23b)). We next demonstrate that when a task $ s_i $ enters into a long-term contract with workers $ w_j $, task $ s_i $ achieves maximum expected utility. This is attributed to the fact that $ s_i $ selects the workers based on DP algorithm (e.g., line 10, Alg. 2), ensuring the attainment of the maximum expected utility for $ s_i $. Then, if $ s_i $ is not matched to more workers $ w_j \in \bm{W} $, its remaining budget will not recruit additional worker (e.g., proof of Lemma 4). 
	
	According to Definition 7, we can verify that the considered worker-task trading in futures market can reach a competitive equilibrium.
\end{proof}

\begin{thm}
	(Weak Pareto optimality associated with service trading between workers and tasks in FT-SMP$^3$) The proposed associated with service trading between workers and tasks in FT-SMP$^3$ provides a weak Pareto optimality.
\end{thm}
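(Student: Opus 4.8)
The plan is to argue by contradiction, leveraging the competitive equilibrium established in Theorem 2 together with the strong stability of Theorem 1. Suppose $\varphi$ is not weak Pareto optimal; then by Definition 8 there exists a feasible matching $\varphi'$---respecting budget constraint (22b) and the risk constraints (22c), (23c), (23d)---with strictly larger expected social welfare, i.e., $\sum_{s_i}\overline{U^S}(s_i,\varphi'(s_i))+\sum_{w_j}\overline{U^W}(w_j,\varphi'(w_j)) > \sum_{s_i}\overline{U^S}(s_i,\varphi(s_i))+\sum_{w_j}\overline{U^W}(w_j,\varphi(w_j))$.

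First I would localise the source of the surplus gain. Since a strict increase in the summed utilities forces at least one task or one worker to be strictly better off under $\varphi'$, I would compare $\varphi$ and $\varphi'$ edge by edge and isolate the differing pairs. For any pair $(s_i,w_j)$ present in $\varphi'$ but absent in $\varphi$ that raises $\overline{U^S}(s_i,\cdot)$, competitive-equilibrium condition 2 of Definition 7---which holds because each task selects its workers through the DP-based knapsack maximising $\overline{U^S}(s_i,\cdot)$ under $B_i$---forces $s_i$ to have omitted $w_j$ in $\varphi$ only because its residual budget was insufficient (condition 3 of Definition 7) or because admitting $w_j$ did not in fact raise $\overline{U^S}(s_i,\cdot)$. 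The first case makes $\varphi'$ violate (22b), contradicting feasibility; the second removes $(s_i,w_j)$ from being a welfare-increasing change on the task side.

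Next I would treat the worker side symmetrically. If the gain instead stems from some worker $w_j$ receiving a strictly preferred task vector, then $\overline{U^W}(w_j,\varphi'(w_j)) > \overline{U^W}(w_j,\varphi(w_j))$, which is exactly the leading inequality (18)/(20) of a blocking coalition. Combining this with a task-side improvement would reconstruct a type 1 or type 2 blocking coalition $(w_j;\mathbb{S})$, contradicting the fairness (Lemma 3) and non-wastefulness (Lemma 4) that constitute strong stability (Theorem 1). The delicate point is the transfer structure: the payments $p^F_{i,j}$ and compensations $q^F_{i,j}$ enter the worker and task utilities with opposite signs but are scaled by $\mathbb{V}_3\in[0,1]$ on the task side, so they do not cancel exactly in the welfare sum. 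Here I would invoke the $\mathbb{V}_3$ design property noted after (15)---that each realised breach strictly decreases social welfare---to argue that welfare cannot be manufactured by transfers alone, so any genuine gain must reflect a real surplus improvement (higher expected freshness net of expected cost), which is therefore realisable as a reallocation both sides strictly prefer, closing the contradiction.

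The hard part will be exactly this transfer-cancellation step. Because the $\mathbb{V}_3$ weighting breaks the symmetry between $p^F_{i,j}$ on the two sides, I cannot simply equate social welfare with expected quality minus expected cost and invoke ``competitive equilibrium implies Pareto optimality'' wholesale. I would instead have to show that the residual $(1-\mathbb{V}_3)$-weighted transfer terms are already pinned down by the equilibrium prices, so that the only direction in which the welfare sum can still move is the real-surplus direction---and that this direction has already been exhausted by the DP-optimal task selection and the absence of blocking coalitions. Making that argument precise, rather than heuristic, is where the genuine effort of the proof lies.
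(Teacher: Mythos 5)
Your core strategy---contradiction via the absence of blocking coalitions, backed by Theorem 1 (strong stability) and Theorem 2 (competitive equilibrium)---is essentially the same route the paper takes, but the paper's own proof is far terser: it simply observes that each worker's task vector is already expected-utility-maximal by construction of EACO-P$^3$TR, that any task wishing to switch to a better worker would create a blocking pair, and that Theorem 1 forbids blocking pairs, hence no Pareto improvement exists. The paper does not perform your edge-by-edge decomposition, does not invoke the competitive-equilibrium conditions, and---critically---does not engage at all with the transfer-cancellation issue you raise.

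That issue is the substantive point of your proposal, and you are right to flag it: because Definition 8 defines a Pareto improvement as an increase in the \emph{sum} of expected utilities while $p^F_{i,j}$ enters the worker's utility at full weight but the task's utility only at weight $\mathbb{V}_3\in[0,1]$, an aggregate welfare gain does not automatically decompose into a coalition in which a worker \emph{and} every task in some set $\mathbb{S}$ are simultaneously strictly better off in the sense of (18)--(21). A welfare-improving reallocation could in principle make one side strictly worse off, and such a reallocation is not a blocking coalition and is therefore not excluded by Lemmas 3--4. You honestly concede that closing this step "is where the genuine effort of the proof lies," and you do not close it; so as written your proposal is an incomplete sketch. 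However, the paper's proof silently steps over exactly the same gap---it equates "no blocking pairs" with "no Pareto improvement" without justification---so your proposal is, if anything, a more faithful account of what a rigorous proof would require than the published argument. If you wanted to finish it along the paper's intended lines, the missing lemma is that under the $\mathbb{V}_3\le 1$ design (so breaches weakly destroy welfare) and with prices pinned at $p^F_{i,j}\ge\text{E}[c_{i,j}]$ by the equilibrium, any strict welfare gain must come from a real-surplus reallocation that benefits both sides of some pair, which then does constitute a blocking coalition.
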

\begin{proof}
Recall the design of FT-M2M matching, each worker makes decisions based on their preference list. In particular, this preference list is determined by our proposed EACO-P$^3$TR algorithm, accounting for the diverse demands of tasks (e.g., time windows and locations), along with the asked payment and costs associated with each worker. This enables that the selected task vector can maximize the expected utility for the worker. For each task owner \( s_i \), if a worker \( w_j \) can offer a higher expected utility compared to its current matches, \( s_i \) prefers establishing a new matching. However, this could potentially create a blocking pair. Theorem 1 confirms that our matching is stable and allows no blocking pairs. Thus, there exists no Pareto improvements, making the service trading weakly Pareto optimal.
\end{proof}

\section{Derivations Associated with ST-DP$^2$WR Mechanism}
\noindent\textbf{Derivation related to (33c).}
In optimization problem $\bm{\mathcal{F}^{W\prime}}$ given by (33), constraint (33c) represents a probabilistic expression, making its close form nontrivial to be obtained. To resolve this issue, we transform (33c) into a tractable one by exploiting a set of bounding techniques. First, (33c) can be rewritten as
\begin{equation}
	R^{W\prime}_1\left( w_j,s_i \right)\leq \rho_4 \Rightarrow \text{Pr}\left(U^{W\prime} (w_j,s_i)\ge u_{\text{min}}\right) > 1-\rho_4.
\end{equation}
To obtain a tractable form for (49), we can have the upper-bound of its left-hand side by using Markov inequality, as the following (50).
\begin{equation}
	\text{Pr}\left(U^{W\prime} (w_j,s_i)\ge u_{\text{min}}\right)\ge \frac{\text{E}[U^{W\prime} (w_j,s_i)]}{u_{\text{min}}}
\end{equation}
where the value of $\text{E}[U^{W\prime} (w_j,s_i)]$ is expressed as
\begin{equation}
	\begin{aligned}
	&\text{E}[U^{W\prime} (w_j,s_i)]=\sum_{s_i\in\nu_t(w_j)}\text{E}[\beta_{i,j}^{S}]\left(p_{i,j}^{S}-\text{E}[c_{i,j}\langle t_{i,j}^{\text{ini}} \rangle]\right)\\&-\sum_{s_i\in\nu_t(w_j)}(1-\text{E}[\beta_{i,j}^{S}])\left(\text{E}[c_{i,j}^{\text{part}}\langle t_{i,j}^{\text{ini}} \rangle]+q_{i,j}^{S}\right),
\end{aligned}
\end{equation}
where $\text{E}[c_{i,j}\langle t_{i,j}^{\text{ini}} \rangle]$ and $ \text{E}[c_{i,j}^{\text{part}}\langle t_{i,j}^{\text{ini}} \rangle]$ are given by (37) and (38), respectively. Besides, $\text{E}[\beta_{i,j}^{S}]$ can be calculated as $\text{E}[\beta_{i,j}^{S}]=\text{Pr}(\text{E}[\beta_{i,j}^{S}]=1)\times 1+ \text{Pr}(\text{E}[\beta_{i,j}^{S}]=0)\times 0 = \text{Pr}(\text{E}[\beta_{i,j}^{S}]=1)$.
We use \( \bm{\mathcal{C}}^\prime \) to denote the set of possible TCS, and \( \mathcal{M}^\prime_n = \left\{ X, Y_1, Y_2, \ldots, Y_X , X', Z \right\} \) to represent a TCS. Due to the randomness of uncertainties, each task completion scenario can be described as the mobile worker encountering \( X \) delay events, with each delay event consuming \( Y_1, Y_2, \ldots, Y_X \) timeslots, and the number of times no delay events occur being \( X' \). The communication quality in this scenario is \( Z \). Due to \(\tau^{\text{delay}} \sim \mathbf{U}(t^{\text{min}}, t^{\text{max}})\), the value of $\text{Pr}(\tau^{\text{delay}}=\tau^{\prime\prime})$ is $\frac{1}{t^{\text{max}}-t^{\text{min}}+1}$, where $t^{\text{min}}\le \tau^{\prime\prime} \le t^{\text{max}} $. Similarly, the value of \(\text{Pr}(\gamma_{i,j}=\gamma_{i,j}^\prime)\) can be expressed as $\frac{1}{\mu_2-\mu_1}$, where $\mu_1 \le \gamma_{i,j}^\prime \le \mu_2 $. Therefore, we can calculate the probability of \( \beta_{i,j}^\prime \) as
\begin{equation}{
		\begin{aligned}
			&\text{Pr}(\beta^\prime_{i,j}=1)=\\&	\sum_{\mathcal{M}^\prime_n\in\bm{\mathcal{C}^\prime}}\text{Pr}(\gamma_{i,j}=Z) (\text{Pr}(\alpha_{i,j}=1))^X(\text{Pr}(\alpha_{i,j}=1))^{X^\prime}\\&\times\prod_{\tau^{\prime\prime}=Y_1}^{Y_X}\text{Pr}(\tau^{\text{delay}}=\tau^{\prime\prime}) \\&=\sum_{\mathcal{M}^\prime_n\in\bm{\mathcal{C}}^\prime}\frac{(a_{i,j})^X(1-a_{i,j})^{X^\prime}}{(\mu_2-\mu_1+1)(t^{\text{max}}-t^{\text{min}}+1)^X}
	\end{aligned}}
\end{equation}

\noindent\textbf{Derivation related to (33d).} Constraint (33d) can be rewritten as
\begin{equation}
	\begin{aligned}
		&R^{W\prime}_2\left( w_j,s_i \right) =\text{Pr}\left(\beta_{i,j}^S=0\right)\leq \rho_5 \\&\Rightarrow \text{Pr}\left(\beta_{i,j}=1\right) > 1-\rho_5,
	\end{aligned}
\end{equation}
where $\text{Pr}\left(\beta_{i,j}^S=1\right)$ is given by (52).

\section{Detail of ST-M2M Matching}
\subsection{Key Definitions of Matching}
 We next describe the basic characteristics of this ST-M2M matching.

\begin{Defn}(ST-M2M matching)
	A M2M matching $ \nu_t $ of ST-DP$^2$WR mechanism constitutes a mapping between task set $ \bm{S^\prime}\langle t\rangle $ and worker set $ \bm{W^\prime}\langle t\rangle $, which satisfies the following properties:
	
	\noindent
	$\bullet$ for each task $ s_{i} \in \bm{S^\prime}\langle t\rangle,\nu_t\left( s_i \right) \subseteq \bm{W^{\prime}}\langle t\rangle $,
	
	\noindent
	$\bullet$ for each worker $ w_{j} \in \bm{W^{\prime}}\langle t\rangle, \nu_t\left( w_{j} \right) \subseteq \bm{S^\prime}\langle t\rangle $,
	
	\noindent
	$\bullet$ for each task $ s_i $ and worker $ w_j $, $ s_i\in\nu_t(w_j)$ if and only if $ w_j\in\nu_t\left(s_i\right) $.
\end{Defn}

We next define a concept called \textit{blocking coalition}, which is a significant factor that may make the matching unstable.

\begin{Defn}(Blocking coalition)
	Under a given matching $ \nu_t $, worker $ w_j $ and task set $ \mathbb{S}^\prime\langle t\rangle \subseteq \bm{S^\prime}\langle t\rangle$ may form one of the following two types of blocking coalition $ \left(w_j; \mathbb{S}^\prime\langle t\rangle\right) $.
	
	\textbf{Type 1 blocking coalition:} Type 1 blocking coalition satisfies the following two conditions:
	
	\noindent
	$\bullet$ Worker $ w_j $ prefers execution of task set $ \mathbb{S}^\prime\langle t\rangle \subseteq \bm{S^\prime}\langle t\rangle $ to its currently matched task set $ \nu_t(w_j) $, i.e., 
	\begin{equation}\label{key}
		\begin{aligned}
			{U^{W\prime}_t}(w_j,\mathbb{S}^\prime\langle t\rangle)>{U^{W\prime}_t}(w_j,\nu_t(w_j)). 
		\end{aligned} 
	\end{equation}

	\noindent
	$\bullet$ Every task in $ \mathbb{S}^\prime\langle t\rangle $ prefers to recruit workers rather than being matched to its currently matched/assigned worker set. That is, for any task $ s_i\in \mathbb{S}^\prime\langle t\rangle $, there exists worker set $ \nu_t^\prime(s_i) $ that constitutes the workers that need to be evicted, satisfying
	\begin{equation}\label{key}
		\begin{aligned}
			{U^{S\prime}_t}\left(s_i,\left\{\nu_t\left( s_i \right)\backslash\nu_t^{\prime}\left( s_i \right)\right\} \cup \left\{ w_{j} \right\} \right) > {U^{S\prime}_t}\left(s_i,\nu_t\left( s_i \right) \right).\\
		\end{aligned}
	\end{equation} 
	
	\textbf{Type 2 blocking coalition:} Type 2 blocking coalition satisfies the following two conditions:
	
	\noindent
	$\bullet$ Worker $ w_j $ prefers executing task set $ \mathbb{S}^\prime\langle t\rangle \subseteq \bm{S^\prime}\langle t\rangle $ to its currently matched task set $ \nu_t(w_j) $, i.e.,
	\begin{equation}\label{key}
		\begin{aligned}
			{U^{W\prime}_t}(w_j,\mathbb{S}^\prime\langle t\rangle )>{U^{W\prime}_t}(w_j,\nu_t(w_j) ).
		\end{aligned}
	\end{equation} 
	
	\noindent
	$\bullet$ Every task in $ \mathbb{S}^\prime\langle t\rangle $ prefers to further recruit worker $ w_j $ in conjunction to its currently matched/assigned worker set. That is, for any task $ s_i\in \mathbb{S}^\prime\langle t\rangle $, we have
	\begin{equation}\label{key}
		\begin{aligned}
			{U^{S\prime}_t}(s_i,\nu_t(s_i)\cup\left\{ w_{j} \right\})>{U^{S\prime}_t}(s_i,\nu_t(s_i) ) .
		\end{aligned} 
	\end{equation}
\end{Defn}

\subsection{Algorithm Analysis}
\begin{algorithm}[] 
	{\small\caption{{Proposed ST-M2M Matching Algorithm}}
		\LinesNumbered 
		\textbf{Initialization: $ k \leftarrow 1 $, $ p_{i,j}^S\left\langle 1 \right\rangle \leftarrow p^{\text{Desire}}_{i,j}$, for $ \forall i,j $, $ {flag}_{j} \leftarrow 1 $, $\mathbb{V}\left( w_{j} \right)\leftarrow \varnothing$, $\mathbb{V}\left( s_{i} \right)\leftarrow \varnothing$}\ 

		\While{$ {flag}_{j} $}{
			\textbf{$ {flag}_{j} \leftarrow 0 $}
			
			$ \mathbb{V}\left( w_{j} \right) \leftarrow $ selected $s_i$ from $\bm{S^{\prime}}\langle t\rangle$ accroding to (34c) and (34d)
			
			\If{$ \forall\mathbb{V}\left( w_{j} \right) \neq \varnothing $}{
				\For{$ \forall s_i \in \mathbb{V}\left( w_{j} \right) $}{$ w_j $ sends a proposal about its information to $ s_i $}
				
				\While{
					$ \Sigma_{w_{j}\in \bm{W^{\prime}}\langle t\rangle}{flag}_{j} > 0 $}{
					Collect proposals from the workers in $ \bm{W^{\prime}}\langle t\rangle $, e.g., using $ {\widetilde{\mathbb{V}}}\left(s_i\right) $ to include the workers that send proposals to $ s_i $
					
					$ \mathbb{V}(s_i)\leftarrow $ choose workers from $ {\widetilde{\mathbb{V}}}\left(s_i\right) $ that can achieve the maximization of the expected utility by using DP under budget $ B_i^t $
					
					$ s_i $ temporally accepts the workers in $ \mathbb{V}(s_i) $, and rejects the others
				}
				
				\For{
					$ \forall s_i \in \mathbb{V}\left( w_{j} \right) $
				}{
					\If{$ w_j $ is rejected by $ s_i $ and $ p_{i,j}^S\left\langle k \right\rangle>c_{i,j}\langle t^{\text{ini}}_{i,j}\rangle $}{
						$ p_{i,j}^S\left\langle {k + 1} \right\rangle \leftarrow \max\left\{ p_{i,j}^S\left\langle k \right\rangle - \mathrm{\Delta}p~,{ c}_{i,j}\langle t^{\text{ini}}_{i,j}\rangle \right\} $}
					\Else{$ p_{i,j}^S\left\langle {k + 1} \right\rangle \leftarrow p_{i,j}^S\left\langle k \right\rangle $}
				}
				
				\If{there exists $p_{i,j}^S\left\langle {k + 1} \right\rangle \neq p_{i,j}^S\left\langle k \right\rangle\ $, $ \forall s_i\in\mathbb{V}\left( w_{j} \right) $}{
					$ {flag}_j\leftarrow 1 $,	$ k\leftarrow k+1 $\
				}
			}
		}

		$\nu_t(s_i)\leftarrow\mathbb{V}(s_i)$, $\nu_t(w_j)\leftarrow \mathbb{V}(w_j)$
		
		\textbf{Return:} $\nu_t(s_i)$, $\nu_t(w_j)$}
\end{algorithm}
We next describe the steps of ST-M2M matching, the pseudo-code of which is given in Alg. 4.

\noindent
\textbf{Step 1. Initialization:} At the beginning of Alg. 4, each worker $ w_j $'s asked payment is set to $ p_{i,j}^S\left\langle 1 \right\rangle = p^{\text{Desire}}_{i,j} $ (line 1, Alg. 4), $ \mathbb{V} (w_j ) $ contains the interested tasks of $ w_j $ and $ \mathbb{V}(s_i) $ involves the workers temporarily selected by $ s_i $.

\noindent
\textbf{Step 2. Proposal of workers:} At each round $ k $, each worker $ w_j $ first chooses tasks from $\bm{S^{\prime}}\langle t\rangle$ according to its risk management (33c) and (33d), and records them in $ \mathbb{V}(w_j ) $. Then, $ w_j $ sends a proposal to each task in $ \mathbb{V}(w_j) $, including its asked payments $ p_{i,j}^S\left\langle k \right\rangle $, probability of completing $s_i$ (i.e., $\beta^S_{i,j}$), and expected service quality of sensing data $\text{E}[Q_{i,j}]$ (line 7, Alg. 4).

\noindent
\textbf{Step 3. Worker selection on tasks' side:} After collecting the information from workers in set $ {\widetilde{\mathbb{V}}}\left(s_i\right) $, each task $ s_i $ solves a 0-1 knapsack problem,
which can generally be solved via DP (line 10, Alg. 4), determine a collection of temporary workers (e.g., set $ \mathbb{V} (s_i)$), where $\mathbb{V} (s_i)\subseteq {\widetilde{\mathbb{V}}}\left(s_i\right) $ that can bring the maximum expected utility, under budget $ B_i^t $. Then, each $ s_i $ reports its decision on worker selection during the current round to all the workers. 

\noindent
\textbf{Step 4. Decision-making on workers' side:} After obtaining the decisions from each task $ s_i\in\mathbb{V} (w_j) $, worker $ w_j $ inspects the following conditions:

\noindent
$\bullet$ \textbf{Condition 1.} If $ w_j $ is accepted by $ s_i $ or its current asked payment $ p_{i,j}^S\left\langle k \right\rangle $ equals to its cost $ {c}_{i,j}\langle t^{\text{ini}}_{i,j}\rangle $, its payment remains unchanged (line 16, Alg. 4);

\noindent
$\bullet$ \textbf{Condition 2.} If $ w_j $ is rejected by a task $ s_i $ and its asked payment $ p_{i,j}^S\left\langle k \right\rangle $ can still cover its cost $ {c}_{i,j}\langle t^{\text{ini}}_{i,j}\rangle $, it decreases its asked payment for $ s_i $ in the next round, while avoiding a negative utility, as follows (line 14, Alg. 4):
\begin{equation}\label{key}	
	{
		\begin{aligned}
			p_{i,j}^S\left\langle {k + 1} \right\rangle = \max\left\{ p_{i,j}^S\left\langle k \right\rangle - \mathrm{\Delta}p_j~,{c}_{i,j}\langle t^{\text{ini}}_{i,j}\rangle \right\}.
	\end{aligned} }
\end{equation}

\noindent
\textbf{Step 5. Repeat:} If all the asked payments stay unchanged from the $ (k-1)^{\text{th}} $ round to the $ k^{\text{th}} $ round, the matching will be terminated at round $ k $. We use $ \Sigma_{w_j\in \bm{W^{\prime}}\langle t\rangle}{flag}_j=0 $ to denote this situation (line 3, Alg. 4).
Otherwise, the algorithm repeats the above steps (e.g., lines 2-18, Alg. 4) in the next round.

\subsection{Design Targets}
Desired properties of ST-M2M matching are detailed below.

\begin{Defn}(Individual rationality of ST-M2M matching) For both tasks and workers, a matching $ \nu_t $ is individually rational when the following conditions are satisfied:
	
	\noindent
	$\bullet$ For tasks: the overall payment of a task $s_i$ matched to workers $\nu_t\left(s_i\right)$ does not exceed $B_i^t$, i.e., constraint (33b) is met.
	
	\noindent
	$\bullet$ For workers: \textit{i)} the risk of each worker receiving an undesired utility is controlled within a certain range, i.e., constraint (34c) is satisfied; \textit{ii)} the risk of each worker failing to complete matched tasks is controlled within a certain range, i.e., constraint (34d) is satisfied.
\end{Defn}

\begin{Defn}(Fairness of ST-M2M matching): A matching $\nu_t$ is fair if and only if it imposes no type 1 blocking coalition.\end{Defn}
\begin{Defn}(Non-wastefulness of ST-M2M matching): A matching $\nu_t$ is non-wasteful if and only if it imposes no type 2 blocking coalition. \end{Defn}

\begin{Defn}(Strong stability of ST-M2M matching) The proposed ST-M2M matching is strongly stable if it is individual rationality, fair, and non-wasteful.
\end{Defn}

\begin{Defn}(Competitive equilibrium associated with service trading between workers and task owners in ST-DP$^2$WR) The trading between workers and task owners reaches a competitive equilibrium if the following conditions are satisfied:
	
	\noindent
	$\bullet$ For each worker $ w_j \in \bm{W^{\prime}}\langle t\rangle $, if $ w_j $ is associated with a task owner $ s_i\in \bm{S^\prime}\langle t\rangle $, then $ c_{i,j}\langle t^{\text{ini}}_{i,j}\rangle \leq p^S_{i,j} $;
	
	\noindent
	$\bullet$ For each task $ s_i\in \bm{S^\prime}\langle t\rangle $, $ s_i $ is willing to trade with the worker that can bring it with the maximum utility;
	
	\noindent
	$\bullet$ For each task \( s_i \) in set \( \bm{S^\prime}\langle t\rangle \), if \( s_i \) does not recruit more workers, then the remaining budget after deducting the payments made to workers \( w_j \) is insufficient to recruit an additional worker.
\end{Defn}

For a MOO problem (e.g., optimization problems $ \bm{\mathcal{F}^{W\prime}} $ and $ \bm{\mathcal{F}^{S\prime}} $), a Pareto improvement occurs when the \textit{social welfare} can be increased with another feasible matching result\cite{pareto 1}. Specifically, the social welfare refers to a summation of utilities of workers and task owners in our designed matching. Thus, a matching is weak Pareto optimal when there is no Pareto improvement.

\begin{Defn}(Weak Pareto optimality of trading between tasks and workers in ST-DP$^2$WR) The proposed matching game is weak Pareto optimal if there is no Pareto improvement.
\end{Defn}

\subsection{Proof of Matching Property}
ST-M2M matching satisfies the properties of convergence, individual rationality, fairness, non-wastefulness, strong stability, competitive equilibrium, and weak Pareto optimality.

\begin{lem}
	(Convergence of ST-M2M matching) Alg. 4 converges within finite rounds.
\end{lem}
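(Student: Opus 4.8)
The plan is to mirror the convergence argument of Lemma 1 (FT-M2M matching), since Alg.~4 shares the same price-descent structure as Alg.~2, with the budget $B_i$ replaced by the remaining budget $B_i^t$ and the payments and costs specialized to the spot trading notation $p_{i,j}^S$ and $c_{i,j}\langle t_{i,j}^{\text{ini}}\rangle$. The core idea is a monotone-bounded sequence argument applied to the vector of asked payments, combined with the observation that all per-round work (the 0-1 knapsack solved by DP and the finite proposal and selection steps over $\bm{W^\prime}\langle t\rangle$ and $\bm{S^\prime}\langle t\rangle$) terminates.

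First I would observe that, by the update rule (59), for every worker--task pair $(w_j, s_i)$ the asked payment $p_{i,j}^S\langle k\rangle$ is monotonically non-increasing across rounds: in each round it either remains unchanged (Condition~1) or decreases by exactly $\mathrm{\Delta}p_j$ (Condition~2), and it is clamped from below at the cost $c_{i,j}\langle t_{i,j}^{\text{ini}}\rangle$ through the $\max\{\cdot\}$ operation. Hence $c_{i,j}\langle t_{i,j}^{\text{ini}}\rangle \le p_{i,j}^S\langle k\rangle \le p^{\text{Desire}}_{i,j}$ holds for all $k$, and the number of strict decrements any single pair can undergo is at most $\lceil (p^{\text{Desire}}_{i,j} - c_{i,j}\langle t_{i,j}^{\text{ini}}\rangle)/\mathrm{\Delta}p_j\rceil$, which is finite.

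Next I would aggregate this bound across all pairs by introducing a potential equal to the summed asked payments over all active worker--task pairs. The termination condition $\Sigma_{w_j\in\bm{W^\prime}\langle t\rangle}{flag}_j = 0$ is triggered precisely when no payment changes between consecutive rounds (line~17, Alg.~4). Conversely, in any non-terminating round at least one flag is set, so at least one payment strictly decreases by some positive amount bounded below by $\mathrm{\Delta}p_{\min} = \min_j \mathrm{\Delta}p_j > 0$; thus the potential strictly decreases by at least $\mathrm{\Delta}p_{\min}$. Since the potential is bounded below by the total cost and the worker and task sets are finite, only finitely many non-terminating rounds can occur, and the algorithm must halt.

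The main obstacle, and the point I would be careful to address, is that the worker's preferred task set $\mathbb{V}(w_j)$ is re-selected each round (line~3) according to the risk constraints (33c)--(33d), so one might worry about cyclic behavior as the feasible set shifts with falling prices. I would note that this re-selection never resets a payment: each $p_{i,j}^S$ is tracked per pair and never increases, so the monotone-bounded argument is unaffected regardless of which tasks a worker proposes to in a given round. This, together with the finiteness of the inner DP knapsack under budget $B_i^t$ and of the proposal and selection steps, establishes convergence within finitely many rounds.
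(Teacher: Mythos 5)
Your proof is correct and follows essentially the same route as the paper's own (very terse) argument: the asked payments form a monotone non-increasing sequence bounded below by the costs, so only finitely many strict decrements can occur before all flags clear and Alg.~4 halts. Your version is considerably more rigorous than the paper's one-line justification — in particular, your per-pair decrement count correctly absorbs the minor imprecision in the potential-function step (the final clamped decrement may be smaller than $\mathrm{\Delta}p_{\min}$, but each pair is clamped at most once), and your explicit handling of the round-by-round re-selection of $\mathbb{V}(w_j)$ closes a gap the paper leaves implicit.
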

\begin{proof}
	We utilize DP algorithm to transform the problem into a two-dimensional 0-1 knapsack problem of Alg. 4. After a finite number of rounds, each worker's asked payment can either be accepted or the risk of obtaining utility will be unacceptable, supporting the property of convergence.
\end{proof}

\begin{lem}(Individual rationality of ST-M2M matching) Our proposed M2M matching satisfies the individual rationality of all tasks and workers. \end{lem}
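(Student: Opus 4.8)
The plan is to mirror the structure of the proof of Lemma~2 (Individual rationality of FT-M2M matching), since ST-M2M matching is, by the paper's own admission, structurally analogous to FT-M2M matching. The statement to be proved requires establishing individual rationality on both the task side and the worker side, so I would split the proof into two parts exactly as in Lemma~2.

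For the task side, the key observation is that in Step~3 of Alg.~4 (line~10), each task $s_i$ solves a 0-1 knapsack problem in which its remaining budget $B_i^t$ serves as the knapsack capacity. Consequently, the total payment $\sum_{w_j\in\nu_t(s_i)} p^S_{i,j}$ that $s_i$ commits to its matched workers cannot exceed $B_i^t$; this is precisely constraint (33b), the task-side condition in Definition~10. I would state this and cite the knapsack-based selection mechanism as the justification.

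For the worker side, I would invoke the worker-side risk-control logic: before a worker $w_j$ proposes to any task $s_i\in\nu_t(w_j)$, it verifies the two risk constraints (33c) and (33d) (as reflected in line~3 of Alg.~4, where $w_j$ selects tasks from $\bm{S^\prime}\langle t\rangle$ according to (33c) and (33d)). Since $w_j$ only sends proposals to, and only matches with, tasks for which $R^{W\prime}_1(w_j,s_i)\le\rho_4$ and $R^{W\prime}_2(w_j,s_i)\le\rho_5$ both hold, every matched pair satisfies the worker-side conditions of Definition~10. Additionally, the payment-update rule (59) guarantees $p^S_{i,j}\ge c_{i,j}\langle t^{\text{ini}}_{i,j}\rangle$ (constraint (33b) for workers), so no worker is driven to negative expected utility. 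I would conclude that both sides' individual-rationality conditions hold, and hence the matching is individually rational.

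The main obstacle is not conceptual but one of care in matching the spot-stage notation and constraints to the correct definition: the proof must reference the spot-stage risk thresholds $\rho_4,\rho_5$ and constraints (33c)--(33d) rather than their futures-stage counterparts $\rho_2,\rho_3$ and (23c)--(23d), and must use $B_i^t$ (the time-dependent remaining budget, inclusive of worker compensations) rather than the static $B_i$. I would verify that Definition~10 indeed omits a task-side risk constraint (unlike Definition~3 for FT-M2M, which includes constraint (22c)), since the spot stage has no analogue of the task service-quality risk $R^S$; this asymmetry means the task-side argument is strictly simpler than in Lemma~2 and relies solely on the budget constraint. The proof can then be written compactly by direct appeal to the algorithm's construction, with no heavy computation required.
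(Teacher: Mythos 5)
Your proposal is correct and follows essentially the same two-part argument as the paper: the task side via the 0-1 knapsack formulation with remaining budget $B_i^t$ as capacity, and the worker side via the pre-proposal verification of the risk constraints (33c)--(33d). Your added observations (the payment floor $p^S_{i,j}\ge c_{i,j}\langle t^{\text{ini}}_{i,j}\rangle$ and the absence of a task-side risk condition in the spot-stage definition) are consistent with the paper, which indeed drops the analogue of (22c) in this stage; only the definition/equation numbers you cite are off by one relative to the appendix numbering.
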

\begin{proof}
The individual rationality of each task and worker is proved respectively, as the following: 

\noindent \textbf{Individual rationality of tasks.} For each task $ s_i\in\bm{S^\prime}\langle t\rangle $, since the designed 0-1 knapsack problem regards $ B_i^t $ as the corresponding capacity, the overall payment of $ s_i $ will thus not exceed $ B_i^t $. 

\noindent \textbf{Individual rationality of workers.} Thanks to stringent risk control measures, each worker thoroughly evaluates the risks on failing to complete task and obtaining undesired utility before sending requests to tasks on their preferred list. These risks are kept within reasonable limits; otherwise, the worker will not choose to service with the task.

As a result, our proposed M2M matching in the spot trading stage is individual rational.
\end{proof}

\begin{lem} (Fairness of ST-M2M matching) Our proposed ST-M2M matching guarantees fairness in the futures trading stage. \end{lem}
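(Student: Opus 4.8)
The plan is to prove fairness by contradiction, reusing the structure of Lemma 3 (fairness of FT-M2M) but now phrased in the spot-stage quantities. By the fairness definition for ST-M2M, it suffices to show that the matching $\nu_t$ returned by Alg. 4 admits no Type 1 blocking coalition. I would therefore suppose, toward a contradiction, that some worker $w_j$ and task set $\mathbb{S}^\prime\langle t\rangle\subseteq\bm{S^\prime}\langle t\rangle$ form a Type 1 blocking coalition $(w_j;\mathbb{S}^\prime\langle t\rangle)$, so that the worker-side preference (54) holds and, for every $s_i\in\mathbb{S}^\prime\langle t\rangle$, the task-side eviction-and-improvement condition (55) holds.

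First I would exploit the payment dynamics of Alg. 4. Since $w_j$ is not matched to $s_i$ under $\nu_t$, the while-loop can only have terminated after the decrease rule (58) drove $w_j$'s asked payment for $s_i$ down to its service-cost floor, so at the terminal round $k$ we have $p_{i,j}^S\langle k\rangle=c_{i,j}\langle t^{\text{ini}}_{i,j}\rangle$. This is the spot-stage counterpart of (45): a worker still rejected at the cost floor cannot be made more attractive, which signals that admitting $w_j$ is not utility-improving for $s_i$.

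Next I would invoke the DP-based worker selection on the task side (line 10, Alg. 4), which by construction returns the applicant subset maximizing $U_t^{S\prime}(s_i,\cdot)$ under the budget $B_i^t$. Because $w_j$ is available at the cost-level payment $c_{i,j}\langle t^{\text{ini}}_{i,j}\rangle$, the task could have formed the alternative set $\{\nu_t(s_i)\setminus\nu_t^{\prime}(s_i)\}\cup\{w_j\}$ (evicting just enough incumbents to respect $B_i^t$), yet the DP did not; optimality of $\nu_t(s_i)$ then yields $U_t^{S\prime}(s_i,\nu_t(s_i))\ge U_t^{S\prime}(s_i,\{\nu_t(s_i)\setminus\nu_t^{\prime}(s_i)\}\cup\{w_j\})$, directly negating (55). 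This contradiction, mirroring the chain (45)--(48) of the FT-M2M proof, shows that no Type 1 blocking coalition can arise, so $\nu_t$ is fair.

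The main obstacle I anticipate is the bookkeeping that makes the swap argument airtight rather than the contradiction itself. As in the FT-M2M proof, admitting $w_j$ at its cost floor may require evicting only a subset $\nu_t^{\prime\prime}(s_i)\subseteq\nu_t^{\prime}(s_i)$ of the incumbents assumed in (55), so I must verify that the DP comparison is carried out against this tighter configuration under the same capacity $B_i^t$ and that evicting fewer workers is never worse for $s_i$; otherwise the inequality chain will not close. Apart from this, the argument transfers verbatim from FT-M2M via the substitutions $\varphi\!\to\!\nu_t$, $\overline{U^S}\!\to\!U_t^{S\prime}$, $B_i\!\to\!B_i^t$, and $p^F\!\to\!p^S$, since ST-M2M is structurally identical to FT-M2M.
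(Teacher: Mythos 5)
Your proposal is correct and follows essentially the same route as the paper's own proof of this lemma: a contradiction argument in which the rejected worker's asked payment is shown to have been driven to its cost floor $c_{i,j}\langle t^{\text{ini}}_{i,j}\rangle$, after which the task-side selection implies $U_t^{S\prime}(s_i,\nu_t(s_i))\ge U_t^{S\prime}(s_i,\{\nu_t(s_i)\backslash\nu_t^{\prime}(s_i)\}\cup\{w_j\})$, contradicting the Type 1 blocking condition. The only cosmetic difference is that you justify the key inequality explicitly via DP optimality, whereas the paper states it directly as its inequality (60) and then routes through the subset $\nu_t^{\prime\prime}(s_i)\subseteq\nu_t^{\prime}(s_i)$ — exactly the bookkeeping step you flagged.
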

\begin{proof}
	According to Definition 12, fairness indicates the case without type 1 blocking coalition, we offer the proof of Lemma 7 by contradiction. 
	
	Under a given matching $ \nu_t $, worker $ w_j $ and task set $ \mathbb{S}^\prime\langle t\rangle $ can form a type 1 blocking coalition $ (w_j; \mathbb{S}^\prime\langle t\rangle) $, as shown by (54) and (55).
	If task $ s_i $ does not sign a long-term contract with worker $ w_j $, the payment of worker $ w_j $ during the last round can only be the cost, as given by (59) and (60).
	\begin{equation}\label{key}
		\begin{aligned}
			p_{i,j}^S\left\langle k \right\rangle = { c}_{i,j}\left\langle t^{\text{ini}}_{i,j} \right\rangle,
		\end{aligned}
	\end{equation}
	\begin{equation}\label{key}
		\begin{aligned}
			{U^{S\prime}_t}\left({ s_{i},\left\{\nu_t\left( s_{i} \right)\backslash\nu_t^{\prime}\left( s_{i} \right)\right\} \cup \left\{ w_{j} \right\} } \right) < {U^{S\prime}_t}\left( t_{i},\nu_t\left( s_{i} \right) \right).
		\end{aligned}
	\end{equation}
	If task $ s_i $ selects worker $ w_j $, we have $ p_{i,j}\left\langle k^{*} \right\rangle \geq p_{i,j}\left\langle k \right\rangle = {c}_{i,j} $ and the following (61)
	\begin{equation}\label{key}
		\begin{aligned}
			&{U^{S\prime}_t}\left(s_{i},\left\{\nu_t\left( s_{i} \right)\backslash\nu_t^{\prime}\left( s_{i} \right)\right\} \cup \left\{ w_{j} \right\} \right) \geq\\& {U^{S\prime}_t}\left( s_{i},\left\{\nu_t\left( s_{i} \right)\backslash\nu_t^{\prime\prime}\left( s_{i} \right)\right\} \cup \left\{ w_{j} \right\} \right),\\
		\end{aligned}
	\end{equation}
	where $ 
	\nu_t^{\prime\prime}\left( s_{i} \right) \subseteq \nu_t^{\prime}\left( s_{i} \right) $. From (60) and (61), we can get
	\begin{equation}\label{key}
		\begin{aligned}
			{U^{S\prime}_t}\left( {s_{i},\nu_t\left( s_{i} \right) } \right) > {U^{S\prime}_t}\left(s_{i},\left\{\nu_t\left( s_{i} \right)\backslash\nu_t^{\prime\prime}\left( s_{i} \right)\right\} \cup \left\{ w_{j} \right\} \right),
		\end{aligned}
	\end{equation}
	which is contrary to (55). Thus, our proposed ST-M2M matching ensures the property of fairness. 
\end{proof}

\noindent
\begin{lem} (Non-wastefulness of ST-M2M matching) Alg. 4 satisfies the property of non-wastefulness. \end{lem}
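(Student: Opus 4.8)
The plan is to prove Lemma 8 by contradiction, following the same logical skeleton used for Lemma 4 (non-wastefulness of FT-M2M), but re-expressed in the spot-trading notation of Alg.~4. First I would suppose, toward a contradiction, that the ST-M2M matching $\nu_t$ returned by Alg.~4 admits a type 2 blocking coalition $(w_j;\mathbb{S}^\prime\langle t\rangle)$. By Definition~11 this means worker $w_j$ strictly prefers $\mathbb{S}^\prime\langle t\rangle$ to its matched set $\nu_t(w_j)$ via (56), and simultaneously every task $s_i\in\mathbb{S}^\prime\langle t\rangle$ would strictly raise its utility by adjoining $w_j$ \emph{without evicting any currently matched worker}, via (57). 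The crucial feature of a type 2 coalition, as noted in the footnote following (21), is that $\nu_t^\prime(s_i)=\varnothing$, so the question reduces purely to whether $s_i$ could have afforded to take on $w_j$ in addition to its current set.

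Next I would analyze the terminal round of Alg.~4. Since $w_j\notin\nu_t(s_i)$ for the coalition tasks, $w_j$ must have been rejected by each such $s_i$ in the worker-selection step (line~10, Alg.~4). I would then invoke the payment dynamics of Step~4: whenever a worker is rejected while its asked payment still exceeds its cost and the associated risk constraints remain satisfiable, the payment is strictly decreased by (58). Because the algorithm has converged ($\Sigma_{w_j\in\bm{W^\prime}\langle t\rangle}\mathrm{flag}_j=0$), no further decrease is possible, so $w_j$'s final asked payment has been driven down to its cost, $p^S_{i,j}\langle k\rangle = c_{i,j}\langle t^{\text{ini}}_{i,j}\rangle$. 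Consequently the only admissible reason for $s_i$ to reject $w_j$ in the last round is that admitting $w_j$ under the 0-1 knapsack DP would violate the budget constraint (33b): the residual budget $B_i^t$ after paying the currently matched workers is insufficient to cover even $c_{i,j}\langle t^{\text{ini}}_{i,j}\rangle$.

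I would then close the contradiction. The DP in line~10 computes, under capacity $B_i^t$, the worker subset of maximum expected utility; by its optimality, if adjoining $w_j$ to $\nu_t(s_i)$ were both budget-feasible and utility-improving (as (57) asserts), the DP would have selected it, contradicting the rejection of $w_j$. Hence $w_j$ is excluded solely because of budget exhaustion, which contradicts the premise that $s_i$ can profitably recruit $w_j$ in conjunction with its existing set. Therefore no type 2 blocking coalition can exist, and by Definition~13 the matching $\nu_t$ is non-wasteful. The main obstacle is the middle step: one must argue carefully that at convergence the rejected worker's price has indeed settled at its cost and that the DP's optimality certifies rejection is equivalent to budget infeasibility; this is the only place where the specific mechanics of Alg.~4 (rather than a generic stability argument) are genuinely needed, and it is exactly where the spot-trading budget $B_i^t$ — which, unlike the futures budget $B_i$, also absorbs breach compensations — must be handled consistently.

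\begin{proof}
We argue by contradiction. Suppose the matching $\nu_t$ produced by Alg.~4 admits a type 2 blocking coalition $(w_j;\mathbb{S}^\prime\langle t\rangle)$, so that (56) and (57) hold for every $s_i\in\mathbb{S}^\prime\langle t\rangle$, with no worker evicted (i.e., $\nu_t^\prime(s_i)=\varnothing$). Since $w_j\notin\nu_t(s_i)$, worker $w_j$ was rejected by $s_i$ in the terminal round. Upon convergence of Alg.~4, no asked payment can be further reduced via (58), so $w_j$'s final payment satisfies
\begin{equation}\label{eq:nw-st}
p^S_{i,j}\langle k\rangle = c_{i,j}\langle t^{\text{ini}}_{i,j}\rangle.
\end{equation}
Because the task-side selection solves a 0-1 knapsack problem by DP under capacity $B_i^t$ (line~10, Alg.~4), its solution is utility-maximal; hence, were adjoining $w_j$ both budget-feasible under (33b) and utility-increasing as in (57), the DP would have included $w_j$. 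The only remaining explanation for rejection is that the residual budget $B_i^t$ is insufficient to admit $w_j$ even at the cost-level payment in \eqref{eq:nw-st}. This directly contradicts (57), which presumes $s_i$ can profitably recruit $w_j$ alongside its current set. Therefore no type 2 blocking coalition exists, and by Definition~13 the proposed ST-M2M matching is non-wasteful.
\end{proof}
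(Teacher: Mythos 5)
Your proof is correct and follows essentially the same route as the paper's: argue by contradiction, observe that at convergence a rejected worker's asked payment has settled at its cost, conclude that the only remaining reason for rejection is budget exhaustion, and note that this contradicts the premise of the type 2 blocking coalition. Your version is somewhat more careful than the paper's (you explicitly invoke the DP's optimality to justify the ``only reason is the budget'' step and consistently use the spot-stage budget $B_i^t$ where the paper's appendix slips into writing $B_i$), but these are refinements of the same argument rather than a different approach.
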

\begin{proof}
	We conduct the proof of Lemma 8 by contradiction. Under a given matching $ \nu_t $, worker $ w_j $ and task set $ \mathbb{S}^\prime\langle t\rangle $ form a type 2 blocking coalition $ (w_j; \mathbb{S}^\prime\langle t\rangle) $, as shown by (56) and (57). 
	
	If task $ s_i $ rejects $ w_j $, the payment of $ w_j $ during the last round can only be $ p_{i,j}^S\left\langle k \right\rangle = {c}_{i,j} $, where the only reason of the rejection between $ s_i $ and $ w_j $ is the overall payment exceeds the limited budget $ B_i $. However, the coexistence of (56) and (57) shows that task $ s_i $ has an adequate budget to recruit workers, which contradicts the aforementioned assumption. Therefore, our proposed M2M matching in the spot trading stage is non-wasteful.
\end{proof}

\begin{thm}(Strong stability of ST-M2M matching) ST-M2M matching is strongly stable. \end{thm}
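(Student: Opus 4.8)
The plan is to treat this theorem as a pure composition result. By Definition~14, strong stability of an ST-M2M matching is \emph{defined} as the simultaneous satisfaction of three properties---individual rationality, fairness, and non-wastefulness. Since each of these three has already been discharged as a separate lemma for the matching $\nu_t$ produced by Alg.~4, the entire argument collapses to assembling them, exactly mirroring the proof of Theorem~1 for FT-M2M matching.

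Concretely, I would proceed by three short recollections followed by a one-line synthesis. First, Lemma~6 certifies individual rationality: on the task side, because Alg.~4 solves a $0$--$1$ knapsack with capacity $B_i^t$, every task's total payment respects its residual budget $B_i^t$; on the worker side, each $w_j$ only proposes to tasks for which its risk constraints (33c) and (33d) hold, so no participant is ever worse off by joining. Second, Lemma~7 rules out Type~1 blocking coalitions (fairness): its contradiction argument shows that any putative coalition would force a task's realized utility $U^{S\prime}_t$ to \emph{strictly} increase after evicting some workers and admitting $w_j$, yet the price-reduction dynamic of Alg.~4 drives a rejected worker's asked payment down only to its true cost $c_{i,j}\langle t^{\text{ini}}_{i,j}\rangle$, yielding inequality (62) which contradicts (55). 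Third, Lemma~8 eliminates Type~2 blocking coalitions (non-wastefulness): such a coalition would require a task to have rejected $w_j$ solely for budget reasons while simultaneously possessing surplus budget to admit $w_j$, an outright contradiction.

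With these three facts in hand the synthesis is immediate: since the output of Alg.~4 is individually rational (Lemma~6), fair (Lemma~7), and non-wasteful (Lemma~8), Definition~14 certifies that ST-M2M matching is strongly stable. The genuine analytical effort is entirely upstream in the three supporting lemmas---particularly the two blocking-coalition exclusions---so no new difficulty arises at this stage. The only point I would take care to verify is consistency of the utility notion: because Definition~10 frames both coalition types for the spot stage in terms of the \emph{realized} utilities $U^{W\prime}_t$ and $U^{S\prime}_t$ rather than the expected utilities used in the futures stage, I would confirm that Lemmas~7 and~8 are invoked under this same realized-utility reading before drawing the conclusion.
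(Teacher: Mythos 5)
Your proposal matches the paper's own proof exactly: the paper likewise derives Theorem~4 as an immediate consequence of Lemmas~6, 7, and~8 together with Definition~14, with all substantive work residing in those lemmas. The extra recollections and the remark on using realized rather than expected utilities in the spot stage are consistent with the paper and do not change the argument.
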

\begin{proof}
	Since the matching results of Alg. 4 holds Lemma 6, Lemma 7, and Lemma 8, according to Definition 14, our proposed ST-M2M matching is strongly stable.
\end{proof}

\begin{thm}
	(Competitive equilibrium associated with resource trading between workers and tasks) The trading between workers and tasks can reach a competitive equilibrium.
\end{thm}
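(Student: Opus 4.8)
The plan is to establish the competitive equilibrium for ST-M2M matching by verifying each of the three conditions in Definition 15, closely paralleling the argument already used for Theorem 2 in the futures stage, but replacing expected quantities with their realized spot-stage counterparts and the global budget $B_i$ with the timeslot-dependent residual budget $B_i^t$.

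First I would address condition (i): for any worker $w_j$ associated with task $s_i$, we need $c_{i,j}\langle t^{\text{ini}}_{i,j}\rangle \leq p^S_{i,j}$. This follows directly from the payment update rule (58) in Alg. 4, whose max-operation uses $c_{i,j}\langle t^{\text{ini}}_{i,j}\rangle$ as a hard floor; hence an asked payment can never be driven below the service cost, and constraint (33b) is preserved throughout the matching. Any worker that ultimately signs therefore satisfies $p^S_{i,j} \geq c_{i,j}\langle t^{\text{ini}}_{i,j}\rangle$. Next, for condition (ii), I would invoke the DP-based worker-selection step (line 10, Alg. 4): each task $s_i$ solves a 0--1 knapsack problem over its candidate set $\widetilde{\mathbb{V}}(s_i)$ subject to capacity $B_i^t$, which by construction returns the worker subset maximizing the realized utility $U^{S\prime}_t(s_i,\cdot)$. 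Consequently $s_i$ always trades with the workers yielding its maximum attainable utility, exactly as required.

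Finally, condition (iii) follows from the non-wastefulness already established in Lemma 8: if $s_i$ stops recruiting, it must be that admitting any further worker would violate the knapsack capacity $B_i^t$, i.e., the residual budget is insufficient to employ an additional worker. Having verified all three conditions, Definition 15 then yields the claim that the spot-stage trading reaches a competitive equilibrium.

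The main obstacle I anticipate is ensuring that the knapsack optimality in condition (ii) is genuinely global rather than merely round-local: because workers re-propose and lower their asked payments across successive rounds, one must argue that at convergence (guaranteed by Lemma 6) the DP solution computed in the final round coincides with the globally utility-maximizing selection under the stabilized payment profile. This reduces to the same convergence-plus-optimality interplay exploited in Theorem 2, so once Lemma 6 fixes the terminal payments, the remaining verification is routine and the three conditions hold simultaneously.
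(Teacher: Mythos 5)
Your proposal is correct and follows essentially the same route as the paper's own proof: both verify the three conditions of Definition 15 by appealing to the cost-floor on asked payments (constraint (33b) and the max-operation in the update rule), the DP-based 0--1 knapsack selection for task-side utility maximization, and the non-wastefulness argument of Lemma 8 for the insufficient-residual-budget condition. Your added remarks on the payment floor in (58) and on invoking Lemma 6 to pin down the terminal payment profile are slightly more explicit than the paper's terse treatment, but they do not constitute a different argument.
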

\begin{proof}
To prove this theorem, we discuss that the three conditions introduced by Definition 15 can be held. First, we set $ c_{i,j}\langle t^{\text{ini}}_{i,j}\rangle \leq p^S_{i,j} $, indicating that the service cost will be covered by its asked payment in each round (e.g., constraint (23b)). We next demonstrate that when a task $ s_i $ recruits workers, $ s_i $ achieves maximum expected utility. This is attributed to the fact that $ s_i $ selects the workrs based on DP algorithm (e.g., line 10, Alg. 4), ensuring the attainment of the maximum utility for $ s_i $. Then, if $ s_i $ is not matched to more workers $ w_j \in \bm{W^{\prime}}\langle t\rangle $, its remaining budget will not recruit additional worker (e.g., proof of Lemma 8). 

According to Definition 15, we can verify that the considered worker-task trading in futures market can reach a competitive equilibrium.
\end{proof}

\begin{thm}
	(Weak Pareto optimality of ST-M2M matching) The proposed ST-M2M matching provides a weak Pareto optimality.
\end{thm}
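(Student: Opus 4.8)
The plan is to mirror the argument used for the weak Pareto optimality of FT-M2M matching, proceeding by contradiction and leaning on the strong-stability result already established for ST-M2M. First I would recall Definition 16: a matching $\nu_t$ is weakly Pareto optimal exactly when no feasible matching yields a strictly larger social welfare, where social welfare is $\sum_{s_i}U^{S\prime}_t(s_i,\nu_t(s_i))+\sum_{w_j}U^{W\prime}_t(w_j,\nu_t(w_j))$. Suppose, toward a contradiction, that some feasible matching $\nu_t'$ (respecting constraints (33a)--(33d) and the per-task budgets $B_i^t$) delivers strictly higher social welfare than the output $\nu_t$ of Alg.~4.

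Next I would set up the two-sided local optimality that Alg.~4 enforces. On the worker side, each $w_j$ only proposes to tasks in $\bm{S^\prime}\langle t\rangle$ that satisfy its risk constraints (33c)--(33d) and keeps the assignment maximizing $U^{W\prime}_t(w_j,\cdot)$; on the task side, each $s_i$ solves a $0$-$1$ knapsack by DP (line~10, Alg.~4) and thus holds the budget-feasible worker subset maximizing $U^{S\prime}_t(s_i,\cdot)$ over the applicants it received. The central step is to argue that any welfare-improving $\nu_t'$ must embed at least one reassignment that strictly benefits some worker $w_j$ together with an associated task set $\mathbb{S}^\prime\langle t\rangle$ without harming those tasks; that is, inequalities (54)/(56) (the worker strictly prefers $\mathbb{S}^\prime\langle t\rangle$ to $\nu_t(w_j)$) and (55)/(57) (every task in $\mathbb{S}^\prime\langle t\rangle$ strictly prefers to admit $w_j$, possibly after an eviction) hold simultaneously. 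I would then invoke the fairness and non-wastefulness lemmas underlying the strong stability of ST-M2M, which preclude both Type~1 and Type~2 blocking coalitions: the localized deviation just exhibited \emph{is} such a coalition, a contradiction. Hence no welfare-improving $\nu_t'$ exists and $\nu_t$ is weakly Pareto optimal.

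The hard part will be the \emph{localization} step, and I expect it to be the main obstacle. A strict increase in the \emph{global} social welfare could, a priori, arise only from the joint rearrangement of many workers and tasks at once, whereas a blocking coalition is an intrinsically local object (one worker plus a task set). Reconciling the social-welfare formulation of a Pareto improvement with the coalition-based formulation of stability therefore requires showing that any strictly welfare-improving global reallocation must contain a single-worker profitable deviation. I would build this decomposition on two facts: the workers' feasible task choices are mutually independent under the risk constraints (33c)--(33d), so a reallocation factorizes over workers, and each task's retained subset is knapsack-optimal given its applicant pool, so no task can strictly gain from a feasible reshuffle it was free to select. Consequently a strictly positive net change in the summed utilities cannot occur unless some individual worker--task component turns profitable, which is precisely the blocking coalition that stability forbids. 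I would check this reduction with care, since it is the point at which the proof genuinely depends on the independence of the worker subproblems and the optimality of the DP step, rather than on stability alone.
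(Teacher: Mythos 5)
Your proposal follows essentially the same route as the paper's own proof: assume a Pareto improvement exists, observe that it would have to manifest as a worker--task deviation profitable for both sides, and then invoke the strong stability of ST-M2M (absence of Type~1 and Type~2 blocking coalitions, Theorem~4) to derive a contradiction. The only difference is that you explicitly flag and attempt to justify the ``localization'' step --- reducing a global social-welfare improvement to a single-worker blocking coalition --- which the paper's proof passes over silently, so your version is, if anything, the more careful rendering of the same argument.
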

\begin{proof}
Recall the design of ST-M2M matching, worker $w_j$ evaluates whether accepting a new task $s_i \in \bm{S^{\prime}_t}\langle t\rangle$ can affect the execution of subsequent tasks in $ \nu_t(w_j)$, while ensuring the risks associated with these new tasks are within reasonable bounds. Based on these assessments, a worker accepts new tasks only if doing so can bring higher utility. For each task owner \( s_i \), if there exists a worker \( w_j \) who can offer a higher utility than the currently matched workers, \( s_i \) is more likely to form a new matching relationship, even though this may result in a blocking pair. According to Theorem 4, our proposed matching is stable and free of blocking pairs. As a result, when our proposed matching procedure concludes, no Pareto improvements are possible, making the matching weakly Pareto optimal.
\end{proof}

\section{Supplementary Performance Evaluations}
\subsection{Individual Rationality of Tasks and Workers}
\begin{figure}[htb] \centering 
	\vspace{-1.9999cm}
	\subfigtopskip=2pt
	\subfigbottomskip=10pt
	\subfigcapskip=-2.0cm
	\setlength{\abovecaptionskip}{-1.6cm}
	\subfigure[] {
		\label{fig:a} 
		\includegraphics[width=0.50\columnwidth]{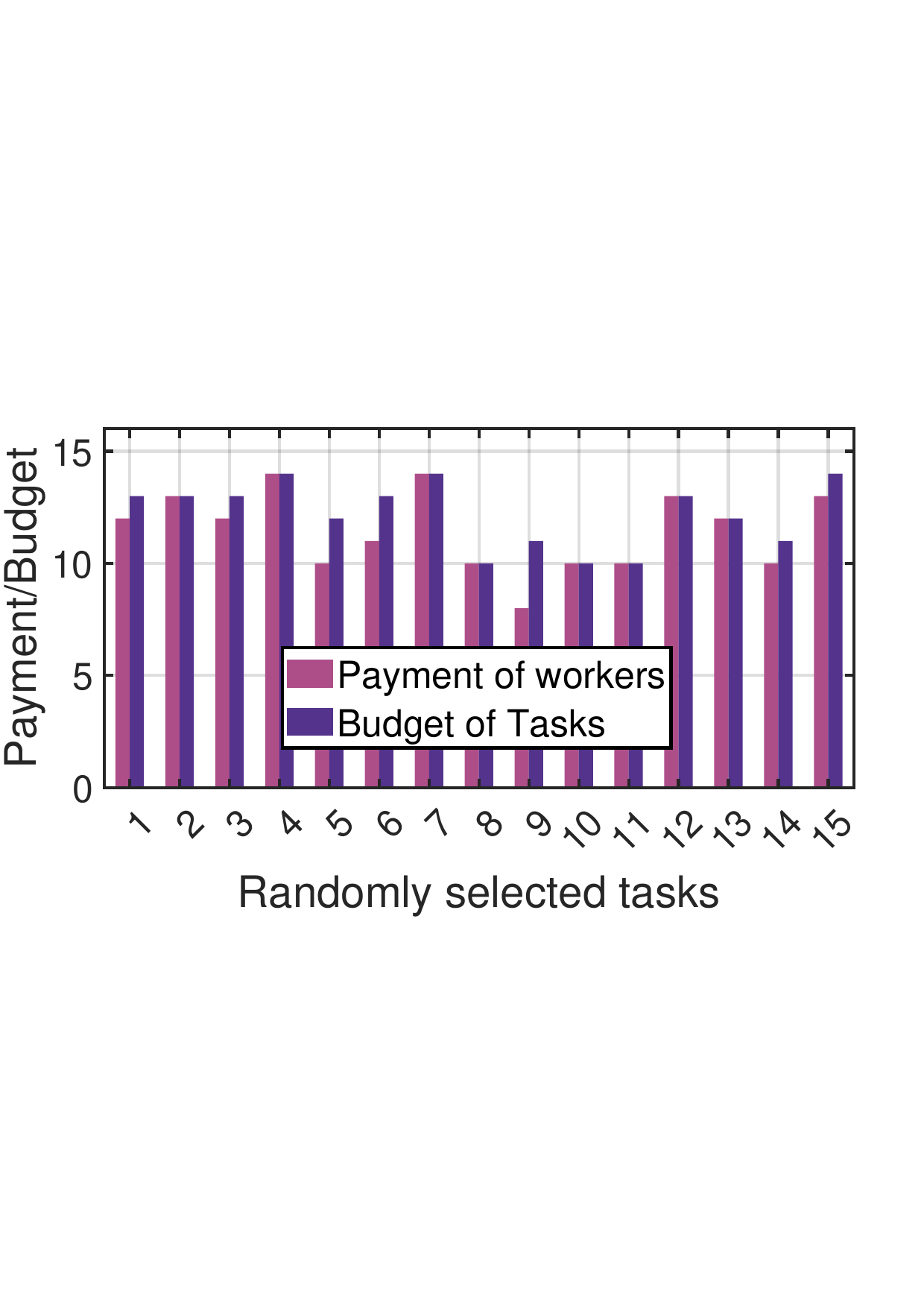} 
	} \hspace{-5.8mm}
	\subfigure[] { 
		\label{fig:b} 
		\includegraphics[width=0.50\columnwidth]{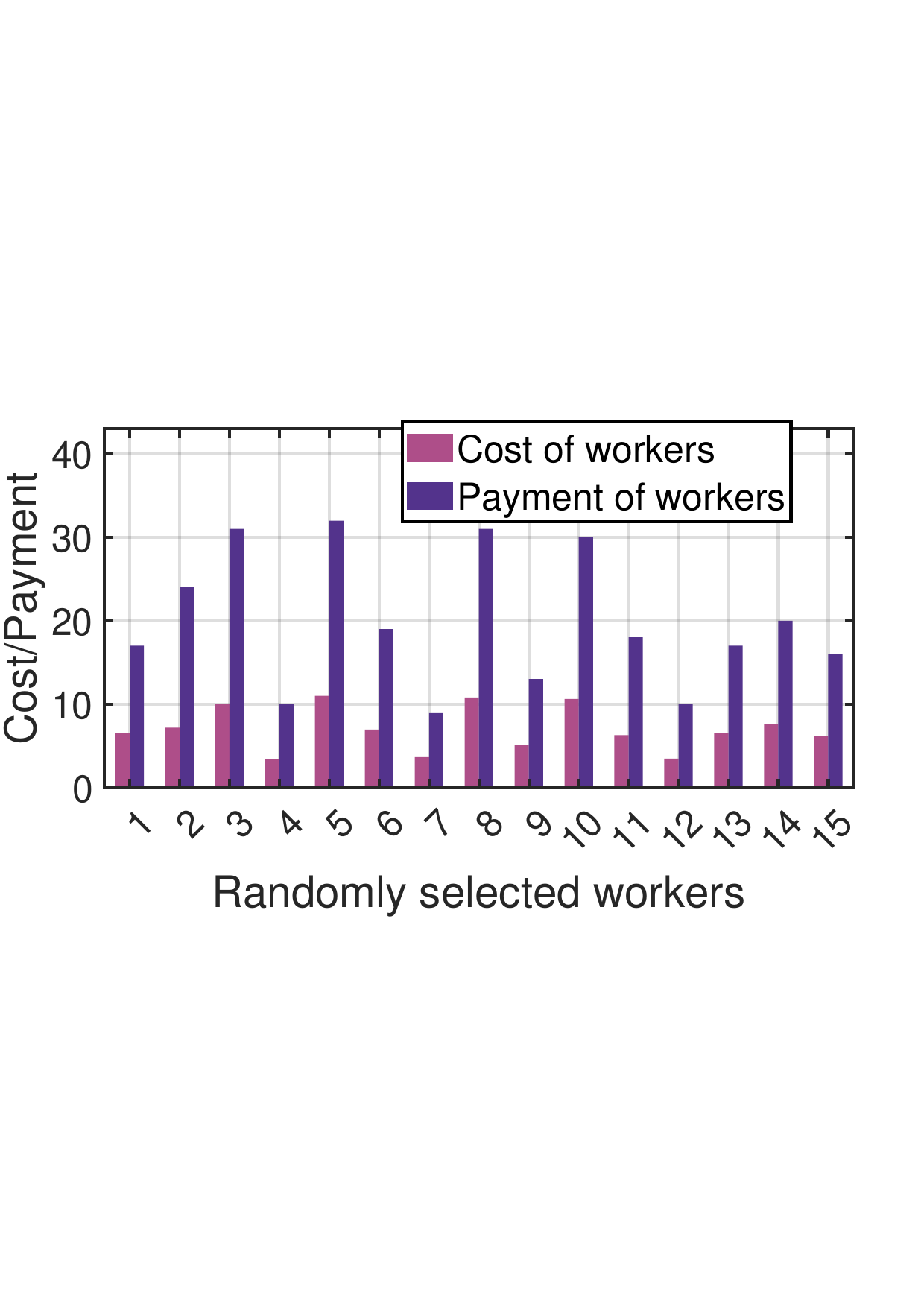} 
	}  
	\caption{Performance comparisons in terms of individual rationality.} 
	\label{fig} 
	\vspace{-0.2 cm}
\end{figure} 
To demonstrate that our StagewiseTM$^3$atch can support individual rationality, we randomly select 15 tasks (among 30 ones) and show their budgets along with the payments in Fig. 7(a). As we can see from Fig. 7(a), the total expenses of tasks that should be paid to workers consistently stay within their budgets, confirming that the set of matching mechanisms of StagewiseTM$^3$atch method uphold the individual rationality of task owners. Also, Fig. 7(b) considers 15 randomly selected workers out of 30 ones, and shows the total service cost incurred on each worker never surpasses the total payments they receive, accordingly. Highlighting the individual rationality of workers.

{\subsection{Average Service Quality, Utility}
\begin{figure*}[thb]{
	\centering
	\setlength{\abovecaptionskip}{-1 mm}
	\includegraphics[width=2\columnwidth]{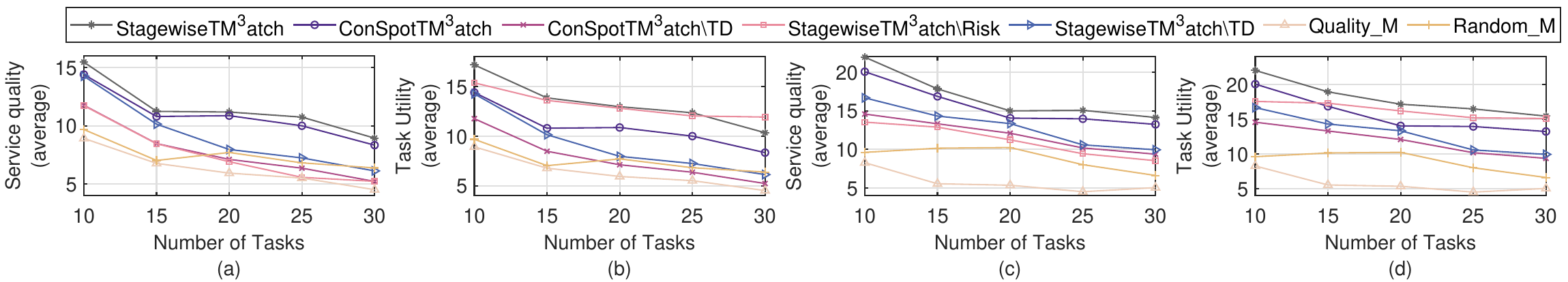}
	\caption{Performance comparison in terms of average service quality and average task utility under different problem sizes, where (a)-(b) consider 15 workers, and (c)-(d) consider 30 workers in the network.}
	\vspace{-0.5cm}}
\end{figure*}
To provide a more comprehensive evaluation of each mechanism under service resource-constrained conditions, Fig.~8 further illustrates the trends in average service quality and average task utility as the number of tasks increases. The experiments are conducted under two worker configurations: 15 workers (Figs.~8(a)-(b)) and  30 workers (Figs.~8(c)-(d)).

From Fig.~8, we observe that as the number of tasks increases, the average service quality and average task utility of methods including StagewiseTM$^3$atch, ConSpotTM$^3$atch, ConSpotTM$^3$atch\textbackslash TD, StagewiseTM$^3$atch\textbackslash Risk, and StagewiseTM$^3$atch\textbackslash TD generally exhibit a downward trend. This is due to intensified resource competition under a fixed number of workers, where the per-task available service decreases, thus degrading average performance.
Despite this, the proposed StagewiseTM$^3$atch mechanism consistently achieves the best performance in service quality, attributed to its stagewise task screening strategy and risk-aware trading design. These features enable the system to prioritize more feasible tasks under limited resources and improve completion quality through temporary worker recruitment.

Note that in Fig.~8(a), when the number of tasks ranges from 15 to 20, and in Fig.~8(c), when the number of tasks ranges from 20 to 25, the curves of StagewiseTM$^3$atch and ConSpotTM$^3$atch show slight declines and tend to stabilize. This phenomenon occurs because the number of tasks increases, some workers are able to handle multiple tasks simultaneously under permissible spatio-temporal constraints, thereby temporarily enhancing overall service efficiency and offsetting the negative impact brought by the number of tasks increases. However, as the task number continues to increase, workers' service capacity approaches saturation, leading to a further decline in service quality.
Moreover, StagewiseTM$^3$atch\textbackslash Risk does not incorporate risk analysis for tasks and workers, resulting in a significantly higher rate of worker defaults. Interestingly, due to the penalty compensation mechanism, task owners may receive higher utility despite poor service quality, thus explaining its relatively good performance in average task utility.

It is also important that, although all results are averaged over 100 Monte Carlo experiments to suppress random fluctuations, the performance curves of Quality\_M and Random\_M still exhibit noticeable volatility. This instability reflects the inherent limitations of these two mechanisms: Quality\_M relies solely on task utility for greedy worker selection, ignoring feasibility, constraints, and dynamic system evolution, leading to non-robust matches. Random\_M, in contrast, depends entirely on random assignments and is highly sensitive to initial conditions, resulting in inconsistent outcomes across trials. This instability becomes even more pronounced as the number of task continues to increase, further emphasizing the necessity of rational, capacity-aware, and risk-sensitive matching strategies.}

{\subsection{Influence of Payment Decrement $\triangle p$ on FT-M2M Performance}

\begin{table}[b!]
		{\centering
		\caption{Impact of payment decrement $\triangle p$ on FT-M2M matching}
		\vspace{0.5em}
		\begin{tabular}{|c|c|c|c|}
			\hline
			$\triangle p$ & \makecell[c]{Convergence\\ Rounds} & \makecell[c]{Expected \\tasks' utility} & \makecell[c]{Expected\\ workers' utility} \\
			\hline
			0.1    & 12       & 15.6      &  22.4 \\ \hline
			0.5    & 7       & 18.3       &  19.5 \\ \hline
			1.0    & 5       & 20.1       & 16.3  \\
			\hline
		\end{tabular}}
	\end{table}
Table~5 evaluates the impact of the payment decrement $\triangle p$ on the FT-M2M matching outcomes, from three key perspectives: the average expected utility of workers, the average expected utility of task owners, and the number of matching rounds required to reach stability. The experiment involves 20 tasks and 30 workers, with $\triangle p$ set to 0.1, 0.5, and 1.0, respectively.

As presented in Table~5, increasing $\triangle p$ results in a higher average expected utility for task owners and fewer matching rounds. This trend arises because a larger $\triangle p$ enables workers’ asked payments to reach their cost thresholds more rapidly, allowing task owners to recruit more workers at lower prices, thereby improving their overall utility.
In contrast, the average expected utility of workers declines as $\triangle p$ increases. This is due to the fact that smaller decrement rates give workers more opportunities to reach agreements at relatively higher asked payments, leading to better utility outcomes for the workers.}

\end{document}